\newtheorem{observation}{Observation}
\newtheorem{definition}{Definition}
\newtheorem{lemma}{Lemma}
\newtheorem{theorem}{Theorem}
\newtheorem{corollary}{Corollary}
\newtheorem{example}{Example}
\renewcommand{\b}{\mathbf{b}}
\newcommand{\M}{{M}}
\newcommand{\T}{\mathcal{T}}
\newcommand{\vgraph}{OSP-graph}
\newcommand{\ver}{{\cal O}_i^\T}
\newcommand{\verp}{{\cal O}_i^{\T'}}
\newcommand{\vect}[1]{\mathbf{#1} }
\renewcommand{\a}{\vect a}
\newcommand{\bi}{\b_{-i}}
\renewcommand{\xi}{\vect{x}_{-i}}
\newcommand{\x}{\vect{x}}
\renewcommand{\c}{\vect{c}}
\newcommand{\always}[1]{\text{$#1$-always}}
\newcommand{\sometime}[1]{\text{$#1$-sometime}}
\newcommand{\ti}{\vect t_{-i}}
\newcommand{\bb}[1]{\vect{b}^{(#1)}}
\newcommand{\bbb}[1]{b^{(#1)}_i}
\newcommand{\ccc}[1]{c^{(#1)}_i}
\renewcommand{\a}{\vect a}
\newcommand{\ab}[1]{\a^{(#1)}}
\newcommand{\abb}[1]{a^{(#1)}_i}
\newcommand{\LL}[1]{L^{(#1)}}
\newcommand{\out}{\mathcal S}
\newcommand{\neither}{unclear}
\newcommand{\sel}{\mathrm{sel}}
\DeclareMathOperator*{\med}{med}
\begin{document}

\title{Two-way Greedy: Algorithms for Imperfect Rationality} 

\author{Diodato Ferraioli\thanks{Universit\`a di Salerno, Italy. Email: {\tt dferraioli@unisa.it}} \and Paolo Penna\thanks{ETH Zurich, Switzerland. Email: {\tt paolo.penna@inf.ethz.ch}} \and Carmine Ventre\thanks{King's College London, UK. Email: {\tt carmine.ventre@kcl.ac.uk}}}


\date{}

\maketitle

\thispagestyle{empty}

\begin{abstract}
The realization that selfish interests need to be accounted for in the design of algorithms has produced many interesting and valuable contributions in computer science under the general umbrella of algorithmic mechanism design. Novel algorithmic properties and paradigms have been identified and studied in the literature. Our work stems from the observation that selfishness is different from rationality; agents will attempt to strategize whenever they perceive it to be convenient according to their imperfect rationality. Recent work in economics \cite{liosp}  has focused on a particular notion of imperfect rationality, namely absence of contingent reasoning skills, and defined obvious strategyproofness (OSP) as a way to deal with the selfishness of these agents. Essentially, this definition states that to care for the incentives of these agents, we need not only pay attention about the relationship between  input and  output, but also about the way the algorithm is run. However, it is not clear to date what algorithmic approaches ought to be used for OSP. In this paper, we rather surprisingly show that, for binary allocation problems, OSP is fully captured by a natural combination of two well-known and extensively studied algorithmic techniques: forward and reverse greedy. We call two-way greedy this underdeveloped algorithmic design paradigm.

Our main technical contribution establishes the connection between OSP and two-way greedy. We build upon the recently introduced cycle monotonicity technique for OSP \cite{esa19}. By means of novel structural properties of cycles and queries of OSP mechanisms, we fully characterize these mechanisms in terms of extremal implementations. These are protocols that ask each agent to consistently separate one extreme of their domain at the current history  from the rest. Through the natural connection with the greedy paradigm, we are able to import a host of known approximation bounds to OSP and strengthen the strategic properties of this family of algorithms. Finally, we begin exploring the full power of two-way greedy (and, in turns, OSP) in the context of set systems. 
\end{abstract}

\newpage

\setcounter{page}{1}

\section{Introduction}\label{sec:Introduction}
An established line of work in computer science recognizes the important role played by self interests. If ignored, these self interests can misguide the algorithm or protocol at hand and lead to suboptimal outcomes. Mechanism design has emerged as the framework of reference to deal with this selfishness. Mechanisms are protocols that interact with the selfish agents involved in the computation; the information elicited through this interaction is used to choose a certain outcome (via an algorithm). The goal of a mechanism is that of reconciling the potentially contradictory aims of agents with that of the designer (i.e., optimize a certain objective function). The agents attach a utility (typically defined as quasi-linear function of the \emph{transfers} defined by the mechanism and the agent's \emph{type} -- i.e., cost or valuation -- for the solution) to each outcome and are therefore incentivized to force the output of an outcome that maximizes their utility (rather than maximizing the objective function). The quality of a mechanism is assessed against how well it can approximate the objective function whilst giving the right incentives to the agents.

In this context, one seeks to design \emph{strategyproof (SP) mechanisms} --- these guarantee that agents will not strategize as it will in their best interest to adhere to the rules set by the mechanism --- and aims to understand what is the best possible approximation that can be computed for the setting of interest. For example, it is known how for \emph{utilitarian} problems (roughly speaking, those whose objective function is the sum of all the agents' types) it is possible to simultaneously achieve optimality and strategyproofness, whilst some non-utilitarian objective (such as, min-max) cannot be approximated too well (irrespectively of computational considerations), see, e.g., \cite{book}. These results can be  proved purely from an algorithmic perspective -- that ignores incentives and selfishness -- in that it is known how strategyproofness is equivalent to a certain \emph{monotonicity property} of the algorithm used by the mechanism to compute the outcome. This monotonicity relates the outcomes of two instances, connected by SP constraints, and limits what the algorithm can do on them. For example, if an agent is part of the solution computed on instance $I$ and becomes ``better'' (e.g., faster) in instance $I'$ then the algorithm must select the agent also in the solution returned for instance $I'$, all other things unchanged.

Recent research in mechanism design has highlighted how cognitive limitations of the agents might mean that SP is too weak a desideratum for mechanisms. Even for the simplest setting of one-item auction, there is experimental evidence that people strategize against the sealed-bid implementation of second-price auction, whilst ascending-price auction seems easier to grasp \cite{ausubel2004,kagel87}. The concept of \emph{obvious strategyproofness} (OSP) has been defined in \cite{liosp} to capture this particular form of imperfect rationality, which is shown to be equivalent to the absence of contingent reasoning skills. Intuitively, for an agent it is obvious to understand if a strategy is better than another in that the worst possible outcome for the former is better than the best one for the latter. 

\medskip
\centerline{
	\begin{minipage}[c]{0.8\columnwidth}
		\textit{Can we, similarly to SP, derive bounds on the quality of OSP mechanisms that are oblivious to strategic considerations?}
	\end{minipage}
}
\medskip

There are two obstacles to getting a fully algorithmic approach to OSP mechanisms due to their structure. Whereas SP mechanisms are pairs comprised of an algorithm and transfer (a.k.a., payment) function, in OSP we have a third component -- the so-called \emph{implementation tree} -- which encapsulates the execution details (e.g., sealed bid vs ascending price) of the mechanisms and the obviousness of the strategic constraints. (For OSP, in fact, the implementation details matter and the classical Revelation Principle does not hold \cite{liosp}.) A technique, known as cycle monotonicity (CMON), allows to express the existence of SP payments for an algorithm in terms of the weight of the cycles in a suitably defined graph. Specifically, it is known that it is sufficient to look at cycles of length two for practically all optimization problems of interest \cite{SY05} --- this yields the aforementioned property of monotone algorithms. Recent work \cite{esa19,wine19} extends CMON to OSP and allows to focus only on algorithms and implementation trees. Whilst this has allowed some progress towards settling our main question in the context of single-parameter agents, some unsatisfactory limitations are still present. Firstly, handling two interconnected objects, namely algorithm and implementation tree, simultaneously is hard to work with: e.g., novel ad-hoc techniques (dubbed CMON two-ways in \cite{esa19}) had to be developed to prove lower bounds. Secondly, the CMON extension to OSP is shown to require the study of cycles of any length, thus implying that the ``monotonicity'' of the combination algorithm/implementation tree needs to hold amongst an arbitrary number of instances, as opposed to two as in the case of SP. Thirdly, the mechanisms constructed in \cite{esa19,wine19} only work for small domains (of size up to three) since they rely on the simpler two-instance monotonicity (referred to as monotonicity henceforth). 

\paragraph*{Our contributions.} 
The technical challenge left open by previous work was to relate monotonicity to many-instance monotonicity. In this paper, we solve this challenge by providing a characterization of OSP mechanisms for binary allocation problems (for which the outcome for each agent is either to be selected or not). This enables us to show that the shape of the implementation tree is essentially fixed and answer the question above in the positive. It turns out that the exact algorithmic structure of OSP mechanisms is intimately linked with a (slight generalization of a) well know textbook paradigm:

\medskip
\centerline{
	\begin{minipage}[c]{0.8\columnwidth}
		\em OSP can be achieved if and only if the algorithm is two-way greedy.
	\end{minipage}
}
\medskip

\noindent What does it mean for an algorithm to be two-way greedy? The literature in computer science and approximation algorithms 
has extensively explored what we call \emph{forward greedy}. These are algorithms that use a (possibly adaptive) (in-)priority function and incrementally build up a solution by adding therein the agent with the highest priority, if this preserves feasibility. It is known that if the priority rule is monotone in each agent's type then this leads to a SP direct-revelation mechanism (see, e.g., \cite{LOS}). What we show here is that the strategyproofness guarantee is actually much stronger and can deal with imperfect rationality. This is achieved with a simple implementation of forward greedy that sweeps through each agent's domain from the best possible type to the worst. 
Another relevant approach known in the literature is Deferred Acceptance auctions (DAAs) or reverse greedy algorithms \cite{DGR17,GMR17}. These use a (possibly adaptive) (out-)priority function and build a feasible solution by incrementally throwing out the agents whose type is not good enough with respect the current priority (i.e., whose cost (valuation) is higher (lower) than the out-priority) until a feasible solution is found. It is already known that DAAs are OSP \cite{MS20} but not the extent to which focusing on them would be detrimental to finding out the real limitations of OSP mechanisms. Two-way greedy algorithms \emph{combine in- and out-priorities}; each agent faces either a greedy in- or  out-priority; in the former case, they are included in the solution if feasibility is preserved while in the latter they are excluded from it if the current solution is not yet feasible. The direction faced can depend on which agents have been included in or thrown out from the eventual solution at that point of the execution; in this sense, these are particular adaptive priority algorithms. For a formal definition, please see Section \ref{sec:2way} and Algorithm \ref{alg:2greedy} below.

Two-way greedy algorithms stem from our characterization of OSP mechanisms 
in terms of ``extremal implementation trees''; roughly speaking, in these mechanisms we always query each agents about \emph{(the same) extreme} of their domain at the current history.
To prove this characterization, we first give a couple of structural properties of OSP mechanisms. We specifically show (i) when a query can be made to guarantee OSP; and, (ii) how a mechanism that is monotone but not many-instance monotone looks like. We use the former property to show that, given an OSP mechanism, we can modify the structure of its implementation tree to make it extreme whilst guaranteeing that the many-instance monotonicity is preserved (i.e., the structure (ii) is not possible). We also show that extremal mechanisms are monotone and that structure (ii) can never arise, thus proving the sufficient condition of our characterization.\footnote{Interestingly, property (ii) proves why for domains of size up to three, it is possible to freely interleave queries as in \cite{esa19,wine19} since it shows that four types are necessary for the two-instance monotonicity to become insufficient (and only necessary).}  

One caveat about these extremal mechanisms and two-way greedy is necessary. This has to do with a technical exception to the rule of never interleaving top queries (asking for the maximum of the current domain) with bottom queries (asking for the minimum) to an agent. An OSP mechanism can in fact interleave those when, at the current history, an agent becomes \emph{revealable}, that is, the threshold separating winning bids from losing ones, becomes known. In other words, this is a point in which the outcome for this agent (but not necessarily the entire solution) is determined for all but one of her types. OSP mechanisms can at this point use any query ordering to find out what the type of the agent is; this does not affect the incentives of the agents. Accordingly, in a two-way greedy algorithm an agent can face changes of priority direction (e.g., from in- to out-priority) in these circumstances.

To the best of our knowledge this is the first known case of a relationship between strategic properties and an algorithmic paradigm, as opposed to a property about the solution output by the algorithm. Two possible interpretations of this connection can be given. On a conceptual level, the fact that the Revelation Principle does not hold true for OSP means that we care about the implementation details and therefore the right algorithmic nature has to be paradigmatic rather than being only about the final output. On a technical level, given that monotonicity (i.e., two-cycles) is not sufficient for OSP, the need to study many-instance monotonicity (i.e., any cycle) requires to go beyond an output property and look for the way in which the algorithm computes \textit{any} solution.

It is worth noticing that our OSP characterization is related to the Personal Clock Auctions (PCAs) in \cite{liosp}. Li proves that for binary allocation problems \emph{and continuous  domains}, each agent faces either an ascending-price auction (where there is an increasing transfer going rate to be included in the solution) or a descending-price auction (where there is a decreasing transfer going rate to be excluded from the solution). Our result complements Li's along two dimensions. Firstly, it generalizes the characterization to the case of finite domains. This is arguably more interesting for OSP, as it is notoriously harder to understand how to execute extensive-form games in the continuous case. Moreover, our proof does not rely on the existence of a unique threshold (in the case of discrete domains, there are two values that satisfy this definition, i.e., extreme winning and losing reports do not ``meet'' in the limit). Secondly, our approach and terminology are closer to computer science and algorithms and give a specific recipe to reason about design and analysis of these mechanisms. For example, it is not clear how to precisely define the transfer going rates in PCAs and how this relates to the quality of their algorithmic outcome.  


We give a host of bounds on the approximation guarantee of OSP mechanisms by relying on our characterization and the known approximation guarantees of forward greedy algorithms, cf. Table \ref{tbl:upperbounds} below. The strategic equivalence of forward and reverse greedy is one of the most far reaching consequences of our results, given {(i)} the rich literature on the approximation of forward greedy, and {(ii)} the misconception about the apparent weaknesses of accepting, rather than rejecting, auctions \cite{MS20} (see Section \ref{sec:2way} and Appendix \ref{sec:cas} for details). We expect our work to spawn further research about OSP, having fully extracted the algorithmic nature of these mechanisms. The power and limitations of OSP can now be fully explored, in the context of binary allocation problems. We present some initial bounds on the quality of these algorithms/mechanisms (see Sections \ref{sec:2way-1way} and \ref{sec:2way-apx}). Notably, we close the gap for the approximation guarantee of OSP mechanisms for the so-called knapsack auctions, studied in \cite{DGR17}. We show that the logarithmic upper bound provided by the authors is basically tight not just for reverse greedy (as shown by them) but for the whole class of two-way greedy/OSP. 

Since our main objective is that of establishing the power of OSP, in terms of algorithmic tools and their approximation, we do not primarily focus on the computational complexity of these mechanisms. Consequently, our lower bounds are unconditional. 
We discuss this aspect and more opportunities for further research in the conclusions (see Section \ref{sec:conclusions}). A discussion on more related work can be found in Appendix \ref{apx:rel}.

\section{Preliminaries and Notation}
For the design of a mechanism, we need to define a set $N$ of $n$ \emph{selfish agents} and a set of feasible \emph{outcomes} $\out$.
Each agent $i$ has a \emph{type} $t_i \in D_i$, where $D_i$ is the \emph{domain} of $i$.
The type $t_i$ is assumed to be \emph{private knowledge} of agent $i$.
We let $t_i(X) \in \mathbb{R}$ denote the \emph{cost}
of agent $i$ with type $t_i$ for the outcome $X \in \out$.
When costs are negative, it means that the agent has a {profit} from the solution, called \emph{valuation}.

A \emph{mechanism} has to select an outcome $X \in \out$.
For this reason, the mechanism interacts with agents.
Specifically, agent $i$ takes \emph{actions} (e.g., saying yes/no)
that may depend on her presumed type $b_i \in D_i$
(e.g., saying yes could ``signal'' that the presumed type has some properties that $b_i$ enjoys).
To stress this we say that agent $i$ takes \emph{actions compatible with (or according to) $b_i$}.
Note that the presumed type $b_i$ can be different from the real type $t_i$.

For a mechanism $\M$, we let $\M(\b)$ denote the outcome returned by the mechanism
when agents take actions according to their presumed types $\b = (b_1, \ldots, b_n)$  (i.e., each agent $i$ takes actions compatible with the corresponding $b_i$).
This outcome is given by a pair $(f,p)$,
where $f = f(\b)=(f_1(\b),\ldots,f_n(\b))$ (termed \emph{social choice function} or, simply, algorithm) maps the actions taken by the agents according to $\b$ to a feasible solution in $\out$,
and $p=p(\b)=(p_1(\b),\ldots,p_n(\b)) \in \mathbb{R}^n$ maps the actions taken by the agents according to $\b$ to \emph{payments}. Note that payments need not be positive. 

Each selfish agent $i$ is equipped with a \emph{quasi-linear utility function} $u_i \colon D_i \times \out \rightarrow \mathbb{R}$:
for $t_i \in D_i$ and for an outcome $X \in \out$ returned by a mechanism $\M$, $u_i(t_i, X)$ is the utility that agent $i$ has for the implementation of outcome $X$ when her type is $t_i$, i.e., 
$
u_i(t_i, \M(b_i, \bi)) = p_i(b_i, \bi) - t_i(f(b_i, \bi)).
$

In this work we will focus on \emph{single-parameter} settings, that is, the case in which
the private information of each bidder $i$ is a single real number $t_i$ and $t_i(X)$ can be expressed as $t_i w_i(X)$ for some publicly known function $w_i$.
To simplify the notation, we will write $t_i f_i(\b)$ when we want to express the cost of a single-parameter agent $i$ of type $t_i$ for the output of social choice function $f$ on input the actions corresponding to a bid vector $\b$. In particular, we will consider \emph{binary allocation problems}, where $f_i(\b) \in \{0, 1\}$, i.e., each agent either belongs to the returned solution ($f_i(\b)=1$) or not ($f_i(\b)=0$). A class of binary allocation problems of interest are \emph{set systems} $(E, {\mathcal{F}})$, where $E$ is a set of elements and $\mathcal{F} \subseteq 2^E$ is a family of feasible subsets of $E$. Each element $i \in E$ is controlled by a selfish agent, that is, the cost for including $i$ is known only to agent $i$ and is equal to some value $t_i$. The social choice function $f$ must choose a feasible subset of elements $X$ in $\mathcal{F}$ that minimizes $\sum_{i=1}^n b_i(X)$. When the $t_i$'s are non-negative (non-positive, respectively) then this objective is called social cost minimization (social welfare maximization, respectively).

\subsection{Extensive-form Mechanisms and Obvious Strategyproofness}
We now introduce the concept of implementation tree and we formally define (deterministic) obviously strategy-proof mechanisms. Our definition is built on \cite{mackenzie} rather than the original definition in~\cite{liosp}. Specifically, our notion of implementation tree is equivalent to the concept of round-table mechanisms in \cite{mackenzie}, and our definition of OSP is equivalent to the concept of SP-implementation through a round table mechanism, that is proved to be equivalent to the original definition of OSP.

Let us first formally model how a mechanism works. An \emph{extensive-form mechanism} $\M$ is a triple $(f,p,\T)$ where, as above, the pair $(f,p)$ determines the outcome of the mechanism, and $\T$ is a 
tree, called \emph{implementation tree}, such that:
\begin{itemize}[leftmargin=0.45cm, noitemsep]
	\item Every leaf $\ell$ of the tree is labeled with a possible outcome of the mechanism $(X(\ell), p(\ell))$, where $X(\ell) \in \out$ and $p(\ell) \in \mathbb{R}$;
	\item Each node $u$ in the implementation tree $\T$ defines the following:
    \begin{itemize}[leftmargin=0.6cm, noitemsep]
        \item An agent $i=i(u)$ to whom the mechanism makes some query. Each possible answer to this query leads to a different child of $u$.
        \item A subdomain $D^{(u)}=(D_i^{(u)}, D_{-i}^{(u)})$ containing all types that are \emph{compatible} with $u$, i.e., with all the answers to the queries from the root down to node $u$.
        Specifically, the query at node $u$ defines a partition of the current domain of $i$, $D_i^{(u)}$ into $k\geq 2$ subdomains, one for each of the $k$ children of node $u$. Thus, the domain of each of these children will have as the domain of $i$, the subdomain of $D_i^{(u)}$ corresponding to a different answer of $i$ at $u$, and an unchanged domain for the other agents.
\end{itemize}
\end{itemize}

Observe that, according to the definition above, for every profile $\b$ there is only one leaf $\ell = \ell(\b)$
such that $\b$ belongs to $D^{(\ell)}$.
Similarly, to each leaf $\ell$ there is at least a profile $\b$ that belongs to $D^{(\ell)}$.
For this reason, we say that $\M(\b) = (X(\ell), p(\ell))$.

Two profiles $\b$, $\b'$ are said to \emph{diverge} at a node $u$ of $\T$ if this node has two children $v, v'$ such that $\b \in D^{(v)}$, whereas $\b' \in D^{(v')}$. For every such node $u$, 
we say that 
$i(u)$ is the \emph{divergent agent} at $u$.

We are now ready to define obvious strategyproofness.
An extensive-form mechanism $\M$ is \emph{obviously strategy-proof (OSP)} if for every agent $i$ with real type $t_i$,
for every vertex $u$ such that $i = i(u)$, for every $\bi, \bi'$ (with $\bi'$ not necessarily different from $\bi$),
and for every $b_i \in D_i$, with $b_i \neq t_i$,
such that $(t_i, \bi)$ and $(b_i, \bi')$ are compatible with $u$, but diverge at $u$,
it holds that $$u_i(t_i, \M(t_i, \bi)) \geq u_i(t_i,\M(b_i, \bi')).$$
Roughly speaking, an obviously strategy-proof mechanism requires that, at each time step
agent $i$ is asked to take a decision that depends on her type, the worst utility that
she can get if she behaves according to her true type
is at least the best utility she can get by behaving differently.
{We stress that our definition does not restrict the alternative behavior to be consistent with a fixed type. Indeed, as noted above, each leaf of the tree rooted in $u$, denoted $\T_u$, corresponds to a profile $\b = (b_i, \bi')$ compatible with $u$: then, our definition implies that the utility of $i$ in the leaves where she plays truthfully is at least as much as the utility in every other leaf of $\T_u$.}


\subsection{Cycle-monotonicity Characterizes OSP Mechanisms}
We next describe the main tools in \cite{esa19} showing that  OSP can be characterized by the absence of negative-weight cycles in a suitable weighted graph over the possible strategy profiles. For ease of exposition, we will focus on non-negative costs but the results hold no matter the sign. We consider a mechanism $\M$ with implementation tree $\T$ for a social choice function $f$, and define the following concepts: 
\begin{itemize}[leftmargin=0.45cm, noitemsep]
	\item \textbf{Separating Node:} A node $u$ in the implementation tree $\T$ is  $(\a,\b)$-separating for agent $i=i(u)$ if $\a$ and $\b$ are compatible with $u$ (that is, $\a,\b\in D^{(u)}$), and the two types $a_i$ and  $b_i$ belong to two different subdomains of the children of $u$ (thus implying $a_i\neq b_i$).
	\item \textbf{OSP-graph:} For every agent $i$, we define a 
	 directed weighted graph $\ver$ having a node for each profile in $D=\times_i D_i$. The graph 
	contains 
	edge $(\a, \b)$ if and only if $\T$ has some node $u$ which is $(\a,\b)$-separating for $i=i(u)$, and the weight of this edge is $w(\a, \b)= a_i(f_i(\b)- f_i(\a))$. Throughout the paper, we will denote with $\a \rightarrow \b$ an edge $(\a,\b) \in \ver$, and with $\a \rightsquigarrow \b$ a path among these two profiles in $\ver$.
	\item \textbf{OSP Cycle Monotonicity (OSP CMON):} We say that the OSP cycle monotonicity (OSP CMON)  holds  if, for all $i$, the graph $\ver$ does not contain negative-weight cycles. Moreover, we say that the OSP two-cycle monotonicity (OSP 2CMON) holds if the same is true when considering cycles of length two only, i.e., cycles with only two edges.
\end{itemize}

\begin{theorem}[\cite{esa19}]\label{thm:cmon}
	A mechanism with implementation tree $\T$ for a social function $f$ is OSP on finite domains if and only if  OSP CMON holds.
\end{theorem}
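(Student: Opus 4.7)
The plan is to adapt the classical cycle-monotonicity characterization of strategyproofness (Rochet) to the OSP setting. The key difference is that a deviation at a separating node $u$ may be paired with any compatible off-path tail $\bi'$, not just with the tail $\bi$ used under truthful play; this broader family of deviations is precisely what populates the edge set of $\ver$, and one expects the iff to go through along the same telescoping/shortest-path lines as in Rochet.

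For necessity, I would fix agent $i$ and suppose $\ver$ admits a cycle $\a^{(0)} \to \a^{(1)} \to \cdots \to \a^{(k-1)} \to \a^{(0)}$. Each edge $\a^{(j)} \to \a^{(j+1)}$ witnesses a node $u_j$ of $\T$ that is $(\a^{(j)}, \a^{(j+1)})$-separating for $i$, and in particular both profiles are compatible with $u_j$ and diverge there. Instantiating the OSP inequality at $u_j$ with true type $t_i = a^{(j)}_i$, truthful tail $\a^{(j)}_{-i}$, deviation $b_i = a^{(j+1)}_i$, and deviation tail $\a^{(j+1)}_{-i}$ yields
\begin{equation*}
p_i(\a^{(j)}) - a^{(j)}_i f_i(\a^{(j)}) \;\geq\; p_i(\a^{(j+1)}) - a^{(j)}_i f_i(\a^{(j+1)}),
\end{equation*}
which rewrites as $p_i(\a^{(j+1)}) - p_i(\a^{(j)}) \leq w(\a^{(j)}, \a^{(j+1)})$. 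Summing over $j$ modulo $k$, the payment differences telescope to zero and the total cycle weight is non-negative, contradicting the hypothesis.

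For sufficiency, I would assume every $\ver$ has no negative cycle and construct payments by single-source shortest paths: for each agent $i$, pick a reference profile in each weakly connected component of $\ver$ and define $p_i(\b)$ as the shortest-path distance from the reference to $\b$ (well-defined exactly because no negative cycles exist). The triangle inequality along every edge $\a \to \b$ of $\ver$ gives $p_i(\b) - p_i(\a) \leq w(\a, \b) = a_i(f_i(\b) - f_i(\a))$, which rearranges to the OSP inequality demanded at the separating node witnessing the edge; since OSP only constrains behaviour at separating nodes, the mechanism $(f, p, \T)$ is OSP.

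The main obstacle is forcing $p_i$ to depend only on the leaf of $\T$ rather than on the underlying profile, because transfers in an extensive-form mechanism are leaf labels and must coincide across profiles that follow the same transcript of queries. I would handle this by noting that leaf-equivalent profiles are never separated at any node of $\T$, so $\ver$ has no intra-leaf edges; one can then pass to the leaf-level quotient graph (keeping the minimum-weight representative among parallel edges) and argue that any candidate negative cycle in the quotient lifts, via a suitable choice of witness profiles, to a negative cycle in $\ver$, so the no-negative-cycle property survives. Running the shortest-path construction on the quotient yields a leaf-indexed payment, and performing this independently for each $i$ assembles the full payment vector completing the mechanism.
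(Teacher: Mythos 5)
Your proposal is correct and is precisely the Rochet-style cycle-monotonicity argument that the cited source \cite{esa19} uses; note that the present paper only states Theorem~\ref{thm:cmon} as imported from \cite{esa19} and does not reprove it, so there is no in-paper proof to diverge from. Two small remarks: in the necessity step you should posit a \emph{negative}-weight cycle (as written you take an arbitrary cycle, derive $w(C)\geq 0$, and call it a contradiction, whereas the telescoping directly establishes that OSP forces every cycle in $\ver$ to have non-negative weight); and the leaf-consistency wrinkle you rightly flag is correctly handled by your quotient—the key facts making the lifting go through are that for leaf-equivalent $\a,\a'$ the in-edges $\c\to\a$ and $\c\to\a'$ exist simultaneously with equal weight (so the min-weight out-representatives of successive leaves in a quotient cycle can be spliced together in $\ver$), and equivalently one can avoid the quotient by computing shortest paths in $\ver$ from an auxiliary super-source with zero-weight edges to every profile, which makes the distances automatically constant on each leaf class.
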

Given the theorem above, we henceforth assume that the agents have finite domains.

\section{A  Characterization of OSP Mechanisms}
In this section we present our characterization of OSP mechanisms for binary allocation problems. 
\subsection{Setting the Scene}
It is useful to focus on the possible weights of single edges as well as on two-cycles, in the case of binary allocation problems. 
\begin{observation}[Basic Properties of the OSP-graph] \label{obs:OSP-graph}
	The weight of an edge is non-zero only for a: 
	\begin{itemize}[leftmargin=0.45cm, noitemsep]
		\item \textbf{positive-weight edge}, i.e., $f_i(\a)=0$ and $f_i(\b)=1$, in which case the weight is $w(\a,\b) = +a_i$;
		\item \textbf{negative-weight edge}, i.e., $f_i(\a)=1$ and $f_i(\b)=0$, in which case the weight is $w(\a,\b) = -a_i$. 
	\end{itemize}
	If OSP 2CMON holds, then (i) for every positive-weight edge as above, we have $b_i < a_i$ and, by symmetry, (ii) for every negative-weight edge   as above, we have $a_i < b_i$.  Note that these inequalities are strict since $a_i\neq b_i$ for every edge as above.
\end{observation}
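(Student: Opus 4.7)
The plan is to treat the two claims of the observation separately, since the first follows from unfolding the weight definition and the second follows from applying OSP 2CMON to a canonical pair of opposite edges.

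For the weight formula, I would start from the definition $w(\a,\b) = a_i(f_i(\b)-f_i(\a))$. In a binary allocation problem $f_i$ takes values in $\{0,1\}$, so $f_i(\b)-f_i(\a) \in \{-1,0,+1\}$. A three-line case analysis then yields: (a) if $f_i(\a)=f_i(\b)$ the weight vanishes; (b) if $f_i(\a)=0, f_i(\b)=1$ the weight equals $+a_i$; (c) if $f_i(\a)=1, f_i(\b)=0$ the weight equals $-a_i$. This exhausts the statement's first half.

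The second half rests on one structural remark: if a node $u \in \T$ is $(\a,\b)$-separating for $i = i(u)$, then by symmetry of the definition (both profiles compatible with $u$, with $a_i$ and $b_i$ in different child-subdomains) $u$ is also $(\b,\a)$-separating. Hence whenever the edge $\a \to \b$ lies in $\ver$, so does the reverse edge $\b \to \a$, forming a 2-cycle. In the positive-weight case $f_i(\a)=0, f_i(\b)=1$, its total weight is $w(\a,\b)+w(\b,\a) = a_i + b_i(f_i(\a)-f_i(\b)) = a_i - b_i$. OSP 2CMON forces this to be non-negative, i.e.\ $b_i \leq a_i$; combined with $a_i \neq b_i$ (which holds because the separating partition at $u$ places $a_i$ and $b_i$ in distinct cells of $D_i^{(u)}$), we conclude $b_i < a_i$. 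The negative-weight case is analogous, with the 2-cycle weight reducing to $b_i - a_i$.

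I do not anticipate any real obstacle: the observation is essentially bookkeeping once one notices the symmetry of the separating-node definition, which turns every edge in $\ver$ into the first half of a length-$2$ cycle to which 2CMON can be applied. The only subtlety worth stating explicitly in the write-up is why $a_i \neq b_i$, so that the non-strict inequality delivered by 2CMON can be upgraded to the strict one claimed.
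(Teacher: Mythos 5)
Your proposal is correct and matches the argument the paper leaves implicit (the result is stated as an \emph{observation} without an explicit proof). Unfolding $w(\a,\b)=a_i(f_i(\b)-f_i(\a))$ with $f_i\in\{0,1\}$ gives the three cases, and the symmetry of the separating-node definition is exactly what guarantees that $\a\to\b\in\ver$ implies $\b\to\a\in\ver$, so that OSP 2CMON applied to the resulting two-cycle yields $a_i-b_i\ge 0$ (resp. $b_i-a_i\ge 0$), with strictness coming from the definitional fact that $a_i\ne b_i$ at any separating node.
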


The following notation will also be used throughout our proofs. 
\begin{definition}[\always{0}, \always{1}, \neither]\label{def:typeclas}
	For $\sel  \in \{0,1\}$, indicating whether $i$ is selected, and for a generic type $t_i \in D_i^{(u)}$, for a node $u$ of the implementation tree, we define the following:
	\begin{itemize}[leftmargin=0.45cm, noitemsep]
		\item Type $t_i$ is $\always{\sel}$  if  $f_i(\vect t)=\sel$ 
		for all $\ti \in D_{-i}^{(u)}$; 
		\item Type $t_i$ is $\sometime{\sel}$ if $f_i(\vect t)=\sel$ for some $\ti \in D_{-i}^{(u)}$.
	\end{itemize} 
	Moreover, type $t_i$ is $\neither$ if it is neither \always{0} nor \always{1}, that is, it is both \sometime{0} and \sometime{1}.
\end{definition}

We next give three structural properties of OSP mechanisms that will be useful. 

\paragraph{Structure of the Implementation Tree.} We first observe that, without loss of generality, we can always assume that each node $u$ in the implementation tree has two children, and thus the mechanism partitions $D_i^{(u)}$ into two subdomains $L^{(u)}$ and $R^{(u)}$.

\begin{observation}\label{obs:binary}
	For any OSP mechanism $\M=(f, p, \T)$ where $\T$ is not a binary tree, there is an OSP mechanism $\M'=(f, p, \T')$ where $\T'$ is a binary tree.
\end{observation}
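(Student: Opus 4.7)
The plan is to argue constructively: I will replace each non-binary node of $\T$ with a chain of binary queries made by the same agent, producing a binary tree $\T'$ that implements exactly the same social choice function and payments, and then invoke Theorem \ref{thm:cmon} to transfer OSP from $\M$ to $\M'$ by showing that the OSP-graph $\ver$ does not change.

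First I would describe the transformation. Consider any node $u$ of $\T$ with $k \geq 3$ children, corresponding to a partition of $D_i^{(u)}$ into subdomains $L_1,\ldots,L_k$. I replace $u$ by a chain of $k-1$ fresh binary nodes $u_1, u_2, \ldots, u_{k-1}$, all labeled with the same agent $i = i(u)$. At node $u_j$ (for $j = 1,\ldots, k-2$), the mechanism asks agent $i$ whether her type lies in $L_j$ or in $L_j \cup \cdots \cup L_k$'s complement within the current active domain; the ``yes'' branch is a leaf of the chain that feeds into the original subtree rooted at the $j$-th child of $u$, while the ``no'' branch leads to $u_{j+1}$. The final node $u_{k-1}$ simply asks to distinguish $L_{k-1}$ from $L_k$. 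Doing this recursively (or, equivalently, at every non-binary node of $\T$) yields a binary tree $\T'$. I keep the leaves and their outcomes $(X(\ell), p(\ell))$ exactly as in $\T$, so $(f,p)$ is literally the same function on profiles: each profile $\b$ still reaches the unique leaf of $\T$ that was compatible with it.

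Next I would argue OSP. By Theorem \ref{thm:cmon}, it suffices to prove that the OSP-graph $\ver$ built from $\T'$ coincides with the one built from $\T$, for every agent $i$. The edges are determined by the existence of a separating node, and the weights are functions of $\a, \b$ alone, so it is enough to show that a pair $(\a,\b)$ is separated at some node of $\T$ if and only if it is separated at some node of $\T'$. For the ``only if'' direction, if $\a,\b$ are $(\a,\b)$-separating at a node $u$ of $\T$ with children partitioning $D_i^{(u)}$ into $L_1,\ldots,L_k$, and $a_i \in L_j$, $b_i \in L_{j'}$ with $j < j'$, then both profiles remain compatible with $u_1, \ldots, u_j$ of the chain (they lie in the complement of $L_1 \cup \ldots \cup L_{j-1}$) and diverge exactly at $u_j$, which asks whether the type is in $L_j$. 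For the ``if'' direction, any node $u_j$ of the chain inside $\T'$ inherits the compatible domain of $u$ restricted on $i$'s coordinate to a union of some $L_\ell$'s, and its two children further partition this; any pair separated at $u_j$ is therefore also separated at $u$ in $\T$. Non-chain nodes of $\T'$ are nodes of $\T$ verbatim, so they contribute the same separations.

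The main obstacle is this last bookkeeping: ensuring that compatibility with non-$i$ coordinates is the same at $u_j$ in $\T'$ as at $u$ in $\T$ (it is, because every binary node in the chain leaves $D_{-i}$ untouched, and the ancestors of $u_1$ in $\T'$ are precisely the ancestors of $u$ in $\T$), and that every leaf of $\T'$ is indeed reachable by some profile (which holds because each chain branch leads to a non-empty child subdomain of $u$, inheriting the original reachability). Once these are verified, the sets of separating nodes induce identical OSP-graphs, OSP CMON for $\T$ implies OSP CMON for $\T'$, and Theorem \ref{thm:cmon} then yields that $\M' = (f, p, \T')$ is OSP.
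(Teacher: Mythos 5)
Your proof is correct and follows essentially the same route as the paper: you replace each non-binary node with a chain of binary queries by the same agent, peeling off the parts $L_1, L_2, \dots$ one at a time, keep all leaves and $(f,p)$ intact, and observe that the set of $(\a,\b)$-separating nodes — hence the OSP-graph — is unchanged. The paper's proof is more terse (it dispenses with the explicit OSP-graph comparison by simply noting the OSP constraints are unaffected "by inspection"), but the construction and the reason it works are the same as yours.
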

We refer the interested reader to Appendix~\ref{apx:binary} for a proof of Observation~\ref{obs:binary}.

We stress  that, in general, the two parts $L^{(u)}$ and $R^{(u)}$ can be \emph{any} partition of the subdomain $D_i^{(u)}$, and they are not necessarily ordered. For example, if $D_i^{(u)} = \{1,2,5,6,7\}$, a query ``is your type even?'' results in $L^{(u)} = \{1,5,7\}$ and $R^{(u)}=\{2,6\}$,  with these two subdomains being ``incomparable''. However, we will see that in an OSP mechanism these sets $L^{(u)}$ and $R^{(u)}$ share a special structure.

\paragraph{Structure of Admissible Queries.} We begin with a simple observation about the structure of the parts implied by a simple application of OSP 2CMON.

\begin{observation}\label{obs:2CMON}
	If OSP 2CMON holds and  at every node $u$ where the subdomain is separated into two parts $L$ and $R$ the following holds. Every $\sometime{0}$ element $l$ in one side (say $L$) implies that  all elements in the other side that are bigger ($r\in R$ with $r>l$) must be \always{0}. Similarly, every $\sometime{1}$ element $l$ in one side (say $L$) implies that  all elements in the other side that are smaller ($r\in R$ with $r<l$) must be \always{1}.
\end{observation}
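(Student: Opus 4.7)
The plan is a direct application of Observation \ref{obs:OSP-graph} to carefully chosen profiles. I would argue both implications by contradiction, the two cases being symmetric.

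For the sometime-$0$ claim, suppose $l \in L$ is $\sometime{0}$ and $r \in R$ with $r > l$ is not $\always{0}$. Then by definition there exist $\ti, \ti' \in D_{-i}^{(u)}$ such that $f_i(l,\ti) = 0$ and $f_i(r,\ti') = 1$. Set $\a = (l,\ti)$ and $\b = (r,\ti')$; both profiles lie in $D^{(u)}$, and since $a_i = l \in L$ and $b_i = r \in R$ belong to the subdomains of two different children of $u$, the node $u$ is $(\a,\b)$-separating for $i = i(u)$. Hence $\ver$ contains the edge $\a \to \b$, and because $f_i(\a)=0$ and $f_i(\b)=1$, this is a positive-weight edge. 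By Observation \ref{obs:OSP-graph}, OSP 2CMON forces $b_i < a_i$, i.e.\ $r < l$, contradicting $r > l$.

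The sometime-$1$ claim is handled symmetrically: if $l \in L$ is $\sometime{1}$ and some $r \in R$ with $r < l$ fails to be $\always{1}$, pick witnesses $\ti, \ti'$ with $f_i(l,\ti)=1$ and $f_i(r,\ti')=0$, and form the same pair $\a = (l,\ti)$, $\b = (r,\ti')$. Again $u$ is $(\a,\b)$-separating, but now the edge $\a \to \b$ is negative, so OSP 2CMON requires $a_i < b_i$, i.e.\ $l < r$, again a contradiction. The ``say $L$'' clause in the statement just indicates that the roles of $L$ and $R$ are interchangeable, and indeed the construction above does not use anything specific about which side is $L$.

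There is essentially no obstacle here; the only thing to be careful about is choosing the direction of the edge (from the $\sometime{}$ side to the other side) so that the weight is exactly $a_i$ and Observation \ref{obs:OSP-graph} applies in the desired direction. Everything else is unpacking the definitions of $\sometime{\sel}$ and separating node.
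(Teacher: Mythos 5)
Your proof is correct and is exactly the argument the paper intends: the observation is stated without proof precisely because it is a direct instantiation of Observation~\ref{obs:OSP-graph} on a separating node, which is what you do. One small point you get right that is worth being conscious of: for the $b_i < a_i$ (resp.\ $a_i < b_i$) conclusion to follow from 2CMON, the reverse edge $\b \to \a$ must also exist so that the 2-cycle is present in $\ver$; this holds because the definition of $(\a,\b)$-separating is symmetric in $\a$ and $\b$, so the same node $u$ also witnesses the edge $\b \to \a$.
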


Next lemma characterizes the admissible queries in OSP mechanisms.

\begin{lemma}[Admissible Queries]\label{le:local}
	Let $\M$ be an OSP mechanism with implementation tree $\T$ and let $u$ be a node of $\T$ where the query separates the current subdomain into two parts $L$ and $R$. Then one of the following conditions must hold: 
	\begin{enumerate}
		\item \label{cond:homogeneous} At least one of the two parts, some $P \in \{L,R\}$, is homogeneous meaning that
\begin{align*}
		P = \{\underbrace{p_1<p_2<\cdots< p_{|P|}}_{\always{1}}\} && \text{or} && P = \{\underbrace{p_1<p_2<\cdots< p_{|P|}}_{\always{0}}\}
		\end{align*}
		\item \label{cond:revealable} Agent $i$ is  revealable at node $u$, meaning that the subdomain $D_i^{(u)}$ has the following structure:
		\begin{align}\label{eq:revealable}
		D_i^{(u)} = \{\underbrace{d_1<d_2<\cdots<d_a}_{\always{1}} < d_* < \underbrace{d_1' < d_2' < \cdots d_b'}_{\always{0}}\}
		\end{align}
		where each subset of \always{1} and \always{0} types may be empty, and $d_*$ may or may not be \always{1} or \always{0}. Moreover, the two parts that are separated must have the following structure:
		\begin{align}\label{eq:admi-query-L}
		L = \{\underbrace{l_1  < l_2<\cdots < l_a}_{\always{1}} < l_*< \underbrace{l_1'<l_2'\cdots < l_b'}_{\always{0}} \} \ ,\\
		\label{eq:admi-query-R}
		R = \{\underbrace{r_1  <r_2< \cdots < r_c}_{\always{1}} < r_* < \underbrace{r_1'<r_2'\cdots < r_d'}_{\always{0}}\} \ ,
		\end{align}
		where at most one between $l_*$ and $r_*$ is $\neither$ (neither \always{0} nor \always{1}).
	\end{enumerate}
\end{lemma}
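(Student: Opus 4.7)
My plan is to argue by contrapositive: suppose both $L$ and $R$ are non-homogeneous and deduce the ``revealable'' structure together with the constraints on $L$, $R$, $l_*$, $r_*$. I would first unpack non-homogeneity: a part $P$ is homogeneous iff all of its elements are \always{0} or all are \always{1}, so non-homogeneity forces $P$ to contain simultaneously a \sometime{0} element and a \sometime{1} element (possibly coinciding at a \neither element). Hence both $L$ and $R$ exhibit each kind of label.

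I would then apply Observation~\ref{obs:2CMON} to extract cross-side constraints: every \sometime{1} element $l \in L$ lies strictly below every \sometime{0} element $r \in R$ (otherwise $r$ would be forced to be \always{1}, contradicting \sometime{0}); and symmetrically with $L$ and $R$ swapped. As an immediate corollary, at most one of $l_*, r_*$ can be \neither: if both were, assuming w.l.o.g.\ $l_* < r_*$, the \sometime{0} witness of $l_*$ paired with the \sometime{1} witness of $r_*$ yields a 2-cycle in $\ver$ separated at $u$ of weight $l_* - r_* < 0$, contradicting Theorem~\ref{thm:cmon}.

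For the within-side structure $\{\always{1}\} < l_* < \{\always{0}\}$ inside $L$ (symmetrically for $R$), I would rule out pairs $l_1 < l_2$ in $L$ of the ``wrong'' shape, where $l_1$ is \sometime{0} and $l_2$ is \sometime{1}. The 2CMON argument happens at the descendant node $v$ of $u$ where the mechanism finally separates $l_1$ from $l_2$, using a \emph{common} witness $\ti$ on both endpoints. The key point is that the same $\ti$ makes $(l_1,\ti)$ and $(l_2,\ti)$ traverse identical paths from $u$ until $i$ is queried at $v$, so $v$ is a separating node compatible with both profiles, and the weight $l_1 - l_2 < 0$ of the resulting 2-cycle contradicts 2CMON. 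A valid common $\ti$ is immediate when at most one of $l_1, l_2$ is \neither, since the other being \always{0} or \always{1} pins down the required $f_i$-value for every $\ti \in D_{-i}^{(v)}$. Collecting these within-side orderings with the cross-side ones yields equations~(\ref{eq:revealable}), (\ref{eq:admi-query-L}), and (\ref{eq:admi-query-R}).

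The main obstacle is the subcase where $l_1, l_2 \in L$ are \emph{both} \neither: here a common witness $\ti$ simultaneously realizing $f_i(l_1,\ti)=0$ and $f_i(l_2,\ti)=1$ is not directly guaranteed, because the \sometime{0}-witness of $l_1$ and the \sometime{1}-witness of $l_2$ at $u$ may differ. I would handle this by exploiting the already-proven cross-side constraints, which sharply restrict how $R$ sits relative to the interval $(l_1, l_2)$, either to recover a common witness of the desired form or to construct a longer negative-weight cycle through a profile in $R$, invoking OSP CMON in its full generality via Theorem~\ref{thm:cmon}.
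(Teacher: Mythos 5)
Your setup matches the paper's: argue by contrapositive, extract the cross-side ordering from Observation~\ref{obs:2CMON}, and rule out both $l_*$ and $r_*$ being \neither\ via a length-two cycle of weight $l_*-r_*<0$. Your within-side argument via a 2-cycle at the deeper divergence node $v$ is also valid when at least one of $l_1<l_2$ is \always{0} or \always{1}: with the same $\ti$ on both profiles, $(l_1,\ti)$ and $(l_2,\ti)$ share a path from $u$ until $i$ is queried, so $v$ is separating and the 2-cycle has weight $l_1-l_2<0$, contradicting OSP 2CMON.

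The step you flag as ``the main obstacle,'' however, is the crux of the lemma and your proposal leaves it unresolved; moreover one of your two fallbacks cannot work. The very 2CMON-at-$v$ reasoning you use in the easy cases already forces $f_i(l_1,\ti)\geq f_i(l_2,\ti)$ for \emph{every} $\ti\in D_{-i}^{(u)}$ (otherwise that 2-cycle is negative), so a common witness with $f_i(l_1,\ti)=0$ and $f_i(l_2,\ti)=1$ provably never exists---``recovering a common witness'' is a dead end, not an alternative. The only route is the longer cycle you gesture at but do not build. The paper constructs it: take profiles $\vect r^0,\vect r^1$ with $r^0_i,r^1_i\in R$, $f_i(\vect r^0)=0$, $f_i(\vect r^1)=1$ (these exist because $R$ is non-homogeneous), and profiles $\vect l_*,\vect l_*'$ with $f_i(\vect l_*)=0$, $f_i(\vect l_*')=1$ (these exist because both types are \neither). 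Since $u$ separates every $L$-profile from every $R$-profile, all four edges of $\vect r^1\rightarrow\vect l_*'\rightarrow\vect r^0\rightarrow\vect l_*\rightarrow\vect r^1$ lie in $\ver$, and its weight is $0-l_*'+0+l_*=l_*-l_*'<0$, contradicting OSP CMON. Note this single 4-cycle argument in fact kills \emph{every} pair $l_1<l_2$ in $L$ with $l_1$ \sometime{0} and $l_2$ \sometime{1} uniformly, so your separate 2-cycle treatment of the non-both-\neither\ cases, while correct, is redundant once the 4-cycle is in place.
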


\begin{proof}
	We show that, if neither $L$ nor $R$ is homogeneous, then they must satisfy \eqref{eq:admi-query-L} and \eqref{eq:admi-query-R}. Using OSP 2CMON (Observation~\ref{obs:2CMON}), we can show that $L$ must have the following structure (the symmetric argument applies to $R$):
	\begin{align*}
	L = \{\underbrace{l_1  < l_2<\cdots < l_a}_{\always{1}} < \underbrace{l_1'  < l_2'<\cdots < l_a'}_{\sometime{1}}   < \underbrace{l_1''  < l_2''<\cdots < l_a''}_{\sometime{0}} < \underbrace{l_1'''<l_2'''\cdots < l_b'''}_{\always{0}} \} \ ,
	\end{align*}
	where each of these four subsets may be empty. Since $R$ is not homogeneous, there are two profiles $\vect r^0,\vect r^1$ with $f_i(\vect r^1)=1$ and $f_i(\vect r^0)=0$. We next show that $L$ must contain at most one \neither\ type $l_*$, hence \eqref{eq:admi-query-L} holds. 
	By contradiction, suppose there are two \neither\ types  $l_*$ and $l'_*$ with $l_*<l_*'$. By definition, these types are neither \always{0} nor \always{1}. Therefore, there are two profiles satisfying  $f_i(\vect l_*)=0$ and $f_i(\vect l_*')=1$. Since all edges between $L$ and $R$ are present in the graph $\ver$, the latter contains the cycle $\vect r^1 \rightarrow \vect l_*' \rightarrow \vect r^0 \rightarrow \vect l_*\rightarrow \vect r^1$ whose weight is $-l_*+l_*'<0$. This contradicts OSP CMON and thus the hypothesis that the mechanism was OSP.
	
	We next observe that OSP 2CMON implies that there cannot be
	$l_*\in L$ and $r_*\in R$ such that both of them are \neither. For $l_* < r_*$,  we consider  two profiles such that $f_i(\vect l_*)=0$ and $f_i(\vect r_*)=1$ (these profiles exist since these types are neither \always{0} nor \always{1}). The cycle between them has weight $l_* - r_*<0$, contradicting OSP 2CMON. For $l_* > r_*$, we use two different profiles for which $f_i(\vect l_*')=1$ and $f_i(\vect r_*')=0$ (which again exist by the hypothesis on $r_*$ and $l_*$).
	
	Finally, we observe that \eqref{eq:admi-query-L}-\eqref{eq:admi-query-R} and OSP 2CMON (Observation~\ref{obs:2CMON}) imply \eqref{eq:revealable}. 
\end{proof}

\paragraph{Structure of Negative-Weight Cycles.} 
A crucial step is to provide  a simple \emph{local-to-global} characterization of OSP mechanisms which (essentially) involves only four type profiles.
The following theorem states that, if there is a negative-weight cycle with \emph{more than two} edges but no length-two negative cycle  (OSP CMON is violated but OSP 2CMON holds), then there exists a cycle with a rather \emph{special structure}, and this special structure is fully specified by only \emph{four} profiles (the cycle itself may involve several profiles though). 
\begin{theorem}[Four-Profile Characterization]\label{thm:anatomy}
	Let $\M$ be a mechanism with implementation tree $\T$ and social choice function $f$ that is  OSP 2CMON but not OSP CMON. 
	Then, every negative-weight cycle $C$ in some OSP graph $\ver$ is of the following form:
	\begin{equation}\label{eq:cycle-four}
	C = \bb 2 \rightarrow \bb 1 \rightsquigarrow  \bb 3 \rightarrow \bb 4 \rightsquigarrow \bb 2 
	\end{equation}
	where these four profiles satisfy  $\bbb 1 < \bbb 2
	< \bbb 3 < \bbb 4$,
	  $f_i(\bb 1) = f_i(\bb 3) = 1$, and  $f_i(\bb 2) = f_i(\bb 4) = 0$. 
Moreover, there is no edge between $\bb 2$ and $\bb 3$ in $\ver$.
\end{theorem}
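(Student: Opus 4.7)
The plan is to identify, within any negative-weight cycle $C$ in $\ver$, a specific positive-weight edge to serve as $\bb 2 \to \bb 1$ and a specific negative-weight edge to serve as $\bb 3 \to \bb 4$, and then to derive both the inequality chain $\bbb 1 < \bbb 2 < \bbb 3 < \bbb 4$ and the absence of an edge between $\bb 2$ and $\bb 3$ from OSP 2CMON plus one short pigeonhole argument on the cycle weight.

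First I would use Observation~\ref{obs:OSP-graph} to classify the edges of $C$ by weight: each non-zero-weight edge is either positive ($f_i$ goes $0\to 1$, weight $+a_i$) or negative ($f_i$ goes $1\to 0$, weight $-a_i$), while zero-weight edges preserve $f_i$. Since $f_i$ returns to its starting value around $C$, the cycle contains the same number $k$ of positive and negative edges, and $k \ge 1$ because $W(C) < 0$. Denote by $\alpha_1,\dots,\alpha_k$ the source types of the positive edges of $C$ and by $\gamma_1,\dots,\gamma_k$ the source types of the negative ones. Then
\[
W(C) \;=\; \sum_{l} \alpha_l \;-\; \sum_{j} \gamma_j \;<\; 0,
\]
so $\sum_j \gamma_j > \sum_l \alpha_l$ and, by averaging, $\max_j \gamma_j > \min_l \alpha_l$.

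Next I would take $\bb 3\to\bb 4$ to be a negative edge attaining $\bbb 3 = \max_j \gamma_j$ and $\bb 2\to\bb 1$ to be a positive edge attaining $\bbb 2 = \min_l \alpha_l$. By construction $f_i(\bb 1)=f_i(\bb 3)=1$ and $f_i(\bb 2)=f_i(\bb 4)=0$. Applying OSP 2CMON (Observation~\ref{obs:OSP-graph}) to each of the two chosen edges yields $\bbb 1 < \bbb 2$ and $\bbb 3 < \bbb 4$, while the pigeonhole bound above provides $\bbb 2 < \bbb 3$; together these give $\bbb 1 < \bbb 2 < \bbb 3 < \bbb 4$, which in particular makes the four profiles distinct. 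Re-rooting the closed walk $C$ at $\bb 2$ and following it, the two chosen edges split the remainder of $C$ into sub-paths $\bb 1\rightsquigarrow\bb 3$ and $\bb 4\rightsquigarrow\bb 2$, yielding exactly the form~\eqref{eq:cycle-four}.

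The final step is to rule out any edge between $\bb 2$ and $\bb 3$ in $\ver$. If an edge $\bb 2\to\bb 3$ existed it would be positive (because $f_i(\bb 2)=0$, $f_i(\bb 3)=1$), and OSP 2CMON would force $\bbb 3 < \bbb 2$; symmetrically, an edge $\bb 3\to\bb 2$ would be negative and would again force $\bbb 3 < \bbb 2$. Either case contradicts the already established $\bbb 2 < \bbb 3$, completing the proof. I expect the only non-routine step to be the pigeonhole passage from the global negativity of $W(C)$ to the single local comparison $\max_j \gamma_j > \min_l \alpha_l$; once that is in hand, the structural claim for the four profiles is a short sequence of applications of OSP 2CMON to the two extremal edges.
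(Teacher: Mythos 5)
Your proposal is correct and follows essentially the same approach as the paper's proof: both identify a positive-weight edge (serving as $\bb 2 \to \bb 1$) and a negative-weight edge (serving as $\bb 3 \to \bb 4$) with $\bbb 2 < \bbb 3$, derived from the fact that positive and negative edges occur in equal number and their contributions sum to a negative total, then use OSP 2CMON (Observation~\ref{obs:OSP-graph}) to obtain the outer inequalities and to rule out an edge between $\bb 2$ and $\bb 3$. The only cosmetic difference is that you choose the extremal such pair ($\min$ source among positive edges, $\max$ source among negative edges) whereas the paper merely asserts that some suitable pair exists; both reduce to the same averaging/pigeonhole observation.
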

  \begin{proof}
  	The key observation is that every negative-weight cycle $C$ must contain both a positive-weight edge $(\c^{(pos1)},\c^{(pos2)})$ and a negative-weight edge $(\c^{(neg1)},\c^{(neg2)})$ that satisfy $\ccc{pos1} < \ccc{neg1}$. This is because the number of positive-weight edges equals the number of negative-weight edges in any cycle, and the weight of these two edges is $\ccc{pos1}$ and $-\ccc{neg1}$, respectively (Observation~\ref{obs:OSP-graph}).  
  	Hence, cycle $C$ is of the form
  	\begin{align}\label{eq:cycle-four-proof}
  	\c^{(pos1)} \rightarrow \c^{(pos2)} \rightsquigarrow  \c^{(neg1)} \rightarrow \c^{(neg2)} \rightsquigarrow \c^{(pos1)}\ ,
  	\end{align}
  	where Observation~\ref{obs:OSP-graph} implies that $f_i(\c^{(pos1)}) =  0 = f_i(\c^{(neg2)})$ and $f_i(\c^{(pos2)}) =  1 = f_i(\c^{(neg1)})$, and also
  	\begin{align*}
  	\ccc {pos2} < \ccc {pos1} < \ccc {neg1} < \ccc {neg2} \ ,
  	\end{align*}
  	where the first and the last inequalities follow from  Observation~\ref{obs:OSP-graph} together with OSP 2CMON. This implies the first part of the theorem by matching the profiles in \eqref{eq:cycle-four} with those in  \eqref{eq:cycle-four-proof}.
  	
  	We finally observe that OSP 2CMON implies that there is no edge between $\b^{(2)} = \c^{(pos1)}$ and $\b^{(3)}=\c^{(neg1)}$ because otherwise the corresponding 2-cycle between these profiles would have negative weight: Indeed, 
  	$w(\c^{(pos1)}\rightarrow \c^{(neg1)})+  w(\c^{(neg1)} \rightarrow \c^{(pos1)}) = \ccc{pos1} - \ccc{neg1}<0$ since $f_i(\c^{(pos1)}) =  0$ and $f_i(\c^{(neg1)})=1$. This completes the proof.
\end{proof}
The characterization of OSP CMON above will enable us to provide a simple local transformation of the queries of an OSP mechanisms where we use only top or bottom queries.
 
\begin{definition}[Top and Bottom Queries]
Let $i=i(u)$ for a node $u$ of the implementation tree. If the query at $u$ partitions $D_i^{(u)}$ into $\{\min D_i^{(u)}\}$ and $D_i^{(u)} \setminus \{\min D_i^{(u)}\}$ then we call the query at $u$ a bottom query. A top query at $u$, instead, separates the maximum of $D_i^{(u)}$ from the rest.
\end{definition}	 

 \subsection{OSP is Equivalent to Weak Interleaving}
In this section, we show that without loss of generality, we can focus on OSP mechanisms where each agent is asked only top queries or only bottom queries, except when her type becomes revealable (Condition \ref{cond:revealable} in Lemma~\ref{le:local}). In that sense, these mechanisms interleave top and bottom queries for an agent only in a ``weak'' form.
Specifically, let us begin by providing the following definition.
 \begin{definition}[Extremal, No Interleaving, Weak Interleaving]\label{def:extr-and-interleaving}
 	A mechanism is extremal if every query is a bottom query or a top query (both types of queries may be used for the same agent). 
 	
 	An extremal mechanism makes no interleaving if each agent is consistently asked only at each history top queries or only bottom queries (some agents may be asked top queries only, and other agents bottom queries only). 
 	
 	A weak interleaving mechanism satisfies the condition that, if top queries and bottom queries are interleaved for some agent $i$ at some node $u$ in the implementation tree, then agent $i$ is revealable at $u$ in the sense of Condition~\ref{cond:revealable} in Lemma~\ref{le:local}.
 \end{definition}
 Then, in this section we will prove the following theorem:
 \begin{theorem}
  For each binary outcome problem, an OSP mechanism exists if and only if an extremal mechanism with weak interleaving exists.
 \end{theorem}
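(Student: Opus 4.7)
Proof Plan: The claim has two directions. The reverse direction (an extremal weak-interleaving mechanism exists $\Rightarrow$ an OSP mechanism exists) holds because, as I would verify, the extremal weak-interleaving mechanism constructed also satisfies OSP CMON. I would proceed via Theorem~\ref{thm:anatomy}: in an extremal weak-interleaving mechanism, any two profiles $\a, \b$ differing in the $i$-coordinate that share a history up to a node where agent $i$ is queried must eventually be separated by a top or bottom query (such queries peel off the current extreme one at a time), so the edge $(\a, \b)$ is present in $\ver$. Hence the "no edge between $\bb 2$ and $\bb 3$" conclusion of Theorem~\ref{thm:anatomy} is impossible, ruling out negative cycles of length greater than two, while OSP 2CMON is verified directly for consistent top/bottom queries using Observation~\ref{obs:OSP-graph}.

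For the forward direction, I would start with an OSP mechanism $\M = (f, p, \T)$ and surgically rewrite $\T$ top-down into an extremal weak-interleaving tree $\T'$ realizing the same $f$. At each node $u$ with query partitioning $D_i^{(u)}$ into $L$ and $R$, Lemma~\ref{le:local} yields two cases. If agent $i$ is revealable at $u$ (Condition~\ref{cond:revealable}), the subdomain has at most one \neither\ element between always-1 and always-0 blocks, and I rewrite the subtree at $u$ using any sequence of top/bottom queries; the weak-interleaving definition explicitly permits interleaving here. If one side, say $L$, is homogeneous (Condition~\ref{cond:homogeneous}), say all \always{1}, then Observation~\ref{obs:2CMON} forces $\min D_i^{(u)} \in L$, and I replace the query by a chain of bottom queries peeling off $L$'s elements one by one (the symmetric always-0 case uses top queries). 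The original subtree below the $L$-child is replicated (restricted to each singleton $\{l_j\}$) under the corresponding new leaves, which is sound because $L$'s elements all share the same $f_i$ value.

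The critical step is verifying that OSP is preserved. The new OSP-graph retains every edge of the original graph with its original weight (the separating node for each originally-separated pair simply moves to a different level), and introduces new edges only between profiles whose $i$-coordinates lie inside a common homogeneous block; by Observation~\ref{obs:OSP-graph}, such edges have weight zero. Therefore any cycle in the new graph has total weight equal to that of a closed walk in the original graph, which is non-negative by OSP CMON of $\M$, and Theorem~\ref{thm:cmon} certifies OSP of the transformed mechanism. The payments $p'$ are inherited leaf-to-leaf through the profile-preserving identification of leaves in $\T'$ with leaves in $\T$.

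The main obstacle is globally coordinating the direction of the peels (top vs.\ bottom) so that weak interleaving holds along each agent's history. An agent whose history has been processed with bottom queries could reach a node whose homogeneous side is \always{0}, ordinarily calling for a switch to top queries. I would resolve this by showing that whenever such a switch would be triggered, the cumulative structural information accumulated along the path, combined with Observation~\ref{obs:2CMON} and the constraints imposed by earlier homogeneous queries, forces the agent to be revealable at that node; the switch is then legitimated by the weak interleaving clause, and the inductive construction goes through.
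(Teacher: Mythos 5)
Your plan gets the decomposition right (two directions, Lemma~\ref{le:local} to classify queries, Theorem~\ref{thm:anatomy} for the anatomy of bad cycles, a local rewrite of the tree), but there are two technical claims on which each direction hinges, and both are false as stated.

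\textbf{Sufficiency direction.} You claim that in an extremal weak-interleaving mechanism, any two profiles $\a, \b$ differing in the $i$-coordinate that share a history up to a node where $i$ is queried ``must eventually be separated by a top or bottom query,'' and therefore the edge $(\a,\b)$ is present in $\ver$. This is not true. An edge $(\a,\b)$ requires a node $u$ at which \emph{both} profiles are compatible and at which $a_i$ and $b_i$ land in different parts. If at the first $i$-query node the extremal query peels off some type other than $a_i$ or $b_i$, both go down the same branch; on that branch the $-i$ components can diverge at a node for another agent before $a_i$ and $b_i$ are ever distinguished, and then no $(\a,\b)$-separating node for $i$ exists. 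In particular, the profiles $\bb 2$ and $\bb 3$ in Theorem~\ref{thm:anatomy} have unrelated $-i$ parts, and the ``no edge between $\bb 2$ and $\bb 3$'' conclusion is exactly the scenario you are dismissing. The actual argument (Theorem~\ref{thm:suffcmon}) does not try to show the edge exists; it instead shows that extremality forces the separating node for the pair $(\bb 1,\bb 2)$ to coincide with that for $(\bb 1,\bb 3)$, that the singleton side there is $\{\bbb 1\}$, that the analogous node for $(\bb 3,\bb 4)$ sits in ancestor relation with it, and that this forces a top/bottom interleaving for $i$ — which by weak interleaving forces revealability, whose structure~\eqref{eq:revealable} is incompatible with $\bbb 2 < \bbb 3$, $f_i(\bb 2)=0$, $f_i(\bb 3)=1$.

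\textbf{Necessity direction.} You claim the new OSP-graph ``introduces new edges only between profiles whose $i$-coordinates lie inside a common homogeneous block; by Observation~\ref{obs:OSP-graph}, such edges have weight zero,'' so cycle weights are preserved. This is false. Take the homogeneous case with $R$ all $\always{0}$ and $Q = R \cup L^{(\geq r_{\min})}$: the top queries peeling off $Q$ separate a type in $L^{(\geq r_{\min})}$ (which is $\always{0}$) from a type in $L^{(< r_{\min})}$ (which can be $\sometime{1}$ or $\always{1}$). Both were in $L$ originally, so they may never have been distinguished at an $i$-node in $\T$, yet they are now separated — and the resulting edge has nonzero weight ($\pm\abb{1}$). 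Your cycle-weight-preservation argument therefore does not go through; the paper's bypass argument (replacing every new edge $\ab 1\to\ab 2$ by a path $\ab 1 \to \bb{bp} \to \ab 2$ in the original graph of no larger weight, cf.\ Table~\ref{tbl:tinL}) is needed precisely to handle these edges. A related minor slip: when $L$ is homogeneous $\always{1}$, Observation~\ref{obs:2CMON} does not force $\min D_i^{(u)} \in L$; the minimum can lie in $R$ (it is then forced to be $\always{1}$, but still on the other side), so you cannot ``peel off $L$'s elements one by one'' with bottom queries — you have to peel the right set $Q$ that interleaves both sides, as the paper does.

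Your third paragraph correctly anticipates the need to argue that any forced direction switch coincides with revealability, which is essentially the second theorem in the necessity chain, but the two gaps above mean the proof as sketched does not establish either direction.
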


 \subsubsection{The Necessary Condition}
 We prove that the following transformations are always possible:
 \[
 OSP~mechanism \Rightarrow OSP~extremal~mechanism \Rightarrow OSP~weak~ interleaving  ~mechanism.
 \] 
  Intuitively, given the structure of admissible queries in Lemma~\ref{le:local}, we show below that we can locally replace every query with a homogeneous part (Condition \ref{cond:homogeneous}) by an ``homogeneous'' sequence of only top queries or only bottom queries. Moreover, these queries can also be used when the agent becomes revealable (Condition~\ref{cond:revealable}). To this aim, we use the four-profile characterization of negative-weight cycles in Theorem~\ref{thm:anatomy}.

 \begin{theorem}\label{th:ops-extremal}
 	Any OSP mechanism $\M=(f, p, \T)$ can be transformed into an equivalent extremal OSP mechanism $\M'=(f, p, \T')$.
 \end{theorem}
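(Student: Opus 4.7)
The plan is to induct on the number of non-extremal internal nodes of $\T$. At the inductive step I would pick one such node $u$, let $i=i(u)$, and locally rewrite $u$ into a chain of extremal (top or bottom) queries that jointly realize the original partition $D_i^{(u)}=L\cup R$; each peeled type is routed into a singleton restriction of $\T_L$ or $\T_R$, and the leftover unpeeled block is routed wholesale into the corresponding subtree. By Lemma~\ref{le:local} exactly one of the two structural cases applies at $u$, and in each case the direction of peeling is dictated by the structure: in Condition~\ref{cond:homogeneous}, if the homogeneous side is \always{0} I use a chain of top queries, while if it is \always{1} I use bottom queries; in Condition~\ref{cond:revealable}, the subdomain has a \always{1} prefix and a \always{0} suffix separated by at most one \neither\ element, so I would bottom-peel the prefix and top-peel the suffix, handling the middle element with a dedicated extremal query. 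This produces a weak-interleaving subtree in the sense of Definition~\ref{def:extr-and-interleaving}.

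Termination of the peeling on the non-homogeneous side is ensured by Observation~\ref{obs:2CMON}: when $L$ is homogeneous \always{0}, every element of $R$ strictly larger than $\min L$ is also \always{0}, so the top queries peel only \always{0} types until the current subdomain is entirely contained in $R_{\mathrm{small}}:=\{r\in R : r<\min L\}\subseteq R$, which is then routed as a block into $\T_R$; the \always{1} case is symmetric. In the singleton restrictions of $\T_L$ and $\T_R$ that result from peeling, queries to $i$ become vacuous and can be removed, so the replacement is fully extremal. Since each rewrite strictly reduces the count of non-extremal internal nodes, iterating terminates in an extremal tree $\T'$ with the same $(f,p)$.

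The main obstacle will be showing that each local rewrite preserves OSP; since only queries to $i$ are altered, it suffices to rule out negative cycles in the OSP-graph for agent $i$ in the modified tree. My first step would be to verify OSP 2CMON: every newly separated pair $(\vect a,\vect b)$ either has both endpoints in the homogeneous peeled region $L\cup R_{\mathrm{big}}$ (where $R_{\mathrm{big}}:=R\setminus R_{\mathrm{small}}$), so $f_i(\vect a)=f_i(\vect b)=0$ and the edge has weight zero, or it has one endpoint in the peeled region with $f_i=0$ and the other in $R_{\mathrm{small}}$ with strictly smaller $i$-coordinate, whence the two-cycle has strictly positive weight $a_i-b_i$. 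To rule out longer negative cycles I would invoke Theorem~\ref{thm:anatomy}: any residual negative cycle must have the four-profile form $\bbb 1<\bbb 2<\bbb 3<\bbb 4$ with $f_i(\bb 1)=f_i(\bb 3)=1$, $f_i(\bb 2)=f_i(\bb 4)=0$, and no edge between $\bb 2$ and $\bb 3$. The hardest step is arguing no such cycle can be new in the modified tree: because the peeled region is \always{0}, neither $\bb 1$ nor $\bb 3$ can have its $i$-coordinate in it, and a case analysis on the positions of $\bbb 2,\bbb 3$ relative to $\min L$ together with a ``splicing'' argument that replaces new extremal edges by inherited separating paths through $\T_R$ yields a cycle of no greater weight already in $\ver$, contradicting OSP of $\T$. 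The revealable case is treated by the same template, using the additional fact from Lemma~\ref{le:local} that at most one of the middle elements is \neither.
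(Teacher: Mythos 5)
Your proposal follows essentially the same route as the paper: use Lemma~\ref{le:local} to split into the homogeneous and revealable cases, locally replace the non-extremal query at $u$ by a chain of top/bottom queries routed into copies of $\T_L$ and $\T_R$, verify OSP 2CMON directly for the new edges, and then invoke Theorem~\ref{thm:anatomy} to reduce any residual negative cycle in $\verp$ to one in $\ver$ via a bypass (your ``splicing'') argument. The key observation you identify --- that $\bbb 1, \bbb 3$ cannot lie in the all-\always{0} peeled region --- is exactly what drives the paper's choice of bypass nodes.

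One small misstep worth flagging: the claimed induction measure does not work as stated. Each local rewrite produces several copies of $\T_L$ and $\T_R$ (one per peeled type plus a block copy), so the literal count of non-extremal internal nodes can go \emph{up}, not strictly down. Termination does hold --- e.g., because the first non-extremal node along every root-to-leaf path gets pushed strictly downward and any fully extremal tree has depth at most $\sum_j |D_j|$ --- but you should not claim that the raw node count is a decreasing potential. Also be a bit more careful with ``since only queries to $i$ are altered, it suffices to rule out negative cycles in the OSP-graph for agent $i$'': the copies do change the tree seen by other agents, and one still needs the well-aligned-transformation observation (that the copies' $i$-subdomains partition the original $L^{(u)}, R^{(u)}$) to conclude that the OSP constraints for $j \neq i$ are not worsened.
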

 \begin{proof}
Since the mechanism $\M$ is OSP, then OSP CMON must hold (Theorem~\ref{thm:cmon}). Let $u$ be a node of $\T$ where $\M$ is not extreme, and let $i=i(u)$ be the corresponding divergent agent at $u$. 

We locally modify $\T$ in order to make a suitable sequence of bottom queries and top queries about a certain subset $Q$ of types, before we make any further query in the two subtrees of $u$. This local modification does not affect OSP CMON of agents different from $i$ and it preserves OSP 2CMON for all agents (see Appendix~\ref{apx:local_tr} for details on this step).
The proof uses the following main steps and key observations:
\begin{enumerate}[leftmargin=0.45cm, noitemsep]
	\item If OSP CMON is no longer true for $\T'$, then there exists a negative-weight cycle $C'$ which was not present in $\ver$ and therefore must use some added edges $\ab 1 \rightarrow \ab 2$ that were not present in $\ver$ and that have been added to $\verp$ because of the new queries for types in $Q$;
	\item The negative-weight cycle $C'$ must be of the form specified by Theorem~\ref{thm:anatomy}; in particular, the edge $\bb 2 \rightarrow \bb 3$ does not exist in $\verp$ (which helps to determine properties of the four profiles characterizing $C'$);
	\item We use the following \emph{bypass argument} to conclude that in the original graph there is a negative-weight cycle, thus contradicting OSP CMON of the original mechanism. Specifically, for every added edge $\ab 1 \rightarrow \ab 2$, the original graph $\ver$ contains a bypass path $\ab 1 \rightarrow \bb {bp} \rightarrow \ab 2$ such that $w(\ab 1 \rightarrow \bb{bp} \rightarrow \ab 2) \leq  w(\ab 1 \rightarrow \ab 2)$. By replacing every added edge in $C'$ with the corresponding bypass path, we get a cycle  $C$ with negative weight, $w(C)\leq w(C')<0$.
\end{enumerate}
In the remainder of the proof, we distinguish the two possible cases on the structure of $L$ and $R$ at $u$ in $\T$ using the admissible queries characterization of Lemma~\ref{le:local}:

\begin{description}[noitemsep]
\item[Homogeneous case~(Condition~\ref{cond:homogeneous} in Lemma~\ref{le:local}).] Suppose $R$ is the homogeneous part. We distinguish the two sub-cases.
   \begin{itemize}[leftmargin=0.45cm, noitemsep]
    \item \textbf{All $r \in R$ are $\always{0}$.} 
   Let $r_{\min}$ be the minimum type in $R$ and partition $L$ into $L^{(<r_{\min})}:=\{l \in L : l < r_{\min}\}$ and $L^{(\geq r_{\min})}=\{l \in L : l \geq  r_{\min}\}$. By  OSP 2CMON (Observation~\ref{obs:2CMON}),  all types in $Q=R \cup L^{(\geq r_{\min})}$ are $\always{0}$. 
    We perform a sequence of top queries on $Q = R \cup L^{(\geq r_{\min})}$ from the largest to the smallest type. 
   
   In this case we have $\bbb 2 \in L$ since there is no edge between $\bb{2}$ and $\bb{3}$ and $f_i(\bb 3)=1$ implies $\bbb{3}\in L^{(<r_{\min})}$. 
   The bypassing argument  is as follows: 
   \[\bb{bp} = \begin{cases}
   \bb 2 & \text{if } \abb 1, \abb2 \in R \\
   \vect r_{\min}  & \text{if } \abb 1, \abb2 \in L
   \end{cases}
   \]
   where $\vect r_{\min}=(r_{\min}, \vect{r}_{-i})$, for an $\vect{r}_{-i} \in D_{-i}^{(u)}$. Note that the bypass path $\ab 1 \rightarrow \bb{bp} \rightarrow \ab 2$ indeed exists in all cases. 
    Table~\ref{tbl:tinL} shows that the weight of the bypass path is at most the weight of the corresponding edge in all possible cases (note that at least one between $\abb 1$ and $\abb{2}$ must be in  $Q = R \cup L^{(\geq r_{\min})}$ in order for $\ab1 \rightarrow \ab2$ to be a newly added edge). 
    
    \begin{table}[htb]
    	\centering
    	\begin{tabular}{c|c|c|c|c|c}
    		$\abb 1$ & $\abb 2$ & $f_i(\ab 1)$ & $f_i(\ab 2)$ & $w(\ab 1 \rightarrow \ab 2)$ & $w(\ab 1 \rightarrow \bb{bp}  \rightarrow \ab 2)$ \\ 
    		\hline $R$ & $R$ & 0 & 0 & 0 & 0\\
    		\hline $L^{(\geq r_{\min})}$ & $L^{(< r_{\min})}$ &   0   & 1   & $+\abb 1$ &  $r_{\min}$ ~($r_{\min} \leq  \abb 1$)\\
    		\hline $L^{(< r_{\min})}$ & $L^{(\geq  r_{\min})}$ &  1  & 0  & $-\abb 1$ & $-\abb 1$\\
    	\end{tabular}
    	\caption{Bypassing path argument for the homogeneous case with all $r \in R$ that are $\always{0}$.}
    	\label{tbl:tinL}
    \end{table}

    \item \textbf{All $r \in R$ are $\always{1}$.} 
    This can be proved similarly to the previous subcase, by doing  bottom queries on the set $Q= R \cup L^{(< r_{\min})}$, and by observing that OSP 2CMON implies that all types in $Q= R \cup L^{(< r_{\min})}$ are \always{1}.

    \end{itemize}

    \item[Revealable case~(Condition~\ref{cond:revealable} in Lemma~\ref{le:local}).]
    Suppose w.l.o.g. that $L$ contains only \always{1} and \always{0} types (otherwise, we consider $R$ in  place of $L$). We consider $Q=L=L^{(0)} \cup L^{(1)}$,  where $L^{(0)}$ are the $\always{0}$ types of $L$, and $L^{(1)}$ are the $\always{1}$ types. In this case, we do top queries on  $L^{(0)}$ and bottom queries on $L^{(1)}$. The bypassing argument  is as follows: 
    \[\bb{bp} = \begin{cases}
    	\bb 2 & \text{if } \abb{2} \in \LL{0} \\
    	\bb 3 & \text{if } \abb{2} \in \LL{1}
    \end{cases}
    \]
    so that $f_i(\bb{bp}) = f_i(\ab 2)$ and thus $w(\ab{1} \rightarrow \bb{bp} \rightarrow \ab{2})= w(\ab{1} \rightarrow \bb{bp} ) = w(\ab{1} \rightarrow \ab{2} )$ as desired. Note that all edges $\ab{1} \rightarrow \bb{bp} \rightarrow \ab{2}$ exist in the original graph $\ver$ since $\abb{1},\abb{2}\in Q=L$ and $\bbb{2},\bbb{3}\in R$, for otherwise $\bbb{2},\bbb{3} \in L$ and thus edge $\bb 2 \rightarrow \bb 3$ would exist in $\verp$, contradicting the conditions in Theorem~\ref{thm:anatomy}. \qedhere
\end{description}
  \end{proof}

We are now ready to show that one can think of weak interleaving OSP mechanisms without loss of generality.

\begin{theorem}
	Let $\M$ be an OSP  extremal mechanism  with implementation tree $\T$. For any node $u \in \T$, if agent $i=i(u)$  is not revealable at node $u$, then $\M$ has no interleaving for agent $i$ at $u$.
\end{theorem}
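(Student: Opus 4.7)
The plan is to argue the contrapositive: if agent $i=i(u)$ is interleaved on the path to $u$ (both top and bottom queries for $i$ appear among the ancestors of $u$, possibly including $u$ itself) while $i$ is not revealable at $u$, then I build a negative-weight four-edge cycle in $\ver$, contradicting OSP CMON (Theorem~\ref{thm:cmon}).

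First I analyze each query to $i$ on the path. Since $\M$ is extremal, every such query at a node $z$ separates $\{\min\}$ or $\{\max\}$ of $D_i^{(z)}$ from the rest, and (when a further $i$-query lies ahead on the path to $u$) the path must enter the non-singleton branch, for otherwise $i$'s type would already be revealed at $z$. Combining Lemma~\ref{le:local} with Observation~\ref{obs:2CMON}, at $z$ one of three possibilities must hold: \emph{(i)} the separated extreme carries the ``wrong'' label (the $\min$ is $\always{1}$ at $z$, or the $\max$ is $\always{0}$ at $z$); \emph{(ii)} the non-singleton side is homogeneous, i.e., entirely $\always{1}$ or entirely $\always{0}$ at $z$; or \emph{(iii)} $i$ is already revealable at $z$. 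Because the $\always{0}/\always{1}$ labels and the revealable structure are inherited by descendants (as $D_{-i}$ only shrinks), cases \emph{(ii)} and \emph{(iii)} would already force $D_i^{(u)}$ to be revealable, contradicting our hypothesis. Hence every query on the path -- including the one at $u$ -- is of type~\emph{(i)}.

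Next I translate non-revealability at $u$ into two ``uncertain'' types. Under OSP 2CMON the restriction of $f_i(\cdot,\y)$ to $D_i^{(u)}$ is a threshold rule for every $\y\in D_{-i}^{(u)}$, so $D_i^{(u)}$ arranges in increasing order as an $\always{1}$ block, a middle $\neither$ band, and an $\always{0}$ block; revealability amounts to the middle band having at most one element. Non-revealability therefore yields distinct $s<t$ in $D_i^{(u)}$ both $\neither$ at $u$, together with witnesses $\y\in D_{-i}^{(u)}$ giving $f_i(s,\y)=0$ and $\z\in D_{-i}^{(u)}$ giving $f_i(t,\z)=1$. Interleaving provides a bottom-query node $w$ and a top-query node $v$ for $i$ on the path; set $m:=\min D_i^{(w)}$ and $M:=\max D_i^{(v)}$. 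By case~\emph{(i)}, $m$ is $\always{1}$ at $w$ and $M$ is $\always{0}$ at $v$, so $f_i(m,\y)=1$ and $f_i(M,\z)=0$; moreover $m<s<t<M$ since $s,t\in D_i^{(u)}$ lie strictly between the discarded $m$ and $M$.

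Finally I exhibit the cycle $(s,\y)\to(m,\y)\to(t,\z)\to(M,\z)\to(s,\y)$ in $\ver$. The first two edges are supported by the separating node $w$ (which separates $\{m\}$ from the rest of $D_i^{(w)}$, a set containing $s$ and $t$), and the last two by $v$ (which separates $\{M\}$ from the rest of $D_i^{(v)}$, also containing $s$ and $t$); compatibility of all four profiles with both $w$ and $v$ follows from $\y,\z\in D_{-i}^{(u)}\subseteq D_{-i}^{(w)}\cap D_{-i}^{(v)}$, together with $\{m,s,t\}\subseteq D_i^{(w)}$ and $\{s,t,M\}\subseteq D_i^{(v)}$. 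By Observation~\ref{obs:OSP-graph} the four edge weights are $+s$, $0$, $-t$, $0$, summing to $s-t<0$ -- a negative cycle violating OSP CMON via Theorem~\ref{thm:cmon}. The main subtlety will be the case analysis in the first step: verifying via Observation~\ref{obs:2CMON} that every ``side is homogeneous'' sub-case of Lemma~\ref{le:local} either matches case~\emph{(i)} (when the homogeneous side is the singleton with the ``wrong'' label) or collapses into case~\emph{(ii)} or~\emph{(iii)} (and hence makes $D_i^{(u)}$ revealable), so that only case~\emph{(i)} survives on the path to $u$ when $i$ is not revealable there.
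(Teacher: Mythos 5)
Your proof is correct and takes essentially the same approach as the paper: both arguments use OSP 2CMON to show the extreme removed by a bottom (resp.\ top) query must be $\always{1}$ (resp.\ $\always{0}$), extract two $\neither$ types in $D_i^{(u)}$ from non-revealability, and exhibit the same four-profile negative-weight cycle of total weight $s-t<0$ (the paper writes it as $\underline{\x}\to\vect t'\to\overline{\x}\to\vect t\to\underline{\x}$). The only cosmetic differences are the profile labeling, the reuse of the two witnesses $\y,\z$ (the paper allows four distinct $D_{-i}$ coordinates), and the level of detail you give in the case analysis justifying the labels on the removed extremes.
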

\begin{proof}
	By contradiction, assume that $\M$ interleaves the queries of agent $i$ at $u$, and that $i$ is not revealable at $u$. Since there is interleaving then there is a $v \in \T$ in the path from the root of $\T$ to $u$ such that 
	the mechanism makes a bottom query about 
	the minimum $\underline{x}$ of her current domain at $v$, and a top query about her current maximum $\overline x$ of her current domain at $u$, respectively. (The proof is the same when top and bottom are swapped.)

	The proof uses the same argument in the proof of Lemma~\ref{le:local}.
	Using OSP 2CMON, we can show that the bottom query type $\underline{x}$ must be $\always{1}$ at $v$, and the top query type $\overline{x}$ must be $\always{0}$ at $u$.  Since $i$ is not revealable at $u$,  subdomain $D_i^{(u)}$ contains \emph{two} distinct types $t$ and $t'$ that are neither $\always{0}$ nor $\always{1}$ (thus, they are both different from $\underline{x}$ and $\overline{x}$). Assume without loss of generality that $t <  t'$, and observe that there are profiles satisfying
	\begin{align} \label{eq:not-revealable-four-profiles}
	f_i(\underline{\x})=1 && f_i(\vect t')=1 &&  f_i(\overline \x)=0 && f_i(\vect t)=0, 
	\end{align}
	where $\underline{\x}_{-i} \in D_{-i}^{(v)}$, $\overline{\x}_{-i} \in D_{-i}^{(u)}$, and suitable $\vect t'_{-i}, \vect t_{-i}\in D_{-i}^{(u)}$ exist since $t$ ($t'$) is \sometime{0} (\sometime{1}) at $u$. By \eqref{eq:not-revealable-four-profiles} we get that the cycle $C:=\underline{\x} \rightarrow \vect t' \rightarrow \overline{\x} \rightarrow \vect t \rightarrow \underline{\x}$ has weight  $w(\underline{\x} \rightarrow \vect t') + w(\vect t' \rightarrow \overline{\x}) + w( \overline{\x} \rightarrow \vect t) + w(\vect t \rightarrow \underline{\x}) = 0 - t' +0+t <0$, thus contradicting OSP CMON. 
	Note that the edges of $C$ indeed belong to 
	$\ver$, since $\underline{\vect x}$ and $\overline{\vect x}$ are separated from $\vect t$ and $\vect t'$ at $v$ and $u$, respectively. This completes the proof.
\end{proof}

\subsubsection{The Sufficient Condition}
\begin{theorem}\label{thm:suffcmon}
An extremal mechanism with weak interleaving is OSP.
\end{theorem}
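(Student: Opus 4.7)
The overall strategy is to apply Theorem~\ref{thm:cmon} and show OSP CMON holds for any extremal weak-interleaving mechanism $\M = (f, p, \T)$: every OSP-graph $\ver$ must be free of negative-weight cycles. I would proceed by contradiction, supposing some $\ver$ contains a negative-weight cycle, and handle the two possible cases (length-two and longer) in turn.

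I would first establish OSP 2CMON. Given any edge $\a \to \b$ in $\ver$ with $f_i(\a) = 0$ and $f_i(\b) = 1$, let $u$ be a separating node. By extremality, the query at $u$ is a top or bottom query; consider top (bottom is symmetric), with $M = \max D_i^{(u)}$. Either $a_i = M$, which immediately gives $b_i < a_i$ as required, or $b_i = M$, in which case a contradiction is needed. In the latter subcase, $\b$ enters the ``yes'' subtree (with $i$'s type fixed to $M$) while $\a$ enters the ``no'' subtree. Invoking weak interleaving at $u$---either $i$ is non-revealable (so all prior $i$-queries on the root-to-$u$ path were top queries) or $i$ is revealable in the sense of Condition~\ref{cond:revealable} in Lemma~\ref{le:local}---and combining with $f_i(\b) = 1$, I would construct auxiliary profiles in the two subtrees of $u$ that assemble into a negative two-cycle or a four-profile witness, contradicting the edge's existence. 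This argument parallels the four-profile construction concluding the last theorem of Section~3.2.1.

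Once 2CMON is established, I would suppose for contradiction that OSP CMON nonetheless fails. Theorem~\ref{thm:anatomy} then yields four profiles $\bb{1}, \bb{2}, \bb{3}, \bb{4}$ with $\bbb{1} < \bbb{2} < \bbb{3} < \bbb{4}$, $f_i(\bb{1}) = f_i(\bb{3}) = 1$, $f_i(\bb{2}) = f_i(\bb{4}) = 0$, and crucially no edge between $\bb{2}$ and $\bb{3}$ in $\ver$. The absence of this edge forces $\bb{2}, \bb{3}$ to diverge in $\T$ at some node $v$ with $i(v) \neq i$ (otherwise that node would be $(\bb{2}, \bb{3})$-separating for $i$). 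Let $u_1, u_2$ be the separating nodes for the edges $\bb{2} \to \bb{1}$ and $\bb{3} \to \bb{4}$; these live, respectively, in the subtrees of $v$ containing $\bb{2}$ and $\bb{3}$. In the top-query case at $u_1$ (symmetric otherwise), extremality forces $\max D_i^{(u_1)} = \bbb{2}$ and $\max D_i^{(u_2)} = \bbb{4}$, so along each side of $v$ the larger top values have been pruned by earlier top queries. Using this tight bookkeeping of $D_i$ together with weak interleaving, I would construct an auxiliary profile (obtained by swapping the $i$-component across the two sides of $v$) that forms with one of $\bb{2}, \bb{3}$ a negative two-cycle in $\ver$, contradicting the 2CMON established in the first stage.

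The main obstacle is the second stage: concretely constructing the auxiliary profile and certifying that the resulting two-cycle edges both exist in $\ver$. Since $\bb{2}, \bb{3}$ may legitimately diverge at a non-$i$ query, the contradiction cannot be extracted from a single separating node; instead, a careful case analysis on the direction of queries (top vs.\ bottom) at $u_1, u_2$, on the relative positions of $u_1, u_2, v$, and on whether $i$ is revealable at the relevant interior nodes is needed to ensure the swapped profile is compatible with the intended separating node and that its $f_i$-value is forced by the neighbor's value via 2CMON.
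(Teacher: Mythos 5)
Your plan diverges from the paper's proof at the crucial second stage, and the divergence is where the argument actually breaks down. The paper does not reason about the node $v$ where $\bb 2$ and $\bb 3$ diverge, nor does it attempt to build auxiliary profiles by ``swapping the $i$-component'' to manufacture a negative two-cycle. Instead it tracks the node $u$ where $\bbb 1$ is separated from $\bbb 2$ and argues, from the absence of the edge $(\bb 2,\bb 3)$, that the very same node separates $\bbb 1$ from $\bbb 3$ as well, with $\bbb 2,\bbb 3$ landing in the same part; by extremality the split at $u$ must be $\{\bbb 1\}$ versus (a superset of) $\{\bbb 2,\bbb 3\}$, and since $\bbb 1 < \bbb 2 < \bbb 3$ this forces a \emph{bottom} query at $u$. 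Symmetrically, the node $w$ separating $\bbb 3$ from $\bbb 4$ splits off $\{\bbb 4\}$ and must be a \emph{top} query. Because both $u$ and $w$ keep $\bbb 2,\bbb 3$ together, they lie on a single root-to-leaf path, so one is an ancestor of the other---and that gives an unavoidable interleaving for agent $i$. Weak interleaving then forces revealability at the deeper node, i.e.\ the domain there has the $\always{1} < d_* < \always{0}$ structure of Condition~\ref{cond:revealable}; but $\bbb 2 < \bbb 3$ with $f_i(\bb 2)=0$ (so $\bbb 2$ is $\sometime 0$, hence $\geq d_*$) and $f_i(\bb 3)=1$ (so $\bbb 3$ is $\sometime 1$, hence $\leq d_*$) is flatly incompatible with that ordering. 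The contradiction is with the structure of $f$ at a revealable node, not with 2CMON via an auxiliary two-cycle.

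Your version misses the key step that forces $u$ to be a bottom query and $w$ a top query (and hence forces interleaving): you never show that the node separating $\bbb 1$ also keeps $\bbb 2,\bbb 3$ together, which is precisely what turns the ordering $\bbb 1 < \bbb 2 < \bbb 3 < \bbb 4$ into a bottom/top dichotomy. There is also a concrete factual slip: you claim $u_1,u_2$ must lie in the two subtrees of $v$ containing $\bb 2$ and $\bb 3$ respectively, but $u_1$ can just as well be an ancestor of $v$ (and indeed, in the paper's analysis, the separating node coincides with the node where $\bbb 1$ is split off from both $\bbb 2$ and $\bbb 3$, which is above any later divergence). Finally, the first stage you propose---deriving 2CMON from extremality---is dispatched by the paper in one line (``OSP 2CMON by construction''), so the substantial effort you allocate there is orthogonal to where the real work lies; the hard part is exactly the auxiliary-profile construction you yourself flag as the main obstacle, and that construction is not needed once you track the separating nodes and invoke revealability as the paper does.
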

\begin{proof}
	Firstly observe that any such mechanism is OSP 2CMON by construction. Assume by contradiction that the mechanism is not OSP, and thus OSP CMON does not hold. Then there is some agent $i$ for which $\ver$ has a negative-weight cycle $C$ -- let $\bb 1, \bb 2, \bb 3$ and $ \bb 4$ be the profiles in $C$ identified by Theorem \ref{thm:anatomy}. 
	
	Let $u$ be the node of $\T$ where $\bbb 2$ and $\bbb 1$ are separated by the mechanism and $\bb 2_{-i}, \bb 1_{-i}  \in D_{-i}^{(u)}$. The existence of the edge $(\bb 2, \bb 1)$ implies that such a node $u$ exists. Similarly, let $v$ be the node of the tree where $\bbb 1$ and $\bbb 3$ are separated by $\T$ and $\bb 3_{-i}, \bb 1_{-i} \in D_{-i}^{(v)}$. Observe that $v$ must exist in the subtree comprised of the path from the root of the tree to $u$ and the tree rooted in $u$. In fact, if no such $v$ existed then the mechanism could not differentiate $\bb 1$ from $\bb 3$ and the existence of $(\bb 2, \bb 1)$ would yield the existence of $(\bb 2, \bb 3)$ -- a contradiction. Furthermore, the non-existence of $(\bb 2, \bb 3)$ implies that $u=v$, and that in the partition of $D_i^{(u)}$ defined at $u$, $\bbb 3$ and $\bbb 2$ belong to the same part. To see this, it is enough to note that were $v \neq u$ or $\bbb 3$ and $\bbb 2$ to belong to different parts, $\bb 2$ and $\bb 3$ would have to be connected by an edge. Therefore, at $u$ the mechanism splits $D_i^{(u)}$ into $\{\bbb 1\}$ and (possibly a superset of) $\{\bbb 2, \bbb 3\}$.
	
	We can now repeat the reasoning above on $\bb 3, \bb 4$ and $\bb 2$ and conclude that in $w$, the node of tree where the mechanism separates $\bbb 3$ from $\bbb 4$, $D_i^{(w)}$ is split into $\{\bbb 4\}$ and (possibly a superset of) $\{\bbb 2, \bbb 3\}$.
	
	Since both $u$ and $w$ have one branch labelled by $\bbb 2$ and $\bbb 3$ and that the leaves for profiles $\bb 2$ and $\bb 3$ will be in the subtree reached by that branch, it must be that $u$ is an ancestor of $w$ in $\T$ or viceversa. Focus on the first case, meaning that $\bbb 4 \in  D_i^{(u)}$, $\bb 4_{-i} \in D_{-i}^{(u)}$ as well. (The other case is similar.) But then, since $\bbb 1 < \bbb 2 < \bbb 3 < \bbb 4$ we have that $\bbb 1$ is the minimum at $u$. Similarly, we note that $\bbb 4$ is the maximum at $w$, which means that the mechanism has interleaving at $w$. Therefore, $i$ must have become revealable at some point in the history before (and including) $w$. But, since Condition~\ref{cond:revealable} of Lemma~\ref{le:local} holds (specifically, \eqref{eq:revealable} holds at $w$) it cannot be the case that $f_i(\bb 2)=0$ and $f_i(\bb 3)=1$ since $\bbb 2 < \bbb 3$, 
	a contradiction.
\end{proof}

\section{Two-way Greedy Algorithms}\label{sec:2way}
As noted above, our characterization does not require costs to be positive; so it holds true for negative costs, that is, valuations. We will then talk now simply about {type} when we refer to the agents' private information, be it costs or valuations. We will also sometimes use the terminology of in-query (out-query, respectively) to denote a bottom (top) query for costs and a top (bottom) query for valuations.
\begin{definition}[(Anti-)Monotone Functions]
We say that a function is monotone (antimonotone, respectively) in the type if it is decreasing (increasing, respectively) in the cost, and increasing (decreasing, respectively) in the valuation. 
\end{definition}

In this section, we translate our characterization into algorithmic insights on OSP mechanisms and show a connection between their format and a certainly family of adaptive priority algorithms that they use. As a by-product, we show the existence of a host of new mechanisms. We are able to provide the first set of upper bounds on the approximation guarantee of OSP mechanisms, that are independent from domain size (as in \cite{esa19,wine19}) or assumptions on the designer's power to catch and punish lies (as in \cite{ferraioli2017obvious,verification}), see Table \ref{tbl:upperbounds}. The table also contains the new bounds we can prove on the approximation of OSP mechanisms, by leveraging our algorithmic characterization (i.e., Theorems \ref{thm:unbss}--\ref{thm:knap}). 

\renewcommand*{\thefootnote}{\fnsymbol{footnote}}

\begin{table}
	\centering
	\begin{tabular}{|c||c|c|}
		\hline
		Problem & Bound \\
		\hline
		Known Single-Minded Combinatorial Auctions (CAs)   &  $\sqrt{m}$ (\cite{LOS} + Cor. \ref{cor:iaa})\\
		MST (\& weighted matroids) &  1  (\cite{K56} + Cor. \ref{cor:iaa})  \\
		Max Weighted Matching &  2  (\cite{Avi83} +  Cor. \ref{cor:iaa})  \\
		$p$-systems\tablefootnote{A $p$-system is a downward-closed set system $(E, \mathcal{F})$ where there are at most  $p$ \emph{circuits}, that is, minimal subsets of $E$ not belonging to $\mathcal{F}$ \cite{Hausmann1980}.} &  $p$  (\cite{Hausmann1980} + Cor. \ref{cor:iaa}) \\
		Weighted Vertex Cover & $2$  (\cite{Cla83} + Cor. \ref{cor:iaa}) \\
		Shortest Path & $\infty$ (Thm \ref{thm:unbss}) \\
		Restricted Knapsack Auctions & $\Omega(\sqrt{n})$ (Thm \ref{thm:sw:n:lb}) \\
		Asymmetric Restricted Knapsack Auctions ($3$ values) & $\sqrt{n-1}$ (Thm \ref{thm:sw:n:ub})\\
		Knapsack Auctions & $\Omega(\sqrt{\ln n})$ (Thm \ref{thm:knap}) \\
		\hline
	\end{tabular}
	\caption{Bounds on the approximation guarantee of OSP mechanisms. (The result for CAs has been also observed by \cite{bartmaria}. The mechanisms for matroids could be obtained also through Cor. \ref{cor:da}.)}
	\label{tbl:upperbounds}
\end{table}

\renewcommand*{\thefootnote}{\arabic{footnote}}
\setcounter{footnote}{1}

\paragraph{Immediate-Acceptance Auctions and Forward Greedy.} Let us begin by discussing forward greedy. 

\begin{definition}[Forward Greedy]
	A forward greedy algorithm uses functions $g_i^{(in)} : \mathbb{R} \times \mathbb{R}^{k} \rightarrow \mathbb{R}$, $k \leq n-1$, to rank the bids of the players and builds a solution by iteratively adding the agent with highest rank if that preserves feasibility (cf. Algorithm \ref{alg:fgreedy}).
	
	A forward greedy algorithm is monotone if each $g_i^{(in)}$ is monotone in $i$'s private type (i.e., its first argument). 
\end{definition}

\begin{algorithm}[tb]
	\caption{Forward greedy algorithm}\label{alg:fgreedy}
	Let $b_1, \ldots, b_n$ the input bids
	
	${\cal P} \leftarrow {\cal F}$ (${\cal P}$ is the set of all feasible solutions)
	
	${\cal I} \leftarrow \emptyset$  (${\cal I}$ is the set of infeasible agents, i.e., agents that cannot be included in the final solution)
	
	${\cal A} \leftarrow N$ (${\cal A}$ is the set of active or infeasible agents)
	
	\While{$|{\cal P}| > 1$}{
		
		Let $i = \arg\max_{k \in {\cal A} \setminus {\cal I}} g_k^{(in)}(b_k, \b_{N \setminus \mathcal{A}})$\label{l:order} 
		
		\If{there are solutions $S$ in $\cal P$ such that $i \in S$}
		{
			Drop from $\cal P$ all the solutions $S$ such that $i \not\in S$ (if any)
			
			${\cal A}  \leftarrow {\cal A} \setminus \{i\}$
			
		} \Else{
			${\cal I}  \leftarrow {\cal I} \cup \{i\}$
		}
	}
	Return the only solution in ${\cal P}$
\end{algorithm}

Few observations are in order for Algorithm \ref{alg:fgreedy}. Firstly, it is not too hard to see that it will always return a solution (i.e., there will eventually be a unique feasible solution in ${\cal P}$). To see this consider the case in which (at least) two solutions $S, S'$ are in ${\cal P}$. Then there must exist an agent $j$ such that $j \in S \Delta S'$, $\Delta$ denoting the symmetric difference between sets. This means that $j \in {\cal A} \setminus {\cal I}$ and the forward greedy algorithm will decide in the next steps whether $j$ is part of the solution or not (and consequently whether to keep $S$ or $S'$). 
Secondly, we stress how a forward greedy algorithm belong to the family of adaptive priority algorithms \cite{Borodin}. At Line \ref{l:order}, Algorithm \ref{alg:fgreedy} (potentially) updates the priority as a function of the bids of those bidders who have left the auction (i.e., that are not active anymore). A peculiarity of the algorithm (to do with its OSP implementation, cf. the proof of Corollary \ref{cor:iaa} below) is that the adaptivity does not depend on bidders who despite their high priority cannot be part of the eventual solution (i.e., the agents we add to ${\cal I}$).\footnote{Note that a syntactically (but not semantically) alternative definition of forward greedy algorithms could do without $\cal I$ by requiring an extra property on the priority functions (i.e., adaptively floor all the priorities of infeasible players).} This distinction would not be necessary if the problem at hand would be \emph{upward close} (i.e., if a solution $S$ is feasible then any $S'\supset S$ would be in $\mathcal{F}$ too). Thirdly, a notable subclass of forward greedy algorithms are fixed-priority algorithms, where Line \ref{l:order} and the while loop are swapped (and priority functions only depend on the agents' types). (These algorithms do not need to keep record of $\cal I$ either.) This algorithmic paradigm has been applied to different optimization problems (e.g., Kruskal's Minimum Spanning Tree (MST) algorithm \cite{K56}). 

We define \emph{Immediate-Acceptance Auctions (IAAs)} as mechanisms using only in-queries. 

\begin{corollary}\label{cor:iaa}
	 An IAA that uses algorithm $f$ is OSP if and only if $f$ is monotone forward greedy.
\end{corollary}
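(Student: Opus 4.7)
The plan is to prove both directions of the equivalence.

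For the ``if'' direction, assume $f$ is a monotone forward greedy algorithm with in-priority functions $\{g_k^{(in)}\}$. I would build the implementation tree $\T$ of the IAA top-down by mimicking Algorithm~\ref{alg:fgreedy}. At each node $u$, with $\mathcal{A}$ the currently-active agents and $\b_{N\setminus\mathcal{A}}$ the bids already revealed, and with $\hat b_k$ denoting the best element (minimum cost or maximum valuation) of $D_k^{(u)}$ for each $k\in\mathcal{A}\setminus\mathcal{I}$, I pick $i^* \in \arg\max_k g_k^{(in)}(\hat b_k, \b_{N\setminus\mathcal{A}})$ and issue an in-query separating $\hat b_{i^*}$ from $D_{i^*}^{(u)}\setminus\{\hat b_{i^*}\}$. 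A ``yes'' fixes $b_{i^*}=\hat b_{i^*}$ and triggers the greedy accept/reject test (move $i^*$ to the solution or to $\mathcal{I}$); a ``no'' shrinks $D_{i^*}^{(u)}$ and the procedure recurses. By construction only in-queries are used, so $\T$ is extremal with no interleaving, and the realised social choice function is exactly Algorithm~\ref{alg:fgreedy} applied to $f$. Monotonicity of the priority functions implies OSP 2CMON (a length-two negative cycle would require the greedy rule to accept a lower-priority report while rejecting a higher-priority one of the same agent, which contradicts monotonicity of $g_{i}^{(in)}$), and Theorem~\ref{thm:suffcmon} then yields OSP.

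For the ``only if'' direction, let $\M=(f,p,\T)$ be an OSP IAA. By Theorem~\ref{th:ops-extremal} together with the weak-interleaving theorem, $\T$ can be taken extremal; since IAAs use only in-queries, interleaving is vacuous and the no-interleaving property is automatic. I would then extract the priority function $g_k^{(in)}(b_k, \b_{N\setminus\mathcal{A}})$ from $\T$ by reading off the order in which $k$ is offered an \emph{accepting} in-query along the root-to-leaf path consistent with $(b_k,\b_{N\setminus\mathcal{A}})$, with higher priority corresponding to earlier acceptance. Monotonicity in $b_k$ follows from the structure of in-queries: since every in-query isolates the current best type, a smaller cost or larger valuation is confirmed strictly earlier and thus receives higher priority. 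To verify that $f$ is the greedy algorithm induced by these priorities, I would show by induction on the tree depth that at each node the queried agent is the argmax over still-active agents of $g_k^{(in)}$ evaluated at their current best type, and that the accept/reject decision agrees with feasibility of the partial solution -- if $f$ ever rejected an acceptable best-type bid or accepted an infeasible one, a straightforward swap argument paired with Lemma~\ref{le:local} would produce a negative 2-cycle, contradicting OSP 2CMON.

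The main obstacle I foresee is in the necessity direction, namely showing that the priority order implicitly encoded by $\T$ truly factors into a per-agent function $g_k^{(in)}(b_k,\b_{N\setminus\mathcal{A}})$ that depends only on $k$'s type and on the bids of previously-fixed agents. This requires arguing that the branching choice of $\T$ at any node cannot surreptitiously depend on the yet-unrevealed types of still-active agents beyond what such a priority rule captures -- a property I would derive from the extremal, no-interleaving structure combined with OSP 2CMON applied to hypothetical deviating continuations, using that each in-query reveals the minimum/maximum of the current subdomain and therefore pins down a consistent total order on the ``best achievable'' priorities at every history.
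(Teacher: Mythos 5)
Your proposal matches the paper's proof in both directions: for ``if'', build the implementation tree top-down by always in-querying the agent whose best remaining type has the highest in-priority, observe that only in-queries are used so the tree is extremal with no interleaving, and invoke the characterization (Theorem~\ref{thm:suffcmon}); for ``only if'', traverse the (already extremal, since an IAA by definition uses only in-queries) tree in level order and assign decreasing priority values $V, V-1, \ldots$ to the $(k,\min D_k^{(u)}, \b_{N\setminus\mathcal{A}})$ triples encountered, noting that the bottom-query structure makes this well-defined and monotone. The concern you raise at the end about whether the tree's branching truly factors through a per-agent priority of $(b_k,\b_{N\setminus\mathcal{A}})$ is handled in the paper by the same level-order assignment, and is resolved by the fact that, in an IAA, no $(k,b_k,\b_{N\setminus\mathcal{A}})$ triple can recur at two distinct tree nodes (once $k$ answers a bottom query, her domain either becomes a singleton or strictly loses its minimum).
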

\begin{proof}
We will prove this in the case in which types are costs; the proof is dual for valuations. 

Let us start with the if condition; given a forward greedy algorithm $f$, we recursively define the implementation tree $\T$ of an IAA as follows. At each $u \in \T$, we compute $j = \arg\max_{i \in \mathcal{A} \setminus \mathcal{I}} g_i^{(in)}(\min D_i^{(u)}, \b_{N \setminus {\cal A}})$, where $D_i^{(u)}$ denotes, as above, the current domain of $i$ at $u$, $\b_{N \setminus {\cal A}}$ contains the bids of all the agents that have been queried and have replied yes to a preceding query, and $\cal I$ contains the bidders that could not be queried since they did not belong to any feasible solution. Since the algorithm is monotone, $j$ is the agent with maximum priority at this point of the execution. If including $j$ in the solution is feasible then we ask $j$ an in-query to check if her type is $\min D_j^{(u)}$. If she replies yes, we select her accordingly and continue the execution (i.e., remove $j$ from $\cal A$ and set $b_j=\min D_j^{(u)}$). If she replies no, we remove $\min D_j^{(u)}$ from her current domain and continue accordingly. If selecting $j$ is not feasible, then we add $j$ to $\cal I$ and continue the mechanism without any query to $j$. The mechanism then uses only a sequence of in-queries. Therefore, by our characterization, there exist payments $p$ such that $(f, p, \T)$ is OSP.

For the opposite direction, let $(f, p, \T)$ be an OSP IAA; by our characterization, we can restrict without loss of generality to the extremal implementation. We define a monotone priority function $g^{(in)}$ for $f$, thus showing that $f$ is indeed a monotone forward greedy algorithm. We will do that recursively by visiting $\T$ in level order. Given the root $r$ of $\T$ where we are in-querying $i$ we define the rank function $g_i^{(in)}(\min D_i, \emptyset)$ to be a suitable large value $V$.  (Recall that in-queries are made for the smallest possible cost in the domain.) Let $j$ denote the agent queried in the child $v$ of $r$ where $i$ has confirmed her type to be $\min D_i$. We set $g_j^{(in)}(\min D_j, \{\min D_i\})$ to be $V-1$. While for $k$, the agent interrogated at the other child of $r$, we set $g_j^{(in)}(\min D_j, \emptyset)=V-1$. Observe how we are gradually building the set $N\setminus {\cal A}$ of agents that have left the auction. (We remark how for this direction of the proof we do not need to use $\cal I$.) We can continue in this fashion until we reach the leaves of $\T$. 
\end{proof}
We note how all fixed-priority algorithms are OSP but not all OSP mechanism must use a fixed-priority algorithm. In fact, for a fixed-priority algorithm the sufficiency proof alone would go through; the second parameter of the priority functions is only needed for the opposite direction. 

It is important to compare the result above with \cite[Footnote 15]{MS20}, where it is observed how the strategic properties of forward and reverse greedy algorithms are different. The remark applies to auctions where these algorithms are augmented by the so-called threshold payment scheme. Our result shows that there exist alternative payment schemes for this important algorithmic design paradigm.~\footnote{It is indeed crucial to define payments that depend on the implementation tree for an IAA that is OSP; in Appendix \ref{sec:cas}, we give an example showing how this works for a well known forward-greedy algorithm for single-minded CAs.}

\paragraph{Deferred-Acceptance Auctions and Reverse Greedy.} Another algorithmic approach that can be used to obtain OSP mechanisms is reverse greedy (see Algorithm \ref{alg:bgreedy}), that is, having an out-priority function that is antimonotone with each agent's type and drops agents accordingly. In its auction format, this is known as Deferred-Acceptance Auctions (DAAs), see, e.g., \cite{MS20}. 

\begin{definition}[Reverse Greedy]
	A reverse greedy algorithm uses functions $g_i^{(out)} : \mathbb{R} \times \mathbb{R}^k \rightarrow \mathbb{R}$, $k \leq n-1$, to rank the bids of the players and iteratively excludes the player with highest rank (if feasible) until only one  solution is left (cf. Algorithm \ref{alg:bgreedy}). 
	
	A reverse greedy algorithm is antimonotone if each $g_i^{(out)}$ is antimonotone in $i$'s private type (i.e., its first argument).
\end{definition}

\begin{algorithm}[tb]
	\caption{Reverse greedy algorithm}\label{alg:bgreedy}
	Let $b_1, \ldots, b_n$ the input bids
	
	${\cal P} \leftarrow {\cal F}$ (${\cal P}$ is the set of all feasible solutions)
	
	${\cal I} \leftarrow \emptyset$ (${\cal I}$ is the set of in agents, i.e., those that cannot be dropped)
	
	${\cal A} \leftarrow N$ (${\cal A}$ is the set of active or in agents)
	
	\While{$|{\cal P}| > 1$}{
		
		Let $i = \arg\max_{k \in {\cal A} \setminus {\cal I}} g_k^{(out)}(b_k, \b_{N \setminus {\cal A}})$ 
		
		\If{there are solutions $S$ in $\cal P$ such that $i \not\in S$}
		{
			Drop from $\cal P$ all the solutions $S$ such that $i \in S$ (if any)
			
			${\cal A}  \leftarrow {\cal A} \setminus \{i\}$
		} \Else{
			${\cal I}  \leftarrow {\cal I} \cup \{i\}$
		}
	}
	Return the only solution in ${\cal P}$
\end{algorithm}
Similarly to the case of forward greedy, the algorithm need not use $\cal I$ for \emph{downward-closed} problems (i.e., every subset of a feasible solution is feasible as well). Incidentally, this is the way it is discussed in \cite{DGR17}.

The following corollary can be proved very similarly to Corollary \ref{cor:iaa} (in fact, it is sufficient to substitute in-queries and $\min$ with out-query and $\max$, respectively).

\begin{corollary}[\cite{MS20}]\label{cor:da}
A DAA using algorithm $f$ is OSP if and only if $f$ is an antimonotone reverse greedy algorithm. 
\end{corollary}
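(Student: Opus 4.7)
The plan is to mirror the proof of Corollary~\ref{cor:iaa} almost line by line, using the duality between in-queries and out-queries supplied by our characterization. A DAA, by definition, uses only out-queries (top queries for costs, bottom queries for valuations), so its implementation tree is automatically \emph{extremal with no interleaving}; hence by Theorem~\ref{thm:suffcmon} such a mechanism is OSP as soon as a consistent weak-interleaving tree can be built on top of $f$. The whole argument is therefore a bookkeeping exercise: replace every occurrence of $\min D_i^{(u)}$ by $\max D_i^{(u)}$ in the cost case (and dually for valuations), and replace the in-priority $g_i^{(in)}$ by an out-priority $g_i^{(out)}$ that, being antimonotone in the agent's type, peaks where an in-priority would trough.

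For the \emph{if} direction I will construct the tree $\T$ recursively from a given antimonotone reverse greedy $f$. At the current node $u$ I compute $j=\arg\max_{i\in\mathcal{A}\setminus\mathcal{I}} g_i^{(out)}(\max D_i^{(u)},\b_{N\setminus\mathcal{A}})$; antimonotonicity ensures $j$ is indeed the agent Algorithm~\ref{alg:bgreedy} would select next. If dropping $j$ still leaves a feasible solution available, I issue an out-query to $j$ and branch on whether her type equals $\max D_i^{(u)}$ (updating $\mathcal{A}$ and $\b_{N\setminus\mathcal{A}}$ in the yes-branch, and shrinking $D_j^{(u)}$ in the no-branch); otherwise I append $j$ to $\mathcal{I}$ and skip her. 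The resulting $\T$ uses only out-queries, is extremal with no interleaving, and by Theorem~\ref{thm:suffcmon} admits payments $p$ making $(f,p,\T)$ OSP.

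For the \emph{only if} direction I invoke Theorems~\ref{th:ops-extremal} and~\ref{thm:suffcmon} to assume $\T$ is extremal; since a DAA uses only out-queries, there is no interleaving at all. I then extract the out-priority function by a level-order traversal of $\T$: assign to the agent queried at the root a large value $V$, and to each agent queried at a child $v$ of a node $u$ the value $V-\mathrm{depth}(v)$, taking as the second argument of $g_i^{(out)}$ precisely the vector of already-fixed bids $\b_{N\setminus\mathcal{A}}$ along the root-to-$v$ path. Because out-queries always ask about the current maximum of the queried agent's domain, the function $g_i^{(out)}(\cdot,\b_{N\setminus\mathcal{A}})$ is automatically antimonotone in the agent's type, and the order in which $\T$ fixes agents reproduces exactly the choices of Algorithm~\ref{alg:bgreedy}.

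The only subtlety, and the step I expect to require some care, is handling the set $\mathcal{I}$ of ``in'' agents correctly in both directions: in the forward direction, an agent who cannot be dropped without losing feasibility must be quietly moved to $\mathcal{I}$ rather than being queried (otherwise the tree would produce an infeasible branch), and in the reverse direction the extracted priority function must be defined so that such agents never win the $\arg\max$ at the appropriate step. This is exactly the dual of the role played by $\mathcal{I}$ in the proof of Corollary~\ref{cor:iaa}, where $\mathcal{I}$ holds agents whose inclusion would violate feasibility; for \emph{downward-closed} problems, as in \cite{DGR17}, $\mathcal{I}$ is empty throughout and the argument simplifies accordingly.
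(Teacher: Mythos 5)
Your proof is correct and takes essentially the same approach as the paper, which in fact only says the result "can be proved very similarly to Corollary~\ref{cor:iaa}" by substituting in-queries/$\min$ with out-queries/$\max$; you have simply spelled out that substitution in full, and the handling of $\mathcal{I}$ and the level-order extraction of the out-priority are the direct duals of what the paper does for IAAs. One small remark: since a DAA by definition uses only out-queries, its implementation tree is already extremal, so your invocation of Theorem~\ref{th:ops-extremal} in the only-if direction is redundant (the paper's own proof of Corollary~\ref{cor:iaa} has the same harmless redundancy); also, as noted in that proof, $\mathcal{I}$ plays no essential role in the only-if direction because Algorithm~\ref{alg:bgreedy}'s own feasibility check handles agents whose removal would leave no feasible solution.
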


It is convenient to discuss the relative power of forward and reverse greedy algorithms. We now know from Corollaries \ref{cor:iaa} and \ref{cor:da} that they are strategically equivalent. Algorithmically, however, there are some differences. There are algorithms and problems, such as, the aforementioned Kruskal and MST, where we can take the reverse version of forward greedy (e.g., for MST,  start with the entire edge set, go through the edges from the most expensive to the cheaper, and remove an edge whenever it does not disconnect the graph) without any consequence to the approximation guarantee. For the minimum spanning tree problem (and, more generally, for finding the minimum-weight basis of a matroid), the reverse greedy algorithm is just as optimal as the forward one. In general (and even for, e.g., bipartite matching), the reverse version of a forward greedy algorithm with good approximation guarantee can be bad \cite{DGR17}. 

\paragraph{OSP Mechanisms and Two-way Greedy.} Our characterization shows that the right algorithmic technique for OSP is a suitable combination of forward and reverse greedy. 

\begin{algorithm}[tb]
	\caption{Two-way greedy algorithm}\label{alg:2greedy}
	Let $b_1, \ldots, b_n$ the input bids
	
	${\cal P} \leftarrow {\cal F}$ (${\cal P}$ is the set of all feasible solutions)
	
	${\cal I} \leftarrow \emptyset$ (${\cal I}$ is the set of infeasible and in agents)
	
	${\cal A} \leftarrow N$ (${\cal A}$ is the set of all agents that are active or infeasible or in)
	
	\While{$|{\cal P}| > 1$}{
		
		Let $i^{(in)} = \arg\max_{k \in {\cal A} \setminus {\cal I}} g_k^{(in)}(b_k, \b_{N \setminus {\cal A}})$
		
		Let $i^{(out)} = \arg\max_{k \in {\cal A} \setminus {\cal I}} g_k^{(out)}(b_k, \b_{N \setminus {\cal A}})$
		
		Let $i$ be the agent corresponding to $\max \left\{ g_{i^{(in)}}^{(in)}(b_{i^{(in)}}, \b_{N \setminus {\cal A}}), g_{i^{(out)}}^{(out)}(b_{i^{(out)}}, \b_{N \setminus {\cal A}}) \right\}$
		
		\If{$i = i^{(in)} \wedge$ there are solutions $S$ in $\cal P$ such that $i \in S$}{ 
			Drop from $\cal P$ all the solutions $S$ such that $i \not\in S$ (if any)\label{l:in}
			
			${\cal A} \leftarrow {\cal A} \setminus \{i\}$
		} 
		\ElseIf{$i = i^{(out)} \wedge$ there are solutions $S$ in $\cal P$ such that $i \not\in S$}{
			Drop from $\cal P$ all the solutions $S$ such that $i \in S$ (if any)\label{l:out}
			
			${\cal A} \leftarrow {\cal A} \setminus \{i\}$
		} \Else {
			${\cal I}  \leftarrow {\cal I} \cup \{i\}$
		}
	}
	Return the only solution in ${\cal P}$
\end{algorithm}

\begin{definition}[Two-way Greedy]
	A two-way greedy algorithm uses functions $g_i^{(in)} : \mathbb{R} \times \mathbb{R}^k \rightarrow \mathbb{R}$ and $g_i^{(out)} : \mathbb{R} \times \mathbb{R}^k \rightarrow \mathbb{R}$, $k \leq n-1$, for each agent $i$, to rank the bids of the players, and iteratively and greedily includes (if highest priority comes from a $g^{(in)}$ function) or excludes (if highest priority comes from a $g^{(out)}$) -- whenever possible -- the player with the highest rank until one feasible solution is left (cf. Algorithm \ref{alg:2greedy}). 
	
	A two-way greedy is all-monotone if each $g_i^{(in)}$ is monotone in $i$'s private type and $g_i^{(out)}$ is antimonotone in $i$'s private type (i.e., their first argument).
\end{definition}

To fully capture the strategic properties of two-way greedy algorithms, we need to define one more property for which we need some background definitions. Consider the total increasing\footnote{For notational simplicity, we here assume that there are not ties between the priority functions.} ordering $\varphi$ of the 
$2 \prod_i |D_i|$ functions\footnote{The algorithm must not necessarily have a definition for the priority functions for all the combinations of type/history as some might never get explored. In this case, we set  all the undefined entries to sufficiently small (tie-less, for simplicity) values.} $$\{g_i^{(\star)}(b, \b)\}_{i\in N, b \in D_i, \b \in D_{-i}, \star \in \{in, out\}}$$ used by a two-way greedy algorithm. Given $\varphi_\ell=g_i^{(\star)}(b_i, \b_{N\setminus \mathcal{A}_\ell})$, the $\ell$-th entry of $\varphi$, we let $D_j^{\prec}(\ell)$ ($D_j^{\succ}(\ell)$, respectively) denote the set of types $b \in D_j$ such that $g_j^{(\star)}(b, \b_{N \setminus {\cal A}}) > \varphi_\ell$ ($g_j^{(\star)}(b, \b_{N\setminus \mathcal{A}}) < \varphi_\ell$, respectively) with $\mathcal{A} \supseteq \mathcal{A}_\ell$ ($\mathcal{A} \subseteq \mathcal{A}_\ell$, respectively). Moreover, we add $b_i$ (the bid defining $\varphi_\ell$) to $D_i^\prec(\ell)$. In words, once in the ordering we reach the $\ell$-th entry for a certain agent type and a given ``history'' (i.e., $\b_{N \setminus {\cal A}_\ell}$) with $D_j^{\prec}(\ell)$ we denote all the types that the algorithm has already explored for agent $j$ at this point (that is, for a compatible prior history $\b_{N \setminus {\cal A}_\ell}$ with $\mathcal{A} \supseteq \mathcal{A}_\ell$). Similarly, $D_j^{\succ}(\ell)$ denotes those types in $D_j$ that are yet to be considered from this history onwards. Finally, for $d \in \{in, out\}$ we let $\overline{d}$ be a shorthand for the other direction, that is, $\{\overline{d}\}=\{in, out\} \setminus \{d\}$.

\begin{definition}[Interleaving Algorithm]\label{def:int}
We say that a two-way greedy algorithm is interleaving if for each $i$ and $\mathcal{A} \subseteq N$ the following occurs. For each $\varphi_{\ell}=g_i^{(d)}(b, \b_{N \setminus \mathcal{A}})$ such that for some $\mathcal{A'} \subseteq \mathcal{A}$ it holds $g_i^{(\overline{d})}(b', \b_{N \setminus {\cal A'}})=\varphi_{\ell'}$ with $b' \in D_i^{\succ}(\ell)$ (and then $\ell' > \ell$) we have
\[
g_i^{(\star)}(x, \b_{N \setminus \mathcal{A'}}) > g_j^{(\star)}(y, \b_{N \setminus \mathcal{A'}})
\]
for each $y \in D_j^\succ(\ell')$ and for all (but at most one) $x$ in $D_i^\succ(\ell')$ ($\star \in \{in, out\}$).
\end{definition}

The definition above captures in algorithmic terms the weak interleaving property of extensive-form implementations. Whenever there is a change of direction (from $d$ to $\overline{d}$) for a certain agent $i$ and two compatible histories (cf. condition $\mathcal{A'} \subseteq \mathcal{A}$) then it must be the case that $i$ is revealable and all the other unexplored types (but at most one) \emph{must} be explored next. 

\newcommand{\tmin}{t_{\min}}
\newcommand{\tmax}{t_{\max}}
\newcommand{\tmed}{t_{\med}}

\begin{example}[Interleaving Algorithm]
Consider a setting with three agents, called $x$, $y$ and $z$. 
The \emph{valuation} domain  is the same for all the agents and has maximum $\tmax$ and minimum $\tmin$. Consider the two-way greedy algorithm with the following ordering $\varphi$:
\[
g^{(in)}_x(\tmax) > g^{(in)}_y(\tmax) > g^{(out)}_z(\tmin) > g^{(out)}_y(\tmin) > \ldots
\]
(where the second argument is omitted since it is $\emptyset$). Let us focus on $\varphi_2=g^{(in)}_y(\tmax)$. Here $D_x^\prec(2)$ and $D_y^\prec(2)$ is $\{\tmax\}$ (as it has been already considered for both agents) whilst $D_x^\succ(2)$ and $D_y^\succ(2)$ is equal to the original domain but $\tmax$. For $z$, instead, $D_z^\prec(2)=\emptyset$ and $D_z^\succ(2)$ is still the original domain. At $\varphi_4$ there is a change of direction for agent $y$. The two-way greedy is interleaving if the domain has only three types (since there are no constraints on the in/out priority for third type in the domain of $y$). For larger domains, instead, we need to look at the next entries of $\varphi$ to ascertain whether the algorithm is interleaving or not. 
\end{example}

We are now ready to show that two-way greedy is indeed the algorithmic nature of OSP. 
\begin{corollary}
	A mechanism using algorithm $f$ is OSP if and only if $f$ is an all-monotone interleaving two-way greedy algorithm.
\end{corollary}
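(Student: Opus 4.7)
The plan is to bootstrap from the main characterization (an OSP mechanism exists if and only if an extremal mechanism with weak interleaving exists) together with the arguments used in Corollaries~\ref{cor:iaa} and~\ref{cor:da}, extending them by handling interleavings at revealable nodes. Throughout I will state the argument for costs; the valuation case is dual by the definition of in/out-query.

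\textbf{Sufficiency (all-monotone interleaving two-way greedy $\Rightarrow$ OSP).} Given such an algorithm $f$ with priority ordering $\varphi$, I build an extremal implementation tree $\T$ on the fly by replaying $\varphi$. At each internal node $u$, I simulate Algorithm~\ref{alg:2greedy} on histories compatible with $u$ to pick the agent $i$ that currently attains the highest priority, and the direction $d \in \{in,out\}$ witnessing it. By all-monotonicity, this top priority is attained exactly at the extreme type of $D_i^{(u)}$ (the minimum if $d=in$, the maximum if $d=out$); accordingly the mechanism performs a bottom query if $d=in$ and a top query if $d=out$. This yields an extremal tree by construction. For weak interleaving, I must show that whenever the direction for some agent $i$ switches from $d$ to $\overline{d}$ along some root-to-leaf path, $i$ is revealable at the switch. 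This is exactly what Definition~\ref{def:int} enforces: the condition $g_i^{(\star)}(x,\cdot) > g_j^{(\star)}(y,\cdot)$ for all $j\neq i$ and all but at most one remaining $x\in D_i^{\succ}(\ell')$ forces the tree to subsequently query $i$ exhaustively on her remaining types before any other agent is touched. This means that from the switch onward the outcome for $i$ is pinned down for all but at most one of her remaining types, i.e. the structure \eqref{eq:revealable} holds at the switch. Hence $(f,p,\T)$ is extremal with weak interleaving, and Theorem~\ref{thm:suffcmon} gives OSP payments $p$.

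\textbf{Necessity (OSP $\Rightarrow$ all-monotone interleaving two-way greedy).} By Theorem~\ref{th:ops-extremal} and the no-interleaving theorem (outside revealable nodes), we may assume $(f,p,\T)$ is extremal with weak interleaving. I define the priority functions by a top-down traversal of $\T$, as in the proof of Corollary~\ref{cor:iaa}. At each node $u$ where a bottom (resp. top) query is asked to $i=i(u)$, I set $g_i^{(in)}(\min D_i^{(u)}, \b_{N\setminus\mathcal{A}_u})$ (resp. $g_i^{(out)}(\max D_i^{(u)}, \b_{N\setminus\mathcal{A}_u})$) to a value strictly larger than every priority assigned at any descendant of $u$, where $\b_{N\setminus\mathcal{A}_u}$ records the answers on the root-to-$u$ path. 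Entries that are never reached in $\T$ are set to very small distinct values. All-monotonicity is automatic: for each agent $i$ and each history, $g_i^{(in)}$ is assigned its largest value on the extreme type that an in-query would separate first, so in-priorities are monotone decreasing in costs; symmetrically for out-priorities. The resulting algorithm literally replays $\T$, so it equals $f$. Finally, interleaving in the sense of Definition~\ref{def:int} follows from weak interleaving of $\T$: any direction switch for $i$ between two compatible histories must occur at a node where $i$ is revealable (weak interleaving), and then \eqref{eq:revealable} forces all but at most one remaining type of $i$ to be already-$\always{0}$ or already-$\always{1}$, so $\T$ may (and by extremality does) process them contiguously before any other agent, which is exactly the quantifier in Definition~\ref{def:int}.

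\textbf{Main obstacle.} The only delicate point is matching the algorithmic interleaving condition to the extensive-form weak interleaving condition in both directions. Concretely, I must verify that the clause ``for all (but at most one) $x\in D_i^{\succ}(\ell')$'' in Definition~\ref{def:int} is equivalent, once played out through Algorithm~\ref{alg:2greedy}, to the structural description \eqref{eq:revealable} of a revealable agent (with one leftover \neither{} type and everything else $\always{0}$ or $\always{1}$). This amounts to checking that the two ``at most one'' exceptions refer to the same type, namely the central $d_*$ of \eqref{eq:revealable}, and that the in/out split of the remaining types is consistent with the priority values assigned in necessity and with the tree construction in sufficiency; once this correspondence is isolated, both implications reduce to the already-established theorems.
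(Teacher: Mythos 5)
Your sufficiency direction matches the paper's: build an extremal implementation tree by replaying the priority ordering and appeal to Theorem~\ref{thm:suffcmon}. The gap is in the necessity direction, precisely in the clause ``$\T$ may (and by extremality does) process them contiguously before any other agent.'' Extremality is a property of each individual query (it separates one extreme of the current subdomain from the rest); it says nothing about the \emph{order} in which the mechanism interleaves queries \emph{across} different agents. An extremal OSP mechanism with weak interleaving can perfectly well make agent $i$ revealable at a node $u$, ask $i$ one revelation query, then switch to agent $j$ for several rounds, and only later return to finish revealing $i$. Reading priority functions off such a tree by a top-down traversal does \emph{not} yield an algorithm satisfying Definition~\ref{def:int}, because the interleaving condition requires that the remaining $g_i^{(\star)}$ entries strictly dominate every other agent's entries immediately after the direction switch. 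You cannot assert that the tree already does this; you must first \emph{transform} the tree so that the revelation of a revealable agent is performed contiguously, and you must prove that this reordering preserves OSP.

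That is exactly the role of the paper's Observation~\ref{obs:wellordered} (the ``well-ordered'' transformation), which your proof omits. Its proof is not a triviality: after the reordering, the OSP-graph $\verp$ of agent $i$ acquires new edges, and one has to re-invoke the four-profile structure of Theorem~\ref{thm:anatomy} to show these new edges cannot close a negative-weight cycle. Your ``main obstacle'' paragraph correctly senses that matching Definition~\ref{def:int} to the revealable structure~\eqref{eq:revealable} is delicate, but locates the delicacy in aligning the two ``at most one'' exceptions; the real delicacy is the missing reordering step, which you assumed away with the phrase in parentheses.
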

\begin{proof}
	The result can be proved very similarly to Corollaries \ref{cor:iaa} and \ref{cor:da}. For the if part, we can define an implementation tree using the priority functions exactly as described in those proofs.
	
	For the only if direction, we cannot directly use the implementation tree of the OSP mechanism to iteratively define the priority functions, as this might violate the interleaving property (in general an OSP mechanism can perform the revelation queries in any order rather than immediately as requested by Definition \ref{def:int}). However, we can use the claim below and first modify the OSP mechanism to be \emph{well ordered}. We say that a mechanism is {well ordered} if whenever an agent becomes revealable, the mechanism finds her type before querying any other agent. We next show (proof in the appendix) that we can restrict our attention to extremal OSP mechanisms that are well ordered.
	
	\begin{observation}\label{obs:wellordered}
		For any OSP extremal mechanism $\M=(f, p, \T)$, there is a well-ordered extremal mechanism $\M'=(f, p, \T')$ that is OSP.
	\end{observation}
	We can now use the new implementation tree and obtain priority functions that define a two-way greedy algorithm that is all-monotone and interleaving. This concludes the proof.
\end{proof}

Given the corollary above, we will henceforth simply say two-way greedy (algorithm) and avoid stating the properties of all-monotonicity and interleaving. The next subsections contain an analysis of two-way greedy in terms of approximation guarantee (in comparison to forward/reverse in Section \ref{sec:2way-1way} and in their own right in Section \ref{sec:2way-apx}).

\subsection{Two-way vs Forward/Reverse Greedy}\label{sec:2way-1way}
We next prove that two-way greedy algorithms are in general more powerful than single-directional greedy algorithms, for both social welfare maximization (types are valuations) and social cost minimization (types are costs).

We begin with costs and prove the following theorem.
\begin{theorem}[Social Cost]\label{thm:1vs2:sc}
	There are set systems in which two-way greedy algorithms are $\alpha$-approximate, for some $\alpha > 1$, but both forward and reverse greedy do not return a better than $\rho$-approximation  to the optimal social cost, with $\rho/\alpha>1.02$.
\end{theorem}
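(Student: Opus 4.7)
The plan is to exhibit an explicit set system, together with a finite cost domain for the agents, on which two distinct phenomena can be forced: no matter which forward greedy one uses, some profile is handled suboptimally, and no matter which reverse greedy one uses, some (possibly different) profile is handled suboptimally, while a cleverly interleaved two-way greedy handles every profile near-optimally. The natural template is a product set system $\mathcal{F} = \{A \cup B : A \in \mathcal{F}_A,\, B \in \mathcal{F}_B\}$ built from two ``dual'' subsystems on disjoint agent sets $E_A, E_B$. In subsystem $A$ (e.g.\ a structure like $\mathcal{F}_A=\{\{c\},\{d_1,d_2,\ldots\}\}$), the optimum depends on whether the singleton side or the ``committee'' side is cheaper, and the choice is naturally made by \emph{out-queries} on the committee (committing to $\{c\}$ the first time an over-threshold answer arrives). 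Subsystem $B$ is the mirror image of $A$, where the decision is cleanly made by \emph{in-queries}. A two-way greedy can run the two natural strategies side-by-side across the two subsystems and reach the optimum of each, giving $\alpha$ very close to $1$.

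Given this set-up, the main work is to prove a lower bound that holds over \emph{every} possible forward greedy and \emph{every} possible reverse greedy. By Corollary~\ref{cor:iaa}, any forward greedy corresponds to an OSP IAA, and so each ``yes'' to an in-query is an irrevocable commitment to include. I would argue that on subsystem $A$ any forward greedy faces a dichotomy: either it is structured so that after $c$ reports a high cost it keeps $c$ highest priority and commits to the singleton $\{c\}$ (losing on the profile where the committee is strictly cheaper), or the priority drops to some $d_j$ whose first ``yes'' irrevocably forces the committee (losing on a profile where several other $d_k$'s are high). Quantifying the best balance gives an unconditional $\rho^{(\mathrm{fwd})}>1$ on $\mathcal{F}_A$. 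By the dual argument (using Corollary~\ref{cor:da}), every reverse greedy incurs a ratio $\rho^{(\mathrm{rev})}>1$ on $\mathcal{F}_B$, since any out-query similarly commits to exclude.

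To combine the two into a single instance, I would pick adversarial cost profiles that are jointly bad for any pure forward (bad profile on $A$, optimal-cost profile on $B$) and, independently, jointly bad for any pure reverse (optimal profile on $A$, bad profile on $B$). Because $\mathcal{F}$ is a product, any forward (resp.\ reverse) greedy on $\mathcal{F}$ decomposes into independent forward (resp.\ reverse) greedies on $\mathcal{F}_A$ and $\mathcal{F}_B$, so the combined ratio is the weighted average of the per-subsystem ratios. By tuning the cost gaps $H/L$ and the sizes of the two subsystems one gets $\rho/\alpha>1.02$ with concrete constants. The main obstacle is precisely step (2)–(3): establishing a \emph{universal} lower bound across all adaptive priority functions, which I would handle by a case analysis on the first agent/direction committed at the root of the implementation tree and then leveraging the commitment semantics of in- and out-queries highlighted in the characterization of Section~\ref{sec:2way}. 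A routine numerical check then verifies the $1.02$ separation.
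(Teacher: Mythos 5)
Your plan is a genuinely different construction from the paper's, and while the high-level intuition is reasonable, it contains a gap that the paper's own argument sidesteps entirely. The paper does not build a product of two ``dual'' subsystems at all: it exhibits a \emph{single} three-agent set system with two parallel feasible solutions, $S=\{x\}$ and $T=\{y,z\}$, over the common domain $\{10,22,36\}$, and shows by a short case analysis on the highest-ranked priority at the root that \emph{both} forward greedy and reverse greedy incur ratio at least $36/32=1.125$, whereas a hand-crafted two-way rule (return $T$ iff some $T$-agent has type $10$, no $T$-agent has type $36$, and $x$ has type $36$) is exactly $1.1$-approximate. The key insight you are missing is that the singleton-vs-committee structure is already bad for reverse greedy too, not just for forward greedy; you instead assert that subsystem $A$ is ``naturally'' handled by out-queries, which is precisely what the paper disproves.

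The second issue is your combination step. You claim that ``the combined ratio is the weighted average of the per-subsystem ratios,'' but a weighted average with the identity (the optimal ratio on the other subsystem) is strictly smaller than the per-subsystem lower bound unless the optimum on the other subsystem is made negligible. Concretely, if forward greedy achieves ratio $\rho_A$ on the $A$-profile and is optimal on $B$, the ratio on the combined profile is
\[
\frac{\rho_A\,\mathrm{OPT}_A + \mathrm{OPT}_B}{\mathrm{OPT}_A + \mathrm{OPT}_B} < \rho_A,
\]
and similarly for the reverse direction with the roles of $A$ and $B$ swapped. To preserve the target constant you would have to shrink $\mathrm{OPT}_B$ in the forward-adversarial profile and $\mathrm{OPT}_A$ in the reverse-adversarial profile simultaneously, while keeping the two-way greedy near-optimal on \emph{all} profiles including the cross terms; none of these requirements is addressed. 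You would also have to establish the decomposition claim for adaptive priority functions whose rankings may depend on revealed bids in the other subsystem. All of this can probably be made to work with enough asymmetry in the costs, but it buys nothing over the paper's construction, which avoids the dilution entirely by putting both failures on the same tiny instance and then just checking a handful of profiles.
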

\begin{proof}
	Consider two parallel solutions  with $1$ and $2$ agents, respectively: $S=\{x\}$ and $T=\{y,z\}$  and let their domain be 
	$D_x = D_y = D_z = \{\tmin,\tmed,\tmax\}= \{10, 22, 36\}$. 
	We shall consider the first query of the algorithm (thus $\b_{N \setminus {\cal A}}= \b_{N\setminus N}={\emptyset}$) to some agent $a$, and use  $g_a^{(\star)}(b_a)$ as a shorthand for $g_a^{(\star)}(b_a, {\emptyset})$ to represent this first query ($\star \in \{in, out\}$ and $a\in\{x,y,z\}$). We will denote as $(b_x, b_y, b_z)$ an instance of the problem in which the type of the unique agent in $S$ is $b_x$ and agents in $T$ have types $b_y$ and $b_z$. We distinguish the two cases and the corresponding approximation guarantee $\beta$:
	\begin{description}
		\item[Forward greedy $\left(\beta \geq \min\left\{\frac{\tmax}{2\tmin},1+\frac{\tmin}{\tmax}\right\}\right)$:] Consider $g_S = g_x^{(in)}(36)$ and $g_T= \max\{g_y^{(in)}(10), g_z^{(in)}(10)\}$. If $g_S > g_T$, it means that the algorithm has approximation at least $\frac{36}{20} > 1.1$ (consider the instance $(36, 10, 10)$). On the other hand, if  $g_S < g_T$ then the approximation guarantee of the algorithm is $\frac{46}{36} > 1.1$ (consider the instance $(36, 10, 36)$). 
		
		\item[Reverse greedy $\left(\beta \geq \min\left\{\frac{2\tmed}{\tmax},\frac{\tmax}{\tmin+\tmed}\right\}\right)$:] Let $g_S = g_x^{(out)}(36)$ and $g_T= \max\{g_y^{(out)}(22),g_z^{(out)}(22)\}$. If $g_S > g_T$, it means that the the algorithm is $\frac{44}{36} > 1.1$ approximate (consider the instance $(36, 22, 22)$). On the contrary, if $g_S < g_T$, then the approximation guarantee of the algorithm is at least $\frac{36}{32} > 1.1$ (consider the instance $(36, 10, 22)$, where $22$ is for the agent in $T$ with maximum rank). 
	\end{description}
	Intuitively, in either case (forward or reverse) greedy has to decide too early (which agent will be in) the solution.
	
	Consider now the following algorithm: return $T$ if at least one of its agents has type $10$, no of its agents have type $36$, and the agent in $S$ has type $36$, otherwise return $S$.
	It is not hard to see that this algorithm can be implemented as a two-way greedy algorithm. The only case in which this algorithm does not return an optimal outcome is on instance $(22, 10, 10)$, where the optimum would be $T$, and the algorithm returns $S$. However, the approximation ratio is $\frac{22}{20} = 1.1$.
\end{proof}

For valuations, we have the following bound.
\begin{theorem}[Social Welfare]\label{thm:1vs2:sw}
 There are set system problems in which two-way greedy algorithms are optimal whilst both forward and reverse greedy do not return a better than $\sqrt{2}$-approximation to the optimal social welfare.
\end{theorem}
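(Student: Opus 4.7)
The plan is to mimic the argument in the proof of Theorem~\ref{thm:1vs2:sc}, translating it to the welfare setting. I would consider the same ``two parallel solutions'' set system $\mathcal{F}=\{S,T\}$ with $S=\{x\}$ and $T=\{y,z\}$, but now with asymmetric two-valued valuation domains $D_x=\{\tmin,K\}$ and $D_y=D_z=\{\tmin,L\}$ where $K=L\sqrt{2}$ and $\tmin$ is taken sufficiently small so that the critical ratios $K/L$ and $2L/K$ both equal $\sqrt{2}$.

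For the two-way greedy algorithm I would start with an in-query on $x$: if $x=\tmin$ return $T$, otherwise ($x=K$) perform an in-query on $y$, then (only if $y=L$) an in-query on $z$, returning $S$ except when $y=z=L$. Direction changes only occur across different agents, and whenever a direction change affects the same agent that agent has become revealable, so the interleaving requirement of Definition~\ref{def:int} holds. A direct check of the eight profiles confirms the algorithm returns $\max\{b_x,b_y+b_z\}$ on every instance.

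For forward greedy I adapt the cost-case dichotomy by comparing $g_S:=g_x^{(in)}(\tmin)$ (the minimal in-priority of $x$ by monotonicity) with $g_T:=\max\{g_y^{(in)}(L),g_z^{(in)}(L)\}$ (the maximal in-priority of an agent in $T$). If $g_S\geq g_T$, monotonicity forces $x$ to retain the highest priority throughout the execution, so on the instance $(\tmin,L,L)$ the mechanism narrows $D_x$ down to $\{\tmin\}$ and then includes $x$; because $x\notin T$, this drops $T$ from $\mathcal{P}$ and commits to $S$, giving welfare $\tmin$ against $\text{OPT}=2L$ and ratio $2L/\tmin$. If instead $g_S<g_T$, then (up to relabelling) $y$ has in-priority at $L$ exceeding $g_x^{(in)}$ anywhere in $D_x$, so the algorithm queries $y$ first and on the instance $(K,L,\tmin)$ commits to $T=L+\tmin$ against $\text{OPT}=K$, with ratio $K/(L+\tmin)$. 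With $K=L\sqrt{2}$ and $\tmin$ small enough, both ratios are at least $\sqrt{2}$.

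For reverse greedy I expect the symmetric dichotomy --- compare $g'_S:=g_x^{(out)}(K)$ (minimal out-priority of $x$ by antimonotonicity) against $g'_T:=\max\{g_y^{(out)}(\tmin),g_z^{(out)}(\tmin)\}$ and use mirror-image witness instances --- to yield the same $\sqrt{2}$ bound. The main obstacle I expect is precisely this reverse greedy step: unlike the forward case, after the first failed out-query the designer retains freedom over the subsequent priorities, so one must argue that no antimonotone priority assignment dodges both witness instances simultaneously. If a pure mirror argument falls short, I would perturb the construction by introducing a third value $\tmed$ in the domain, analogously to the use of $\tmed$ in the cost case, so as to force a bad outcome on some refined witness instance whichever way the remaining priorities are chosen.
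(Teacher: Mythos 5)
Your two-way greedy construction and the case $g_S \geq g_T$ of the forward-greedy analysis go through, but the case $g_S < g_T$ has a genuine gap. You set $g_S := g_x^{(in)}(\tmin)$, the \emph{minimum} in-priority $x$ can ever attain, and compare it against $g_T := \max\{g_y^{(in)}(L), g_z^{(in)}(L)\}$, then claim that $g_S < g_T$ forces the algorithm to query $y$ first on $(K, L, \tmin)$. That does not follow: $g_x^{(in)}(\tmin) < g_T$ says nothing about the relation between $g_x^{(in)}(K)$ and $g_T$, and on the witness instance $(K,L,\tmin)$ agent $x$'s actual valuation is $K$, so if $g_x^{(in)}(K) > g_T$ the algorithm's first top query goes to $x$, she confirms $K$, and the algorithm returns $S$, which is optimal. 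Your dichotomy therefore leaves the whole range $g_x^{(in)}(\tmin) < g_T \leq g_x^{(in)}(K)$ uncovered. The reason the translation from Theorem~\ref{thm:1vs2:sc} breaks is a sign flip you did not fully account for: in the cost proof the bad instance $(36,10,36)$ puts $x$ at its \emph{worst} (lowest-priority) type, so comparing $g_x^{(in)}(\tmax)$ against $g_y^{(in)}(\tmin)$ matches the priorities that are actually realized; in your valuation witness $(K,L,\tmin)$ agent $x$ sits at her \emph{best} type, so the matching comparison is $g_x^{(in)}(K)$ vs.\ $g_y^{(in)}(L)$, not $g_x^{(in)}(\tmin)$ vs.\ $g_y^{(in)}(L)$.

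The paper's own proof avoids this by using a three-value domain $\{0, 1/\sqrt{2}, 1\}$, first arguing that $g_y^{(in)}(\tmax)$ and $g_z^{(in)}(\tmax)$ must be the top two priorities and then branching on $g_x^{(in)}(\tmax)$ versus $\max\{g_y^{(in)}(\tmed), g_z^{(in)}(\tmed)\}$; both of its witnesses $(\tmax,\tmed,\tmed)$ and $(\tmax,\tmed,\tmin)$ have $b_x = \tmax$, so the comparison aligns with the realized priorities. You could repair your two-value variant by replacing your case split with $g_x^{(in)}(K) \gtrless g_T$ and swapping the case-$\geq$ witness to $(K,L,L)$ (there forward greedy returns $S$ with welfare $K$ against $\mathrm{OPT}=2L$, giving ratio $2L/K = \sqrt{2}$), but the case split and witnesses as written do not prove the claim. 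The reverse-greedy half, which you explicitly flag as incomplete, then needs the analogous fix; the paper again handles it by first pinning $g_x^{(out)}(\tmin)$ at the top and then comparing $g_x^{(out)}(\tmed)$ against $\max\{g_y^{(out)}(\tmin), g_z^{(out)}(\tmin)\}$, which is why the extra middle value $\tmed$ is genuinely useful there.
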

\begin{proof}
Consider the case in which $\mathcal{F}$ is comprised of only two solutions $S=\{x\}$ and $T=\{y,z\}$ with $1$ and $2$ agents, respectively, and let the \emph{valuation} domain be $D_i = \{\tmin,\tmed,\tmax\}$ for each $i$, where $\tmin=0$, $\tmed=1/\sqrt{2}$ and $\tmax=1$. Observe that in this setup as soon as an agent becomes inactive the outcome is determined; therefore, the priority functions will never use their second parameter. Accordingly, as above, we use here $g^{(\star)}(b)$ as a shorthand for $g^{(\star)}(b, \emptyset)$. As above, $(b_x, b_y, b_z)$ denotes an instance in which the type of the unique agent in $S$ is $b_x$ and agents in $T$ have types $b_y$ and $b_z$. 

We begin with the analysis of a forward greedy algorithm. We next show that this algorithm will either return $S$ for the profile $(\tmax, \tmed, \tmed)$ or $T$ for the profile $(\tmax, \tmed, \tmin)$. In both cases, the approximation ratio is $\sqrt{2}$. We begin by observing that the priority of $y$ and $z$ for $\tmax$ must be the two highest ones (otherwise the approximation guarantee would be at least $1+1/\sqrt{2} > \sqrt{2}$). Now it can either be $g_x^{(in)}(\tmax)>\max\{g_y^{(in)}(\tmed), g_z^{(in)}(\tmed)\}$ -- which corresponds to the former outcome -- or $g_x^{(in)}(\tmax)<\max\{g_y^{(in)}(\tmed), g_z^{(in)}(\tmed)\}$ -- which implies the second outcome. 

Similarly, a reverse greedy either returns $S$ for the profile $(\tmed, \tmax, \tmin)$ or $T$ for $(\tmed, \tmin, \tmin)$. The approximation ratio is at least $\sqrt{2}$ in both cases. Note that $g_x^{(out)}(\tmin)$ must be the highest priority (for otherwise the approximation is worse than $\sqrt{2}$). Then we compare 
$g_x^{(out)}(\tmed)$ with $\max\{g_y^{(out)}(\tmin), g_z^{(out)}(\tmin)\}$. If the latter is larger than the former we get the first outcome; otherwise, we get the second outcome.

We now conclude the proof by observing that the following algorithm is optimal on this instance: return $T$ if $b_y = \tmax$ or $b_z = \tmax$ or $b_x = \tmin$; otherwise, if $b_y = \tmin$ or $b_z = \tmin$ return $S$ and, in all the remaining cases, return $T$. It is not hard to see that this can be actually implemented with a two-way greedy algorithm. Specifically, the the following priority functions would suffice:
\[
g_y^{(in)}(\tmax)>g_z^{(in)}(\tmax)>g_x^{(out)}(\tmin)>g_y^{(out)}(\tmin)>g_z^{(out)}(\tmin)>g_y^{(in)}(\tmed).\qedhere
\]
\end{proof}

Finally, we consider the relative power of the greedy paradigms for downward-closed set systems (and social welfare maximization).

\begin{theorem}[Downward-Close Solutions]\label{thm:1vs2:dc}
	There are downward-close set system problems in which two-way greedy algorithms can return an $\alpha$-approximation, whilst both forward and reverse greedy do not return a better than $\beta$-approximation to the optimal social welfare, with $\frac{\beta}{\alpha} \approx 1.3$.
\end{theorem}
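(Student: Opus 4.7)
I would adapt the construction of Theorem~\ref{thm:1vs2:sw} by taking its downward closure. Concretely, let $E=\{x,y,z\}$ and $\mathcal{F}=\{\emptyset,\{x\},\{y\},\{z\},\{y,z\}\}$; this is downward-closed with maximal feasible sets $S=\{x\}$ and $T=\{y,z\}$. I keep a three-valued valuation domain $D_i=\{\tmin,\tmed,\tmax\}$ with $\tmin=0$ and $\tmed,\tmax$ to be calibrated at the end of the argument so as to push $\beta/\alpha$ as close as possible to the stated $\approx 1.3$.

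For the forward-greedy lower bound I case-split on which of $g_x^{(in)}(\tmax,\emptyset)$, $g_y^{(in)}(\tmax,\emptyset)$, $g_z^{(in)}(\tmax,\emptyset)$ is largest (priorities in the valuation setting are evaluated at the top of each remaining domain). If $x$ comes first, the all-$\tmax$ profile forces the algorithm to output $\{x\}$ with value $\tmax$ against the optimum $\{y,z\}$ of value $2\tmax$, giving ratio $2$. Hence, up to symmetry, $y$ heads the list; I then trace the algorithm on profiles like $(\tmax,\tmed,0)$ and $(\tmax,\tmed,\tmed)$ on which the first top-query commits the output to a solution containing $y$, losing a factor of the form $\tmax/\tmed$ or $(\tmax+\tmed)/(2\tmed)$ against the optimum $\{x\}$. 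The two extra outputs $\{y\}$ and $\{z\}$ permitted by downward closure are never strictly better than $\{y,z\}$ on those profiles, so they do not rescue the algorithm. The reverse-greedy analysis is symmetric, splitting instead on the largest initial out-priority at $\tmin$; the new subtlety compared to Theorem~\ref{thm:1vs2:sw} is that $\emptyset\in\mathcal{F}$, so dropping any first-queried agent is always feasible and has to be refuted explicitly by exhibiting a profile on which that agent should have been selected.

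For the matching upper bound I construct a two-way greedy algorithm that begins with in-queries to $y$ and $z$ at $\tmax$ and, failing those, an out-query to $x$ at $\tmin$, before completing the selection with further in/out queries. Rendering this sequence all-monotone and interleaving in the sense of Definition~\ref{def:int} is routine, and the approximation analysis reduces to a small finite set of residual profiles on which the ratio equals $\alpha$. Finally I tune $\tmax/\tmed$ so that the common bound from the two lower bounds is $\beta$ while the two-way bound is $\alpha$, and verify $\beta/\alpha\approx 1.3$. The main obstacle I anticipate is the combinatorial case explosion on the reverse-greedy side: because dropping any agent is now feasible ``for free,'' several starting choices that were automatically infeasible in Theorem~\ref{thm:1vs2:sw} require explicit witnesses here. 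I would contain the blow-up by first showing that irrevocably dropping a $\tmed$- or $\tmax$-valued agent before seeing any other bid already forces ratio $\geq\beta$ on some profile, thereby reducing the reverse analysis to the same three head-of-ranking cases mirrored from the forward one.
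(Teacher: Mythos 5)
Your proposal has a fundamental flaw: with $\tmin=0$ and the downward closure, reverse greedy is actually \emph{optimal} on your instance, so the lower bound for reverse greedy collapses entirely. The reason Theorem~\ref{thm:1vs2:sw} works is that in $\mathcal{F}=\{S,T\}$ dropping any single $T$-agent irrevocably commits the mechanism to $S$; once $\mathcal{F}$ is replaced by its downward closure, dropping a $\tmin$-valued $T$-agent no longer commits to $S$, and because $\tmin=0$ the drop is free of welfare loss. Concretely, an antimonotone out-priority that puts $g_z^{(out)}(\tmin)$ and $g_y^{(out)}(\tmin)$ at the top, $g_x^{(out)}(\tmed)$ next, and $g_x^{(out)}(\tmax)$ above $g_y^{(out)}(\tmed),g_z^{(out)}(\tmed)$ (and uses the adaptive second argument to compare the two survivors after a $\tmin$-agent is shed) produces the welfare-optimal outcome on every profile of your instance --- including $(\tmed,\tmax,\tmin)$ and $(\tmed,\tmin,\tmin)$, the pair used to kill reverse greedy in the non-closed case. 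You correctly anticipate ``dropping is now feasible for free'' as a complication, but it is not a case-explosion issue; it is that the lower bound simply does not hold anymore, so $\beta=1$ on your instance.

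The paper's own construction for this theorem is structurally different and this difference is essential: it takes $\tmin=1>0$, so discarding $\tmin$-valued agents from $T$ has a real welfare cost, and it lets $|T|=k$ grow. The tension it exploits is quantitative: a two-way greedy can stop after discarding $k/2$ of the $\tmin$-agents and check $i_S$ for $\tmed$, whereas a one-directional reverse greedy must either discard all $\tmin$-agents (losing $\Theta(k\tmin)$) or commit to $S$ too early. Even if you fixed $\tmin>0$ in your three-agent instance, the paper's own formulas give $\beta/\alpha=\frac{(2+\sqrt5)k-1}{(1+\sqrt5)k}$, which is about $1.15$ at $k=2$ and only approaches $\approx 1.3$ as $k\to\infty$; a bounded-size instance cannot reach the stated gap. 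So both the choice $\tmin=0$ and the fixed instance size would need to change, at which point you would essentially be redoing the paper's construction.
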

\begin{proof}
	Consider the case in which $\mathcal{F}$ is comprised of solutions $S=\{i_S\}$, $T=\{i_1, \ldots, i_k\}$ with even $k\geq 4$, and all the subsets of $T$. Let the valuation domain be $D_i = \{\tmin,\tmed,\tmax\}$ for each $i$, where $\tmin=1$, $\tmed=\rho k$ and $\tmax=\frac{k}{2}\tmin + \frac{k}{2}\tmed = \frac{k}{2} (\rho k + 1)$, with $\rho = \frac{1+\sqrt{5}}{2}$.
	
	Consider then the algorithm defined in Algorithm \ref{alg:dc}. 
	\begin{algorithm}[tb]
		\SetKw{Break}{break}
		\caption{Two-greedy Algorithm for the instance of Theorem~\ref{thm:1vs2:dc}}\label{alg:dc}
		Set $R=\emptyset$ \tcp*[h]{R records the $i \in T$ that cannot be returned}\;
		\lIf{there is $i \in T$ whose valuation is $\tmax$}{\Return $T$} \nllabel{l1}
		\lIf{$i_S$ has valuation $\tmin$}{\Return $T$} \nllabel{l2}
		\For{$j = 1, \ldots, k$}{
			\If{$i_j$ has valuation $\tmin$}{
				Set $R = R \cup \{i_j\}$\;
				\lIf{$|R| = \frac{k}{2}$}{\Break}
			}\Else{
				\lIf{$i_S$ has valuation $\tmed$}{\Return $T\setminus R$} \nllabel{l3}
			}
		}
		\If{$|R|=\frac{k}{2}$}{
			\lIf{$i_S$ has valuation $\tmax$}{\Return $S$} \nllabel{l4}
			\If(\tcp*[h]{I.e., it has not been checked if $i_S$ has valuation $\tmed$}){$j = \frac{k}{2}$}{
				\lIf{there is $i \in T\setminus R$ whose valuation is $\tmed$}{\Return $T\setminus R$} \nllabel{l5}
				\lElse{\Return $S$} \nllabel{l6}
			}
		}\lElse{\Return $T \setminus R$} \nllabel{l7}
	\end{algorithm}
	It is not hard to check that this algorithm is two-way greedy (incidentally, 
	the interleaving property is trivially satisfied by these rankings since the valuation domain has size 3).
	
	As for the approximation ratio of this algorithm, observe that it returns the optimal solution at Line~\ref{l1}, Line~\ref{l3}, Line~\ref{l4}, Line~\ref{l6}. Instead, if the solution returned at Line~\ref{l3}, Line~\ref{l5} or Line~\ref{l7} has value $C$, then the optimal solution would have value $C + |R|\tmin$, with $C \geq \tmed + (k - |R|-1)\tmin$ and $|R| \leq \frac{k}{2}$. Hence, the approximation ratio of this algorithm is
	\begin{align*}
	\alpha & = \frac{C + |R|\tmin}{C} = 1 + \frac{|R|\tmin}{C} \leq 1 + \frac{|R|\tmin}{\tmed + (k - |R|-1)\tmin}\\
	& \leq 1 + \frac{\frac{k}{2}\tmin}{\tmed + \left(\frac{k}{2}-1\right)\tmin} = 1 + \frac{\frac{k}{2}}{\rho k + \frac{k}{2} - 1} = \frac{(\rho+1)k-1}{\left(\rho+\frac{1}{2}\right)k - 1}.
	\end{align*}
	
	Let us now consider reverse greedy algorithms. After that solution $S$ has been ruled out if $i_S$ has valuation $\tmin$, the algorithm either must rule out all $i \in T$ if their valuation is $\tmin$ or it rules out $S$ if $i_S$ has valuation $\tmed$. Indeed, if none of these cases occurs, then the same solution is returned on instance $\b$ such that $b_{i_S} = \tmax$, and $b_i = \tmin$ for every $i \in T$ and on instance $\b'$ such that $b'_{i_S} = \tmed$, $b'_{i^*} = \tmax$ and $b'_i = \tmin$ for every $i \neq {i^*} \in T$, where $i^*$ is the one $i \in T$ that is not ruled out if the valuation is $\tmin$. In any case, the returned solution will have approximation larger than $\frac{k}{2} > \rho$.
	
	If there is $i^*$ such that $i^*$ is not ruled out if her valuation is $\tmin$ before  $i_S$ is ruled out if her valuation is $\tmed$, 
	then it must be the case that on instance $\b'$ such that $b'_{i_S} = \tmed$ and $b'_{i} = \tmin$ for every $i \in T$, the returned solution will have a value that is at most $k\tmin$. Hence, the approximation ratio would be $\frac{\tmed}{k\tmin} = \rho$.
	
	If, instead, all $i \in S$ are ruled out if their valuation is $\tmin$ before  $i_S$ is ruled out if her valuation is $\tmed$, 
	then it must be the case that on instance $\b'$ such that $b'_{i_S} = b'_{i^*} = \tmed$ and $b'_{i} = \tmin$ for every $i \neq i^* \in T$, $i^*$ being the last $i \in T$ to be processed by the algorithm, 
	the returned solution will have a value that is at most $\tmed$, whereas the optimal solution has value $\tmed + (k-1)\tmin$. Hence, the approximation ratio would be $\beta = \frac{\tmed + (k-1)\tmin}{\tmed} = \frac{(\rho+1)k-1}{\rho k}$.
	Observe that $\beta \leq \frac{\rho+1}{\rho} = \frac{3+\sqrt{5}}{1+\sqrt{5}} = \rho$.
	
	Consider now forward greedy algorithms. After the solution $T$ is returned if there is $i \in T$ whose valuation is $\tmax$, the algorithm must either return $S$ if the valuation of $i_S = \tmax$ or return $T$ if $i$ whose valuation is $\tmed$, for some $i \in T$. Indeed, if none of these case occurs, then the same solution is returned on instance $\b'$ such that $b'_{i_S} = \tmax$ and $b'_i = \tmin$ for every $i \in T$ and on instance $\b''$ such that $b''_{i_S} = \tmin$, and $b''_i = \tmed$ for every $i \in T$. In both cases, the returned solution will have approximation ratio larger than $\rho$.
	
	If the algorithm return $S$ if $i_S$ has valuation $\tmax$ before that some $i \in T$ is evaluated, 
	then it must be the case that the algorithm returns $S$ on instance $\b'$ such that $b'_{i_S} = \tmax$ and $b'_i = \tmed$ for every $i \in T$, while $T$ is the optimal solution. Hence, the approximation ratio would be $\frac{k\tmed}{\tmax} = \frac{\rho k}{\frac{\rho k^2}{2} + \frac{k}{2}} = \rho \frac{2 k}{\rho k - 1} \geq \rho$.
	
	Finally, if the algorithm return $S$ as soon as there is $i \in T$ whose valuation is $\tmed$, 
	then the algorithm returns $T$ on instance $\b'$ such that $b'_{i_S} = \tmax$, $b'_{i^*} = \tmed$ and $b'_i = \tmin$ for every $i \neq i^* \in T$. Hence, in this case the approximation is larger than $\frac{k}{2} > \rho$.
	
	In conclusion, no forward and reverse algorithm can return an approximation that is better than $\beta$. Hence the ratio between the approximation ratio of the two-way greedy algorithm and the best forward/reverse greedy algorithm is $\frac{\beta}{\alpha} = \frac{\left(\rho + \frac{1}{2}\right) k - 1}{\rho k} = \frac{(2+\sqrt{5})k-1}{(1+\sqrt{5})k}$, that tends to $\frac{2+\sqrt{5}}{1+\sqrt{5}} \approx 1.3$ as $k$ increases.
\end{proof}

\subsection{Approximation Guarantee of Two-way Greedy Algorithms}\label{sec:2way-apx}
We next prove that the approximation ratio of two-way greedy algorithms is unbounded in set systems where we want to output the solution with minimum social cost \emph{for some} (minimal) structure of $\mathcal{F}$. Examples include the shortest path problem. For such problems, a two-way greedy algorithm must commit immediately to a solution after the first decision in either Line \ref{l:in} or \ref{l:out}. This leads to unboundedness.

\begin{theorem}[Social Cost]\label{thm:unbss}
 For any $\rho \geq 1$, there exists a set system such that no two-way greedy algorithm returns a $\rho$-approximation to the optimal social cost, even if there are only two feasible solutions and four agents.
\end{theorem}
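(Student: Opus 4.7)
My plan is to exhibit, for every $\rho \ge 1$, a concrete set system on four agents with only two feasible solutions on which no extremal OSP two-way greedy mechanism can achieve approximation ratio below $\rho$. I would take the canonical "two parallel two-edge $u$-$v$ paths" shortest-path instance: $N = \{x_1, x_2, y_1, y_2\}$ with $\mathcal{F} = \{S, T\}$, $S = \{x_1, x_2\}$, $T = \{y_1, y_2\}$ disjoint, and endow each agent with a binary cost domain $\{L, H\}$ with $L=1$ and $H$ growing in $\rho$. The central structural observation, matching the informal remark preceding the theorem, is that because $|\mathcal{F}|=2$ and $S \cap T = \emptyset$, any iteration of Algorithm~\ref{alg:2greedy} that fires Line~\ref{l:in} or Line~\ref{l:out} collapses $\mathcal{P}$ to a singleton at once, so the mechanism commits irrevocably to one of the two solutions at its very first accepted in/out query, with only a single agent's bid revealed.

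I would then take any extremal OSP two-way greedy $\M$ with tree $\T$ and analyze the root query. By the symmetry swapping $S \leftrightarrow T$ and permuting agents within each solution, I may assume WLOG that the root queries $x_1 \in S$. If the root is an in-query, the ``yes'' branch commits to $S$ from $b_{x_1}=L$ alone, so the adversarial profile $(L, H, L, L)$ makes $\M$ output $S$ at cost $L+H$ against the optimum $T$ at cost $2L$, giving ratio $(L+H)/(2L) \ge \rho$ once $H \ge (2\rho-1)L$.

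The harder case, where the main obstacle lies, is a root out-query: the naive adversary there only yields ratio approaching $2$, so I would pass into the ``no'' subtree (so that $b_{x_1}=L$ is now public) and reapply the case analysis at the root of each successive subtree. The hard part is to close the induction: because the domains are binary, each subsequent ``no'' fully reveals a bid, and along any root-to-leaf path at most a bounded number of nested ``no'' out-answers can occur before every residual out-priority in Algorithm~\ref{alg:2greedy} has been beaten by an in-priority, forcing the mechanism to commit through an in-query. I would show that whenever this terminal in-query targets a $y$-side agent, applying the in-query adversary of the previous paragraph (with the already-revealed $L$ bids substituted in) still yields ratio $(L+H)/(2L) \ge \rho$. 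To rule out the degenerate alternative in which the forced in-query commits ``harmlessly'' to the solution composed of already-revealed $L$-bidders, I would invoke Lemma~\ref{le:local}: the admissible-query structure there implies that such a commitment corresponds to a revealable configuration that already agrees with the optimum on that branch, hence does not rescue $\M$ from the adversarial profile produced elsewhere in $\T$.
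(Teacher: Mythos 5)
Your proposal contains a genuine gap, and it is exactly at the point you flag as "the hard part." The inductive claim that along any root-to-leaf path the "no" branches eventually exhaust the out-priorities and force a commitment through an in-query is false: nothing in Definition of two-way greedy prevents \emph{all} out-priorities $g_j^{(out)}(\cdot,\cdot)$ from dominating \emph{all} in-priorities $g_j^{(in)}(\cdot,\cdot)$, in which case the algorithm is a pure reverse greedy (a DAA) and never issues an in-query. Such a mechanism is OSP by Corollary~\ref{cor:da}, so your case analysis cannot exclude it. On your instance with $S=\{x_1,x_2\}$, $T=\{y_1,y_2\}$ and binary domain $\{L,H\}$, a pure reverse greedy commits exactly at the first ``yes'' to an out-query; tracing the adversary through all branches gives worst-case ratio $\frac{2H}{H+L}$, which stays strictly below $2$. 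So the $2{\times}2$ parallel-paths instance can never witness a ratio of $\rho \geq 2$, and your approach cannot possibly close the gap by a finer adversary — the limitation is in the instance, not the argument.

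The paper's proof differs precisely here: it does not fix $s=k=2$ but lets $S$ and $T$ have sizes $s>1$ and $k\geq s$ with $s$ chosen large enough, specifically $s>\rho$, and then picks $\tmax/\tmin$ large enough relative to $s,k,\rho$. Then a single-step adversary suffices on both sides: if the globally highest initial priority is an in-priority $g_i^{(in)}(\tmin)$ for $i\in S$, the adversary forces cost $\tmin+(s-1)\tmax$ against $k\tmin$; if it is an out-priority $g_i^{(out)}(\tmax)$ for $i\in T$, the adversary forces cost $s\tmax$ against $\tmax+(k-1)\tmin$, and the latter ratio tends to $s>\rho$ as $\tmax\to\infty$. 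No recursion into subtrees is needed. To repair your proposal you would need to re-instance with $|S|>\rho$ (so the number of agents grows with $\rho$); once you do that, your in-query adversary and a single-step out-query adversary mirror the paper's argument, and the recursive machinery (and the appeal to Lemma~\ref{le:local}) becomes unnecessary. Note also that this means the "four agents" clause in the theorem statement cannot be met by this style of argument for $\rho\geq 2$, as the analysis above shows; only the "two feasible solutions" part is genuinely achieved for arbitrary $\rho$.
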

\begin{proof}
 Consider a set system with only two feasible solutions, say $S$ and $T$, consisting of $s > 1$ and $k \geq s$ agents, respectively, and no agent belonging to both solutions.
 For each agent $i$, we suppose that the cost domain has minimum $\tmin$ and maximum $\tmax$, such that $\frac{\tmax}{\tmin} > \max\left\{\frac{\rho k-1}{s-1}, \rho \cdot \frac{k-1}{s-\rho}\right\} \geq \frac{k-1}{s-1}$. We focus on the maximum rank priority at the beginning of the algorithm (and, accordingly, we drop the second parameter for the priority functions). 
 
If the maximum is $g_i^{(in)}(\tmin)$ for $i \in S$ then it may be the case that the returned solution has cost $\tmin + (s-1)\tmax$, whereas the remaining solution would have cost $k \tmin < \tmin + (s-1)\tmax$. The approximation factor would then be at least $\frac{\tmin+(s-1)\tmax}{k \tmin} > \rho$.
 
 On the other hand, if the maximum is  $g_i^{(out)}(\tmax)$ for $i \in T$ then it may be the case that the returned solution has cost $s\tmax$, whereas the remaining solution would have cost $\tmax+(k-1)\tmin < s\tmax$. The approximation factor would then be at least $\frac{s\tmax}{\tmax+(k-1)\tmin} > \rho$.
 
 It is easy to check that any other choice for the highest priority would be even worse than the ones discussed above (since we would be committing to output the solution with more agents or exclude the solution with less agents). The theorem then follows.
\end{proof}

We now consider the case where the players have valuations and we are interested in maximizing the social welfare. We call this setup where no further assumption on the structure of the feasible solutions in $\mathcal{F}$ can be made, a \emph{restricted knapsack auction} problem. This is very much related to the knapsack auctions in \cite{DGR17}, where the authors further assume that $\mathcal{F}$ is downward closed (see below for a discussion of this setup). The theorem below proves that when $\mathcal{F}$ is comprised of only two different solutions, then a two-way greedy algorithm cannot return a better than $\sqrt{n-1}$-approximation.
\begin{theorem}[Social Welfare]\label{thm:sw:n:lb}
	There exists a set system for which any two-way greedy algorithm has approximation $\Omega(\sqrt{n})$ to the optimal social welfare, even if there are only two feasible solutions.
\end{theorem}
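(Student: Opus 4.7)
I take $n$ agents with $\mathcal{F}=\{S,T\}$ where $S=\{x\}$ and $T=\{y_1,\ldots,y_{n-1}\}$, and valuation domain $\{0,c,1\}$ for every agent with $c=1/\sqrt{n-1}$. The distinguished profile is $\b^\star=(1,c,\ldots,c)$, on which $T$ is optimum with welfare $\sqrt{n-1}$ and $S$ has welfare $1$, so any mechanism outputting $S$ on $\b^\star$ already suffers ratio $\sqrt{n-1}=\Omega(\sqrt n)$. By our characterization the mechanism is extremal, so every node of the implementation tree asks a top or a bottom query on some agent; since there are only two feasible solutions, every ``yes'' answer forces a commit (top on $x$ or bottom on $y_j$ commits to $S$; bottom on $x$ or top on $y_j$ commits to $T$).

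\paragraph*{Reduction on $\b^\star$.} I trace the path of $\b^\star$. If the mechanism outputs $S$ on $\b^\star$ we are done. Otherwise it outputs $T$, and the triggering commit on the path must be a top query on some $y_{j^\star}$ for the value $c$ (after the prerequisite safe top-on-$y_{j^\star}$-for-$1$ returned ``no''), since this is the only query whose ``yes'' fires on $\b^\star$ and commits to $T$. The only non-committing queries that can appear earlier on $\b^\star$'s path are (i) safe: bottom-on-$x$-for-$0$ and top-on-$y_j$-for-$1$ (answered ``no''); (ii) dangerous but not committing on $\b^\star$: bottom-on-$x$-for-$c$ (after $0$ ruled out), bottom-on-$y_j$-for-$0$ (both answered ``no'' since $b_x=1$, $b_{y_j}=c$).

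\paragraph*{Two bad witnesses.} I case-split on whether any bottom-on-$y_{k_0}$-for-$0$ with $k_0\ne j^\star$ appears on $\b^\star$'s path. \emph{Case A} (no such query). The witness $\b^{(j^\star)}=(1,0,\ldots,c\text{ at }j^\star,\ldots,0)$ makes every earlier query answer ``no'' and fires the commit's ``yes'', so the mechanism outputs $T$ on it with welfare $c$ while the optimum is $S=1$, ratio $\sqrt{n-1}$. \emph{Case B} (such a query exists). Let $k_0$ be the first one and $J_1$ the set of indices $j$ for which top-on-$y_j$-for-$1$ was asked before this bottom query. Define $\b^\triangle$ by $b_x=1$, $b_{y_{k_0}}=0$, $b_{y_j}=c$ for $j\in J_1\setminus\{k_0\}$, $b_{y_j}=1$ for $j\notin J_1\cup\{k_0\}$ (setting $b_{y_{j^\star}}=c$ if $j^\star\in J_1$, else $1$). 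Every earlier query is consistent with $\b^\triangle$ (safe queries and bottom-on-$x$-for-$c$ all say ``no'' on $\b^\triangle$; no earlier bottom-on-$y_j$-for-$0$ exists by choice of $k_0$), and the bottom-on-$y_{k_0}$-for-$0$ query fires ``yes'' on $\b^\triangle$, so the mechanism commits to $S$ (welfare $1$). A direct calculation bounds $T$'s welfare on $\b^\triangle$ below by $(n-3)c = (n-3)/\sqrt{n-1}=\Omega(\sqrt n)$, since among the $n-3$ ``other'' agents each contributes at least $c$, so the ratio is $\Omega(\sqrt n)$.

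\paragraph*{Main obstacle.} The technical crux is verifying the above enumeration of possible earlier queries on $\b^\star$'s path is exhaustive, and in particular that the witness consistency check in Case B really holds. The delicate point is what happens if the algorithm has performed both top-on-$y_{k_0}$-for-$1$ and bottom-on-$y_{k_0}$-for-$0$ on $\b^\star$'s path -- a direction switch on $y_{k_0}$ that by the weak-interleaving part of our characterization (Lemma~\ref{le:local}) demands $y_{k_0}$ to become revealable at the switch. I expect to handle this by observing that revealability of $y_{k_0}$ in the uncommitted two-solution state essentially forces the mechanism to decide $y_{k_0}$'s inclusion from $b_{y_{k_0}}$ alone; but then a third tailored witness (one that sets $b_{y_{k_0}}=c$, $b_{y_{j^\star}}=c$ and the rest of the $y_j$'s to $0$) triggers the mechanism into outputting $T$ with welfare $2c$ against optimum $S=1$, giving the same $\Omega(\sqrt n)$ ratio and closing the case.
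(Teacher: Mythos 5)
Your construction is exactly the paper's (singleton $S$ versus $T=N\setminus S$, domain $\{0,1/\sqrt{n-1},1\}$), but your argument is genuinely different: you trace the implementation-tree path of the single profile $\b^\star=(1,c,\ldots,c)$ and build divergent witnesses, whereas the paper directly derives a chain of inequalities on the (history-independent, because there are only two solutions) priority functions $g^{(in)}$ and $g^{(out)}$ and exhibits a bad instance for every possible ordering. Your route is more combinatorial, the paper's is more algebraic; both are legitimate.

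Two concrete remarks. First, the ``Main obstacle'' you flag is not actually an obstacle, and the fix you sketch does not work. Even when a direction switch on $y_{k_0}$ occurs (so $k_0\in J_1$), the witness $\b^\triangle$ with $b_{y_{k_0}}=0$ answers ``no'' at top-on-$y_{k_0}$-for-$1$ and ``yes'' at bottom-on-$y_{k_0}$-for-$0$; it therefore matches $\b^\star$ up to the divergence and commits to $S$ exactly as in your generic Case~B, with no revealability analysis needed. By contrast, the ``third witness'' you propose (with $b_{y_{k_0}}=c$) does \emph{not} diverge at the $k_0$-query --- it answers ``no'' there, continues down $\b^\star$'s path, and will be diverted to an $S$-commit (with ratio $1$, not $\Omega(\sqrt{n})$) as soon as it meets a later bottom-on-$y_j$-for-$0$ query for some $j\neq k_0,j^\star$. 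So drop the extra witness; $\b^\triangle$ already covers the case.

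Second, and this is the real gap: your enumeration assumes $\b^\star$'s path ends with an explicit ``yes'' at a top-on-$y_{j^\star}$-for-$c$ query, and you key $j^\star$ to that query. That is not forced. After top-for-$1$ (no) and bottom-for-$0$ (no) on some $y_j$, the domain is the singleton $\{c\}$ and the leaf can be reached after a ``no'' edge, the algorithm then deciding implicitly based on the now-revealed value. Likewise, after bottom-for-$0$ (no) and bottom-for-$c$ (no) on $x$, the mechanism may decide to exclude $x$ (output $T$) with no further query, in which case no $y_{j^\star}$ is ever singled out by a top-for-$c$ ``yes.'' In these cases your $j^\star$ is undefined and the Case~A/B split as written does not parse. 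The fix is to recenter the case split on the set $J_0$ of indices $j$ for which a dangerous bottom-on-$y_j$-for-$0$ query appears on $\b^\star$'s path, taking $j^\star$ to be the unique member of $J_0$ (if $|J_0|\le 1$; when the commit is a top-for-$c$ on some $y_j$, that $j$ must also equal $j^\star$ or you are already in Case~B), and then check that $\b^{(j^\star)}$ answers identically to $\b^\star$ at \emph{every} node including the possibly implicit final decision (which it does, since the path and hence the revealed history and priority comparison are the same). With that repair the argument closes; as written there is a missing branch.
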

\begin{proof}
Consider the case in which $\mathcal{F}$ is comprised of only two solutions $S=\{i_S\}$ and $|T|=k=n-1>1$ with $S \cap T = \emptyset$. Let the \emph{valuation} domain be $D_i = \{\tmin,\tmed,\tmax\}$ for each $i$, where $\tmin=0$, $\tmed=1/\sqrt{k}$ and $\tmax=1$. As in the proof of Theorem \ref{thm:1vs2:sw}, we can here restrict, without loss of generality, to priority functions that only use the first parameter (and, consequently, we drop the second parameter from our notation). We will prove that no two-way greedy algorithm can return a better than $\sqrt{k}$-approximation to the social welfare, thus proving the claim.

Assume not and let $\rho$ denote the approximation guarantee of the algorithm. We begin by showing some properties on the priority function. We have
\begin{equation}\label{eq:top:in}
g_{i_T}^{(in)}(\tmax) > \max \left\{\max_{j\in T} g_j^{(out)}(\tmin), g^{(in)}_{i_S}(\tmax), g^{(out)}_{i_S}(\tmed)\right\},
\end{equation}
where $i_T = \arg\min_{i \in T} g_{i}^{(in)}(\tmax)$ and, as from above, $i_S$ denotes the only agent in $S$. It is not hard to check that $\rho$ would be worse than $\sqrt{k}$ if the above were not satisfied. Specifically, by all-monotonicity, if for some $j \in T$ we had
\begin{align*}
g_{i_T}^{(in)}(\tmax) &< g_j^{(out)}(\tmin) \Longrightarrow \rho=\infty \; \text{(instance s.t. $b_j=\tmin$, $b_{i_S}=\tmin$ and $b_{i_T}=\tmed$ for all $T \ni j' \neq j$)},\\
g_{i_T}^{(in)}(\tmax) &< g_{i_S}^{(in)}(\tmax) \Longrightarrow \rho > \sqrt{k} \; \text{(instance s.t. $b_j=\tmed$ for all $T \ni j \neq i_T$, $b_{i_S}=\tmax$ and $b_{i_T}=\tmax$)},\\
g_{i_T}^{(in)}(\tmax) &< g^{(out)}_{i_S}(\tmed) \Longrightarrow \rho=\infty \; \text{(instance s.t. $b_j=\tmin$ for all $j \in T$, $b_{i_S}=\tmed$)}.
\end{align*}
Observe that by all-monotonicity, the inequality above leaves only the relative rank of $g_{i_S}^{(out)}(\tmin)$ undetermined. Now we compare 
\[
g_{j_T}^{(in)}(\tmed) \qquad \text{ with } \qquad g_{i_S}^{(out)}(\tmin),
\]
where $j_T = \arg\max_{j \in T} g_{j}^{(in)}(\tmed)$.

Assume first that $g_{j_T}^{(in)}(\tmed) > g_{i_S}^{(out)}(\tmin)$. Here the approximation guarantee is at least $\sqrt{k}$. Indeed, if $g_{i_S}^{(in)}(\tmax)>g_{j_T}^{(in)}(\tmed)$ consider the instance where $b_{i_S}=\tmax$ and $b_j=\tmed$ for all $j \in T$. For the opposite case, the bound is proved for the instance wherein $b_{i_S}=\tmax$, $b_j=\tmin$ for all $j\in T, j \neq j_T$, and $b_{j_T}=\tmed$.

Consider now the case $g_{j_T}^{(in)}(\tmed) < g_{i_S}^{(out)}(\tmin)$. We here have
\begin{equation}\label{eq:bottom:out}
g_{i_S}^{(out)}(\tmin) > \max \left\{\max_{j\in T} g_j^{(out)}(\tmin), g^{(in)}_{i_S}(\tmax)\right\}.
\end{equation}
If this were not true then the approximation guarantee $\rho$ would be worse than $\sqrt{k}$:
\begin{align*}
g_{i_S}^{(out)}(\tmin) &< g_j^{(out)}(\tmin) \Longrightarrow \rho=\infty \; \text{(instance where $b_{i_S}=\tmin$ and $b_{j}=\tmed$ for all $j \in S$)},\\
g_{i_S}^{(out)}(\tmin) &< g_{i_S}^{(in)}(\tmax) \Longrightarrow \rho \geq \sqrt{k} \; \text{(instance where $b_j=\tmed$ for all $j \in S$, $b_{i_S}=\tmax$)}.
\end{align*}
By putting together \eqref{eq:top:in} and \eqref{eq:bottom:out} we know that in-priorities for players in $T$ and out-priority for $i_S$ have highest rank. The proof now concludes by arguing that no matter what priority is next in rank, the algorithm will have an approximation ratio of at least $\sqrt{k}$:
\begin{align*}
g_{i}^{(in)}(\tmed), i \in T & \Longrightarrow \rho \geq \sqrt{k} \; \text{(instance s.t. $b_i=\tmed$, $b_{j}=\tmin$, for all $T \ni j \neq i$ and $b_{i_S}=\tmax$)},\\
g_{i}^{(out)}(\tmin), i \in T & \Longrightarrow \rho \geq k-1 \; \text{(instance s.t. $b_i=\tmin$, $b_j = \tmed$ for all $T \ni j \neq i$ and $b_{i_S}=\tmed$)},\\
g_{i_S}^{(in)}(\tmax) & \Longrightarrow \rho \geq \sqrt{k} \; \text{(instance s.t. $b_j=\tmed$ for all $j \in T$ and $b_{i_S}=\tmax$)},\\
g_{i_S}^{(out)}(\tmed) & \Longrightarrow \rho=\infty \; \text{(instance s.t. $b_j=\tmin$ for all $j \in T$, $b_{i_S}=\tmed$)}.\qedhere
\end{align*}
\end{proof}
Interestingly the proof above uses a so-called \emph{asymmetric instance} \cite{DGR17} where one solution is a singleton and the other is comprised of all the remaining bidders. We here prove that the analysis above is tight at least for three-value domains, by giving a matching upper bound for the asymmetric instances of restricted knapsack auctions. We design a forward greedy algorithm, thus also proving that there is no gap between forward and two-way greedy in this context.
\begin{theorem}\label{thm:sw:n:ub}
	There is a $\sqrt{n-1}$-approximate forward greedy algorithm for the asymmetric restricted knapsack auctions, when bidders have a three-value domain $\{\tmin, \tmed, \tmax\}$.
\end{theorem}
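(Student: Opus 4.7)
The plan is to exhibit a particularly simple monotone forward-greedy algorithm and show directly that it is $\sqrt{n-1}$-approximate. Let $k = n-1 = |T|$ and denote by $i_S$ the single agent of $S$. I would take the in-priorities, both independent of the history, to be
\[
g_j^{(in)}(b) = b \quad \text{for each } j \in T, \qquad g_{i_S}^{(in)}(b) = b/\sqrt{k}.
\]
Each is strictly increasing in the bid, hence monotone for valuations, so Corollary~\ref{cor:iaa} guarantees that the resulting forward greedy algorithm admits an OSP IAA implementation.

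The first structural observation I would use is that the two-solution structure collapses Algorithm~\ref{alg:fgreedy} to a single iteration: because $S$ and $T$ are disjoint and every agent lies in exactly one of them, the agent picked at the first iteration is always feasible, her inclusion immediately reduces $\mathcal{P}$ to a singleton, and the while loop terminates. Hence the output is determined purely by $\arg\max_i g_i^{(in)}(b_i)$: the algorithm outputs $S$ iff $b_{i_S}/\sqrt{k} > \max_{j \in T} b_j$ (breaking ties in favour of $T$, say), and outputs $T$ otherwise.

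To bound the ratio, I would let $W_T = \sum_{j \in T} b_j$ so that $\mathrm{OPT} = \max(b_{i_S}, W_T)$. When the algorithm outputs $S$, the choice rule gives $b_{i_S} > \sqrt{k}\,\max_{j \in T} b_j \geq W_T/\sqrt{k}$, using $W_T \leq k\,\max_{j \in T} b_j$; hence $\mathrm{OPT} \leq \sqrt{k}\,b_{i_S}$ and the ratio is at most $\sqrt{k} = \sqrt{n-1}$. When the algorithm outputs $T$, the condition $\max_{j \in T} b_j \geq b_{i_S}/\sqrt{k}$ together with non-negativity of valuations (giving $W_T \geq \max_{j \in T} b_j$) yields $b_{i_S} \leq \sqrt{k}\,W_T$, so $\mathrm{OPT} \leq \sqrt{k}\,W_T$ and again the ratio is at most $\sqrt{k}$.

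I expect the main obstacle to be conceptual rather than computational: one has to notice that, in this two-solution setting, the history-dependent adaptivity of forward greedy is vacuous (after the first inclusion the instance is decided), so the entire design collapses to choosing a single scaling factor on $i_S$'s priority. The value $1/\sqrt{k}$ is precisely the one that balances the two pathological regimes against one another, matching the $\Omega(\sqrt n)$ lower bound of Theorem~\ref{thm:sw:n:lb} tightly. Once the scaling is fixed, the two case-checks above are essentially mechanical; note also that the three-value restriction in the statement plays no role in the argument, which works for any finite non-negative valuation domain.
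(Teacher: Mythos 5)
Your proof is correct, and it takes a genuinely different and cleaner route than the paper's. The paper's Algorithm~\ref{alg:sw:n:ub} constructs the in-priorities via an explicit case analysis on the three values $\tmin,\tmed,\tmax$ (branching on whether $\tmed \le \tmax/k$, whether $\tmed > k\tmin$, etc.), and then derives the bound $\rho = \min\{\tmax/(\tmed+(k-1)\tmin),\, k\tmed/\tmax\} \le \sqrt{k}$ by balancing the two failure modes as a function of the domain. You instead observe that the two-solution structure collapses the entire greedy run to a single argmax comparison, and that a single static rescaling $g_{i_S}^{(in)}(b)=b/\sqrt{k}$ against identity priorities on $T$ already balances the two regimes; the bound then follows from $\max_{j\in T} b_j \le W_T \le k\max_{j\in T} b_j$ (the latter needing only non-negativity). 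Your approach buys two things: it dispenses with the three-value assumption, giving the same $\sqrt{n-1}$ guarantee for any finite non-negative valuation domain, and it makes explicit that the optimal design here is just choosing one scaling factor, which ties in more transparently to the matching lower bound of Theorem~\ref{thm:sw:n:lb}. The one point worth being explicit about (which you do flag) is the reliance on $\tmin \ge 0$ for the inequality $W_T \ge \max_{j\in T} b_j$ in the ``output $T$'' case; the paper's proof uses the same assumption (``Since $\tmin\geq 0$\ldots''), so this is not a gap but should be stated as a hypothesis rather than implicitly assumed.
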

\begin{proof}
Let $k=n-1$ and let $T$ denote the solution with more than one agent in the asymmetric instance of the restricted knapsack auction, $S$ denoting the other (and $i_S$ its only element). The algorithm uses the priority function rank defined in Algorithm \ref{alg:sw:n:ub}. 
\begin{algorithm}[tb]
	\caption{Priorities for asymmetric instances of restricted knapsack auctions}\label{alg:sw:n:ub}
	
	Set $g_i^{(in)}(\tmax) > V$, for $i \in T$ and a suitably large $V$ 
	
	\If{$\tmed \leq \frac{\tmax}{k} \vee \frac{\tmax}{\tmed +(k-1) \tmin} > \frac{k \tmed}{\tmax}$}{ 
			Set $V' < g_i^{(in)}(\tmed) < g_{i_S}^{(in)}(\tmax) < V$, for all $i \in T$ and a suitably large $V'$\label{l:maxfirst}
	} \Else {
			Set $V > g_i^{(in)}(\tmed) > g_{i_S}^{(in)}(\tmax) > V'$, for all $i \in T$ and a suitably large $V'$\label{l:medfirst}
	}
	\If{$\tmed > k \tmin$}{
		Set $V'' < g_{i_S}^{(in)}(\tmed) < V'$ for a suitably large $V''$
    }
	Set $g_{i}^{(in)}(\tmin) < V''$ for an $i \in T$ (for a suitable $V'' < V'$ if undefined)
\end{algorithm}
In words, the algorithm gives maximum priority to the agents in $T$ for the highest possible valuation. Then depending on the relationship between the values in the domain, it either gives higher priority to the elements of $T$ for $\tmed$ or to $i_S$ for $\tmax$. Finally, if $\tmed > k \tmin$, the priority for $S$ and $\tmed$ is defined, before concluding with a ``dummy'' priority to return $T$. We next show that the approximation guarantee of the algorithm, denoted as $\rho$ below, is at most $\sqrt{k}$. 

First let us consider the case in which $\tmed \leq \tmax/k$. Here irrespectively of the value $\tmin$, the forward greedy algorithm defined above returns the optimal solution. 

When $\tmed > \tmax/k$ then it is not hard to see that $$\rho = \min \left\{\frac{\tmax}{\tmed +(k-1) \tmin}, \frac{k \tmed}{\tmax}\right\}.$$ The former (latter, respectively) occurs when we return $T$ ($S$, respectively) by giving a higher (lower, respectively) value to $g_i^{(in)}(\tmed)$, for $i \in T$, as opposed to $g_{i_S}^{(in)}(\tmax)$, cf. Line \ref{l:medfirst} (Line \ref{l:maxfirst}, respectively) of Algorithm \ref{alg:sw:n:ub}. Since $\tmin \geq 0$, then $\rho$ is maximized when 
\[
\frac{\tmax}{\tmed +(k-1) \tmin}=\frac{k \tmed}{\tmax} \Longrightarrow \tmed = \frac{\tmax}{\sqrt{k}},
\]
which yields $\rho=\sqrt{k}$ as desired.
\end{proof}

We now turn our attention to downward-close set systems for social welfare maximization. As from above, this is a generalization of the setting studied in \cite{DGR17}, called \emph{knapsack auctions }and defined as follows. There are $n$ bidders and $m$ copies of one item; each bidder has a private valuation $v_i$ to receive at least $s_i$ copies of the item, $s_i$ being public knowledge. A solution is feasible if the sum of items allocated to bidders is at most $m$. The objective is social welfare maximization. Recall that the authors give a $O(\ln m)$-approximate DAA/reverse greedy algorithm and prove a lower bound of $\ln^\tau m$, for a positive constant $\tau$, \emph{limited to} DAA/reverse greedy. We next show that the upper bound is basically tight for the entire class of OSP mechanisms. 
\begin{theorem}[Social Welfare Downward-Closed Set Systems]\label{thm:knap}
	There is a downward-close set system for which every two-way greedy algorithm has approximation $\Omega(\sqrt{\ln n})$ to the optimal social welfare.
\end{theorem}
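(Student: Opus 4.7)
The plan is to build upon the asymmetric-instance approach of Theorem~\ref{thm:sw:n:lb} and embed it into the downward-closed knapsack setting, so as to leverage the four-profile characterization and the all-monotone interleaving structure of two-way greedy to force every priority ordering into a bad early commitment. Concretely, I would take one ``big'' bidder $i_S$ with $s_{i_S}=m$ together with many unit-size ``small'' bidders forming a set $T$, and use a three-value valuation domain $\{\tmin, \tmed, \tmax\}$ with $\tmin=0$, $\tmax=1$, and $\tmed$ calibrated against the number of small bidders; the knapsack capacity is $m=|T|$. The maximal feasible sets are $\{i_S\}$ and $T$, and, crucially, once a single ``in'' decision is made on any bidder from $T$ the bidder $i_S$ is forever excluded from the returned solution, so the relevant commitment structure essentially mirrors the two-solution setting of Theorem~\ref{thm:sw:n:lb}.

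Next I would follow the case-analysis template of Theorem~\ref{thm:sw:n:lb}: examine the top-ranked priority $g_i^{(\star)}(\cdot,\emptyset)$ of the algorithm and, for each possible top choice (namely $g_{i_S}^{(in)}(\tmax)$, $g_j^{(in)}(\tmax)$ or $g_j^{(in)}(\tmed)$ for $j\in T$, $g_{i_S}^{(out)}(\tmed)$ or $g_{i_S}^{(out)}(\tmin)$, and the analogous out-queries for $T$), exhibit a type profile compatible with the commitment induced by that first priority on which the returned outcome is at least $\sqrt{\ln n}$ away from the optimum. The quantitative choice $\tmed=\Theta(1/\sqrt{\ln n})$ is dictated by the fact that DGR17 already give an $O(\ln n)$-approximate reverse-greedy algorithm for knapsack, so the best matching lower bound obtainable through a commitment argument balances the loss from choosing the singleton $\{i_S\}$ (value at most $\tmax=1$) against the loss from choosing many $\tmed$ bidders in $T$ (value $|T|\cdot\tmed$), and the crossover occurs at $\tmed\approx 1/\sqrt{\ln n}$.

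The main obstacle, I expect, is in handling the additional flexibility given by downward-closedness: unlike in the pure two-solution set system of Theorem~\ref{thm:sw:n:lb}, a two-way greedy in knapsack can alternately query several small bidders from $T$ without fixing the outcome, thereby gathering information before committing. To cope with this I would maintain an invariant along the algorithm's execution that at every history the set of type profiles consistent with the answers received so far still contains a ``bad'' instance forcing the $\sqrt{\ln n}$ loss. The all-monotonicity and interleaving properties (Definition~\ref{def:int}) restrict how the priority functions can evolve with the history: in particular, any change of direction for a bidder must be accompanied by her becoming revealable, which bounds the number of adaptive queries the algorithm can usefully perform on a single bidder, and the four-profile characterisation of Theorem~\ref{thm:anatomy} rules out interleaving strategies that would otherwise escape the commitment trap.

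Putting everything together, a quantitative calculation analogous to the inequalities at the end of the proof of Theorem~\ref{thm:sw:n:lb} (replacing the factor $1/\sqrt{k}$ by $1/\sqrt{\ln n}$ and scaling $|T|$ polynomially in $n$ so that the effective logarithmic capacity matches $\ln n$) then yields an approximation gap of $\Omega(\sqrt{\ln n})$ on the constructed instance, which, since any OSP mechanism is a two-way greedy algorithm by our characterisation, establishes the claimed bound. The calibration between the size $|T|$ and the value $\tmed$ is what ultimately controls the $\sqrt{\ln n}$ constant, so I would finish by fixing these parameters explicitly and verifying the inequalities in all cases.
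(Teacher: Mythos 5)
Your proposal is built on a three-value domain $\{0,\tmed,1\}$ with $\tmed=\Theta(1/\sqrt{\ln n})$, but that cannot support an $\Omega(\sqrt{\ln n})$ lower bound once the set system is downward closed. In the two-solution instance of Theorem~\ref{thm:sw:n:lb} a single commitment pins down the entire outcome, so the adversary can exploit the algorithm's ignorance of $T$'s aggregate value; here, however, downward closedness lets a two-way greedy gather that information for free before committing. Concretely, set priorities so that every $j\in T$ is first out-queried for $0$ and then in-queried for $1$ (and $i_S$ is out-queried for $0$), all at the top of the priority order. None of these queries forces a bad commitment, and once they are done the committed bids $\b_{N\setminus\mathcal{A}}$ pin down $m_2=|T|-|T\cap(N\setminus\mathcal{A})|$, the number of $\tmed$-valued agents remaining in $T$. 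The remaining priorities $g_j^{(in)}(\tmed,\cdot)$ and $g_{i_S}^{(in)}(\cdot,\cdot)$ are allowed to depend arbitrarily on $\b_{N\setminus\mathcal{A}}$, hence on $m_2$, and can therefore be chosen to accept the $\tmed$-valued agents whenever $m_2\tmed\geq 1$ and $i_S$ otherwise; the required direction changes occur only once the relevant agents are revealable, so this is a legitimate all-monotone interleaving two-way greedy. It is essentially optimal on every three-value instance, so your case analysis has nothing to bite on, and the informal \emph{invariant} you invoke cannot repair this: the safe information-gathering phase is precisely what a three-value domain fails to preclude.

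The paper's proof avoids this by using a domain of size $n+1$, namely $\{0,1\}\cup\{t_x=1/(x\sqrt{2\ln k}):x=1,\ldots,k\}$ with $k=n-1$. The harmonic structure is essential: it makes $T$'s optimal value $\sum_x t_x = H_k/\sqrt{2\ln k}\approx\sqrt{\ln k}$ while every individual value is $O(1)$, and the richness of the domain is what prevents the algorithm from sorting all of $T$ with safe queries before it must commit. The actual argument then shows, via properties \eqref{eq:M1}--\eqref{eq:red_sum} and the interleaving restriction of Definition~\ref{def:int}, that any two-way greedy must drop all but $O(\ln k)$ agents of $T$ before it is allowed to change direction, and those $O(\ln k)$ surviving agents contribute only $O(1)$ welfare. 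To make your approach work you would need to reproduce a construction of this kind — a super-constant number of values with a harmonic (or comparable) profile — rather than recalibrating a single middle value $\tmed$ inside a three-value domain; the intuition that one can simply carry over the case analysis of Theorem~\ref{thm:sw:n:lb} by adjusting $\tmed$ and the size of $T$ is exactly the step that fails.
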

\begin{proof}
	Consider the case in which $\mathcal{F}$ is comprised of the following solutions: $S=\{i_S\}$, $T$ -- with $|T|=k=n-1>1$ and $S \cap T = \emptyset$ -- and all the subsets of $T$. Let the valuation domain be $D_i = \left\{t_x \colon x = 1, \ldots, k\right\} \cup \{0, 1\}$ for each agent $i$, where $t_x=\frac{1}{x \sqrt{2\ln k}}$. From now on, let also set $\rho=\frac{\sqrt{\ln k}}{\sqrt{2}+\varepsilon}$, for $\varepsilon > 0$; $\rho$ will denote the approximation guarantee of the two-way greedy algorithm.
	
	Let us first discuss some properties of these valuations. First observe that for every $x \geq 2\lceil\ln k\rceil$
	\begin{equation}\label{eq:M1}
	x t_2 > \frac{2\ln k}{2\sqrt{2\ln k}} = \frac{\sqrt{\ln k}}{\sqrt{2}} > \rho.
	\end{equation}
	Similarly, for every $x \leq k$, it holds that for every $y < x$
	\begin{equation}\label{eq:M2}
	t_1 + (x-y)t_{k-y} = \frac{1}{\sqrt{2\ln k}} \left(1 + \frac{x-y}{k-y}\right) \leq \frac{2}{\sqrt{2\ln k}} = \frac{\sqrt{2}}{\sqrt{\ln k}} < \frac{1}{\rho}.
	\end{equation}
	Moreover, we have that for every $x = 2, \ldots, k$
	\begin{equation}\label{eq:no_interl}
	\frac{1}{(x-1)\sqrt{2\ln k}} + (x-1)\frac{1}{x\sqrt{2\ln k}} = \frac{1}{\sqrt{2\ln k}} \frac{x^2 - x + 1}{x^2-x}\leq \frac{1}{\sqrt{\ln k}} \frac{3}{2\sqrt{2}} < \frac{\sqrt{2}}{\sqrt{\ln k}} < \frac{1}{\rho}.
	\end{equation}
	On the other hand, for $x \geq \ln k + 1$ we have
	\begin{equation}\label{eq:no_interl2}
	\frac{1}{x\sqrt{2\ln k}} + (x-1)\frac{1}{\sqrt{2\ln k}} \geq (x-1)\frac{1}{\sqrt{2\ln k}} \geq \frac{\ln k}{\sqrt{2}\sqrt{\ln k}} = \frac{\sqrt{\ln k}}{\sqrt{2}}> \rho.
	\end{equation}
	Finally, observe that
	\begin{equation}\label{eq:sum}
	\sum_{x=1}^k \frac{1}{x \sqrt{2\ln k}} = \frac{H_k}{\sqrt{2\ln k}} \geq \frac{\ln k}{\sqrt{2}\sqrt{\ln k}} > \rho \quad (H_k \text{ denoting the $k$-th armonic number}),
	\end{equation}
	and that
	\begin{equation}\label{eq:red_sum}
	\sum_{x=1}^{2\left\lceil\ln k\right\rceil} \frac{1}{x \sqrt{2\ln k}} = \frac{H_{2\left\lceil\ln k\right\rceil}}{\sqrt{\sqrt{2}\ln k}} \leq \frac{1+\delta}{\sqrt{2}}\frac{\ln \left(2\left\lceil\ln k\right\rceil\right)}{\sqrt{\ln k}} < 1,
	\end{equation}
	where $\delta$ is a small constant that goes to 0 as $k$ increases, and the last inequality holds for $k$ sufficiently large.
	
	Let us rename the agents as $i_k, \ldots, i_1$ where $i_j$ is the agent with the maximum out-priority on the history compatible with $i_{j+1}, \ldots, i_k$ having valuation $t_{j+1}, \ldots, t_k$, respectively. 
	We now focus on the instance $\b$ such that $b_{i_S} = 1$ and $b_{i_k} = t_k$.
	Observe that $S$ cannot be returned on this instance, otherwise by \eqref{eq:sum}, the mechanism would have an approximation ratio worse than $\rho$. Hence, let us now focus only on the rankings of $i \in T$.
	
	We first prove that for every $O \subseteq T$ of size at least $k - 2\lceil\ln k\rceil$ it holds that $g^{(out)}_i(0, \b_{N \setminus {\cal A}}) > g^{(in)}_i(1,\b_{N \setminus {\cal A}})$ for every $i \in O$ and every ${\cal A}$ such that $N \setminus O \subseteq {\cal A}$. Suppose indeed that this is not the case, and there is a set $U$ of size at least $2\lceil\ln k\rceil$ and a suitable ${\cal A}$, such that $g^{(in)}_i(1,\b_{N \setminus {\cal A}}) > g^{(out)}_i(0,\b_{N \setminus {\cal A}})$ for every $i \in U$.
	We must then have that $g^{(out)}_i(0,\b_{N \setminus {\cal A}}) > g^{(in)}_i(t_1,\b_{N \setminus {\cal A}})$ for every $i \in U$, because otherwise the solution $T \cap {\cal A}$ will be returned by the algorithm even on the instance $\b'$ such that $b'_{i_S} = 1$, $\b'_{N \setminus {\cal A}} = \b_{N \setminus {\cal A}}$, $b'_i = \min_{j \in {\cal A}} b_j$ for $i \in {\cal A} \setminus U$, $b'_{i^*} = t_1$, and $b'_i = 0$ for every $i \in U \setminus \{i^*\}$, where $i^*$ is the one $i \in U$ for which $g^{(out)}_i(0,\b_{N \setminus {\cal A}}) < g^{(out)}_i(t_1,\b_{N \setminus {\cal A}})$. Hence, the solution of the returned solution is $t_1 + |{\cal A} \setminus U| \min_{j \in {\cal A}} b_j$, that, by \eqref{eq:M2} is worse than a $\rho$-approximation of the optimal solution (i.e., $S$). Thus, for every $i \in U$ we have
	$
	g^{(in)}_{i}(1,\b_{N \setminus {\cal A}}) > g^{(out)}_{i}(0,\b_{N \setminus {\cal A}}) >  g^{(in)}_{i}\left(t_1,\b_{N \setminus {\cal A}}\right)
	$
	and then a change of direction occurs for all these agents. By the interleaving property, either there is $i$ that is selected (and thus $S$ is not) in both $\b'$ and $\b''$ or no $i \in U$ is selected in $\b''$, where $\b'$ is as defined above, and $b''_{i_S} = 1$, $\b''_{N \setminus {\cal A}} = \b_{N \setminus {\cal A}}$, $b''_j = t_2$ for every remaining $j$ (it includes every $i \in U$). By \eqref{eq:M1} and \eqref{eq:M2}, it turns out that in both cases the approximation factor is worse than $\rho$.
	
	We next show that no change of direction in the priorities of agents $i \in O$ occurs until at least $k-\left\lceil\ln k\right\rceil$ of them have been dropped.
	Consider otherwise that a change of direction occurs when only $x$ agents have been dropped, for $x < k-\left\lceil\ln k\right\rceil$. Take then instances $\b'$ and $\b''$, for which there is a subset $U \subseteq T$ of size $k-x$ such that $b_i' = b''_i = 0$ for every $i \in U$, $b'_{i_S} = b''_{i_S} = 1$, $b'_{i^*} = t_x$, $b'_{j} = t_1$, while $b''_{i^*} = t_{x-1}$, $b''_{j} = t_x$, for all $j \neq i^* \in T \setminus U$. Now by \eqref{eq:no_interl}, $i$ must not be part of the solution for $\b''$ whilst by \eqref{eq:no_interl2} she must be in the solution returned for $\b'$. Therefore, there are two different types $t_{x-1}>t_x$ whose monotone priority cannot be defined when the change of direction occurs. This means that the algorithm is not interleaving, a contradiction.
	
	Putting together the two properties stated above, the algorithm returns on the instance $\b$ a solution comprising of at most $2\left\lceil\ln k\right\rceil$ agents from $T$, chosen among the ones with largest valuation. However, from \eqref{eq:red_sum}, this solution would be worse than returning $S$. Hence, from \eqref{eq:sum} it turns out that the optimal solution would be $T$, and its welfare would be $\rho$ times larger than that of returned solution.
\end{proof}

\section{Conclusions}\label{sec:conclusions}
OSP has attracted lots of interest in the computer science and economics communities, as witnessed by the flurry of work on the topic (see overview in Appendix \ref{apx:rel}). Our work can facilitate the study of this notion of incentive-compatibility for imperfectly rational agents. 

Just as the characterization of DAAs in terms of reverse greedy  \cite{MS20} has given the first extrinsic reason to study the power and limitations of these algorithms \cite{DGR17,GMR17}, we believe that our characterization of OSP in terms of two-way greedy will lead to a better understanding of this algorithmic paradigm. In this work, we only began to investigate their power and much more is left to be done. For example, different optimization problems and objective functions (such as, min-max) could be considered. Moreover, whilst we know that in general OSP mechanisms do not compose sequentially, see, e.g., \cite{badegonczarowski}, we could study under what conditions two-way greedy algorithms compose, just like done for the composability of monotone algorithms for strategyproofness \cite{MN02}.

The technical approach that we have used for our characterization of OSP goes beyond binary allocation problems. Studying the relationship between non-negative short and negative longer cycles will shed light on more general domains/problems, such as, scheduling with single-parameter agents \cite{esa19}. The richer solution space should yield a more intricate structure for negative-weight cycles and eventually lead to a characterization of  implementation trees (and, consequently, algorithms). Importantly, there is evidence that our approach is also useful in the context of multi-parameter agents, as recently shown in \cite{bartmaria} for unknown single-minded combinatorial auctions (where bidders can lie about valuation \textit{and} set of items they are willing the buy). The extent to which this is true more in general is a valuable question to investigate.

The computational complexity of OSP mechanisms is another research direction worthy of mention. The complexity of these mechanisms comes for their three components: algorithm (intuitively, the social choice function run in the leaves of the implementation tree); protocol (i.e., the queries dictated by the implementation tree); and, payments. We know that the algorithms run in time linear to the number of agents, provided that the priority functions can be computed in constant time. This is indeed the case for all the algorithms listed in Table \ref{tbl:upperbounds}. The execution protocol measures a sort of query complexity, which is captured, by means of our characterization, by the sum of the agents' domain sizes (the worst-case being the instance in which each agent has type equal to the last query of the protocol). (Defining the entire protocol might require longer; in principle, we might need the time to explicitly write down all priority functions used by the two-way greedy algorithm.) We believe that an interesting object of study here could be the trade-off between query complexity and approximation guarantee of these mechanisms. Our payment functions are implicitly defined through the cycle-monotonicity framework (as certain shortest paths in the \vgraph s, see \cite{ferraioli2018approximation}); this means that we need time proportional to the size of the \vgraph s for their computation. Another important problem left open by our work concerns in fact the definition of explicit payments functions for the OSP implementation of two-way greedy algorithms. We give a taster of this direction in Appendix \ref{sec:cas} in the context of combinatorial auctions. Together with the impact upon the mechanism running time, this kind of results would also make the meaning of the imperfect rationality behind OSP -- and, in turns, its notion of ``simplicity'' -- more explicit. 

\appendix

\section{Postponed proofs}

\subsection{Proof of Observation~\ref{obs:binary}}
\label{apx:binary}
\begin{proof}
	Let $v$  a node of $\T$ where the domain of $i=i(v)$ is partitioned into $k > 2$ parts $P_1, \ldots, P_k$, and let  $\T_1, \ldots, \T_k$ be the corresponding subtrees rooted at the $k$ children of $v$. 
	We modify $\T$ locally to make it binary in the following way. We substitute $v$ with $k-1$ new nodes, $v^{(1)}, \ldots, v^{(k-1)}$, with $i=i(v^{(j)})$ for  each $v^{(j)}$. At $v^{(1)}$ we ask agent $i$ to separate $P_1$ from $P_2 \cup \cdots \cup P_k$. The left child of $v^{(1)}$ is thus $\T_1$, and the right child is $v^{(2)}$ where we iterate the same approach on $P_2 \cup \cdots \cup P_k$. More generally, at each node $v^{(j)}$, we ask agent $i$ to separate $P_j$ from $P_{j+1} \cup \cdots \cup P_k$,  the subtree rooted at the left child is $\T_j$,  while the right child is $v^{(j+1)}$. It is not hard to see, by inspection, that these modifications do not alter the OSP constraints of $i$ at $v$ in the original mechanism $\M$; similarly, we note that the OSP constraints defined by nodes different from $v$ are not modified. 
	By reiterating the argument on all the $v \in \T$ with more than two children, we can construct a binary tree $\T'$ that guarantees OSP and uses the same $(f,p)$. This concludes the proof.
\end{proof}

\subsection{Local transformations}
\label{apx:local_tr}
Suppose $\T$ contains a node $u$ whose query to $i=i(u)$ is not extremal, meaning that neither  two subdomain $L^{(u)}$ and $R^{(u)}$ is a singleton.  Let $\T_{L^{(u)}}$ and $\T_{R^{(u)}}$ denote the subtrees rooted in the child of $u$ reached through the types in $L^{(u)}$ and $R^{(u)}$, respectively.

We define below a suitable \emph{local transformation} that  replace $u$ with a sequences of nodes whose queries are extremal and that satisfy these two conditions: 
\begin{enumerate}
	\item \label{local:CMON-others} If the original mechanism satisfies OSP CMON for all agents $j$ other than $i$, then so does the modified mechanism; 
	\item \label{local:2CMON-i} If the original mechanism satisfies OSP 2CMON for agent $i$, then so does the modified mechanism.
\end{enumerate}

This transformation proceeds as follows:
Each node $u$ in $\T$ is replaced by a sequence of nodes $u^{(1)}, \ldots, u^{(s)}$, with  $i(u)=i(u^{(1)})=\cdots = i(u^{(s)})$ and each of them having two children;  node $u^{(h)}$ with $h<s$ has $u^{(h+1)}$ as a child, and the other child is either 
\begin{enumerate}
	\item ($L$-sub.) A copy of  $\T_{L^{(u)}}$ whose root we denote by $\ell^{(h)}$. In this case we say that $u^{(h)}$ is a $L$-sub. 
	\item ($R$-sub.) A  copy of  $\T_{R^{(u)}}$ whose root we denote by $r^{(h)}$. In this case we say that $u^{(h)}$ is a $R$-sub. 
\end{enumerate}
 Finally, the last node $u^{(s)}$ has as children a copy of $\T_{L^{(u)}}$ and a copy of $\T_{R^{(u)}}$, thus it is both a $L$-sub and a $R$-sub (with $\ell^{(s)}$ and $r^{(s)}$ denoting the roots of these copies as above).

\newcommand{\rsub}[1]{D_i^{({r^{(#1)}})}}
\newcommand{\lsub}[1]{D_i^{({\ell^{(#1)}})}}
\newcommand{\rootsub}[1]{D_i^{(root(#1))}}

Our aim is to choose nodes and which child to assign to them so that the following property is satisfied.
\begin{definition}[Well-Aligned Transformation]
	A local transformation is \emph{well aligned} if 
	\begin{align*}
	\left(\bigcup_{u^{(h)} \textrm{is a $R$-sub\ }} \rsub{h}\right) = R^{(u)} && \text{ and } && \left(\bigcup_{u^{(h)} \textrm{ is a $L$-sub\ }} \lsub{h}\right) = L^{(u)},
	\end{align*}
\end{definition}
 It is not hard to see that, well aligned local transformations do not destroy  OSP for the agents other than $i$ (Property~\ref{local:CMON-others}).
 We next describe how our (well aligned) local transformation are implemented, and we observe that OSP 2CMON of agent $i$ is also preserved (Property~\ref{local:2CMON-i}).

 \subsubsection{Local transformation for Homogeneous Case}
 We can now define $\T'$ through a local transformation of $u$ where we sequentially ask top queries from $Q=R \cup L^{(\geq r_{\min})}$. (Clearly, here we use $\T_L$ or $\T_R$ as children of the new nodes in $\T'$ depending on whether the current maximum belongs to $L^{(\geq r_{\min})}$ or $R$.)  Observe that, by construction, OSP 2CMON is satisfied for $\T'$, since we never add a negative-weight two-cycle.

\subsubsection{Local transformation for Revealable Case} We consider the types in $L$ in increasing order, that is, $L=\{l_1<l_2<\cdots < l_\ell \}$ for $\ell =|L|$. Then, 
we will define a suitable local transformation which corresponds to a sequence of bottom  queries about all types in $L$. Specifically,
we  apply the following local transformation of node $u$ defined by the sequence of nodes $u^{(1)}, \ldots, u^{(\ell)}$ for $\ell = |L|$. At $u^{(q)}$ we query $i$ to distinguish between $l_q$ and the remaining higher possible values in $\{l_{q+1},\ldots,l_\ell\} \cup R$. 
The child of $u^{(q)}$ associated to $l_q$ is $\T_L$, and the other child  of $u^{(q)}$ is next node $u^{(q+1)}$ of the transformation. Finally, the two children of last node $u^{(\ell)}$ are $\T_L$  for action corresponding to $l_\ell$ (the maximum in $L$), and $\T_R$ for the action corresponding to the types in $R$.  We let $\T'$ denote the tree obtained from $\T$ by applying this local transformation. Observe that, by construction, OSP 2CMON is satisfied for $\T'$, since we never add a negative-weight two-cycle. 

\subsection{Proof of Observation \ref{obs:wellordered}}
	\begin{proof}
	The proof discusses the case in which types are costs (but, as above, does not require assumptions on their signs). Given an OSP extremal mechanism $\M$ in which agent $i$ is revealable at node $u$, we denote with $L$ all types $t \in D_i^{(u)}$ that are $\always{1}$, and with $R$ all the types that are $\always{0}$. 
	
	We build the mechanism $\M'$ by modifying $\T$ to $\T'$ as follows. Assume first that there is type $\hat t$, which is \neither. Note that since $\M$ is OSP 2CMON, then we have that $\max L < \hat{t} < \min R$. We create $d=|D_i^{(u)}|$ copies of $u \in \T$, denoted $u_1,\dots, u_d$, each rooting the subtree of $\T$ rooted at $u$. We also create $d-1$ vertices $v_1, \ldots, v_{d-1}$ in $\T'$ where $i(v_j)=i$ for all $j$. At the $v_j$'s, we make a sequence of top queries about types in $R$, then a sequence of bottom queries for the types in $L$. The children of each $v_j$, $j < d-1$, are the $j$-th copy of $u$, $u_j$, and $v_{j+1}$. The children of $v_{d-1}$ are $u_{d-1}$ and $u_d$, the last two copies of $u$. Note that the last query for $i$ at $v_{d-1}$ will separate the maximum in $L$ from $\hat t$. This query will not be needed if there is no \neither\ type $\hat t$ in $D_i^{(u)}$ and therefore the number of $u$'s and $v$'s in $\T'$ will be reduced by one. 
	
	We now argue that $\M'$ is OSP CMON. By construction, the \vgraph s ${\cal O}_j^{\T'}$ of each agent $j \neq i$ have less edges than ${\cal O}_j^{\T}$ and it does not have negative-weight cycles since $\M$ is OSP. Let us then focus on agent $i$ and its \vgraph\ $\verp$ which has more edges than $\ver$. By construction, these new edges satisfy OSP 2CMON; assume by contradiction that $\verp$ has a negative weight cycle $C$ and let $\b^{(1)}, \b^{(2)}, \b^{(3)}$ and $\b^{(4)}$ the profiles in $C$ defined by Theorem \ref{thm:anatomy}. Recall that there must be no edge between $\b^{(2)}$ and $\b^{(3)}$ in $\verp$, where $b_i^{(2)}<b^{(3)}_i$ and $f_i(\b^{(2)})=0$ and $f_i(\b^{(3)})=1$. It is not hard to see that there are no two values in $D_i^{(u)}$ that satisfy this condition --- a contradiction.
\end{proof}

\section{Payments of IAAs for Single-Minded CAs}\label{sec:cas}
In \cite{MS20} it is observed how \emph{a particular} immediate-acceptance auction based on the forward greedy algorithm, inspired\footnote{The algorithm in \cite{LOS} uses a slightly more articulate priority function to return a better approximation of the optimal social welfare; we here follow \cite{MS20} and focus on the simpler greedy-by-valuation algorithm. It is shown in \cite{bartmaria} how this algorithm can be implemented in an OSP way for unknown single-minded bidders.} by \cite{LOS}, where bidders are ranked by their valuation (i.e., using the notation of Section \ref{sec:2way}, $g_i^{(in)}(b)=b$, for all $i \in N, b \in D_i$) is not group-strategyproof and hence not OSP. The immediate-acceptance auction in \cite{MS20} uses the threshold payment scheme: the winners are charged the minimal bids that would have been accepted \emph{for the particular bid profile of the other bidders}. We now show how Corollary \ref{cor:iaa} applies to this setting for the very simple instance used by \cite{MS20} --- this illustrates how the payment scheme needs to be built upon the implementation tree (rather than blindly use the SP payments for the direct-revelation implementation).

Consider the following simple instance of single-minded CAs. We have three bidders: bidder 1 and bidder 2 want a different object each, whilst bidder 3 wants the bundle. Let the valuation domain of the bidders be $\{\tmin,\tmed,\tmax\}$, with $\tmin=0$ and $2\tmed <\tmax$. Note that the threshold for bidder 1 and 2 for the bid profile $\b_{\max}=(\tmax,\tmax,\tmax)$ is $0$, as either bidder would win even by declaring $\tmin$. (Here we can safely assume that the algorithm prefers the two bidders; technically speaking, the threshold need not be \always{1} -- cf. Definition \ref{def:typeclas}.) Therefore, bidder 1 and 2 would find it beneficial to jointly misreport $\tmed$ to $\tmax$ when bidder $3$ bids $\tmax$ \cite{MS20} as they would win the items for free (whilst they would lose the items if truthful).

How does Corollary \ref{cor:iaa} circumvent this problem? The implementation tree used by the IAA first queries bidder 1 for $\tmax$ (and returns $\{1,2\}$ if yes) then bidder $2$ for $\tmax$ (and again returns $\{1,2\}$ if yes) and finally bidder $3$ for $\tmax$ returning $\{3\}$ if yes; the rest of the tree is immaterial for our purposes. (To break ties and force this query ordering, we could simply modify the in-priority functions $g_i^{(in)}$'s by adding an increasing function of $i^{-1}$.) It is not too hard to see that this implementation tree does not require the payment for bidder 1 and 2 to be the same for the bid profile $\b_{\max}$. In fact, the only OSP-graph with edges from/to $\b_{\max}$ is bidder 1's and so bidder $2$'s payment for that bid profile is not constrained by OSP; in other words, bidder $2$ is not concerned with the outcome of the auction for $\b_{\max}$ when taking her decision about separating $\tmax$ from the rest of her domain. 

\section{Further Related Work}\label{apx:rel}
We will here focus only on the body of work on OSP. Since its inception \cite{liosp}, OSP has attracted the interest of both computer scientists and economists. The authors of \cite{ashlagigonczarowski} study OSP for stable matchings and provide an impossibility result in general as well as a suitable mechanism under some assumptions. The paper \cite{badegonczarowski} further studies this concept for a number of settings, including single-peaked domains. Further work on OSP for single-peaked preferences is \cite{AMN19,AMN20}. In \cite{pycia2}, the authors study OSP mechanisms in domains where monetary transfers are not allowed and provide a useful characterization in this setting. 

For the setting where money is permitted, OSP is investigated for machine scheduling and set system problems in \cite{ferraioli2018approximation}, published in \cite{esa19,wine19}, where mechanisms are designed for ``small'' domains of up to three values, the cycle-monotonicity technique is introduced and some lower bounds are provided. The authors of \cite{ferraioli2017obvious} study OSP under a restriction on the agents' behavior during the execution of the mechanism, called \emph{monitoring}. This is shown to help the design of OSP mechanisms with a good approximation ratio in various mechanism design domains. The paper \cite{kyropoulou2019obviously} builds forth on this by studying machine scheduling in the absence of monetary payments and showing the power of OSP in this setting. Another paper which focuses on machine scheduling is \cite{ferraioliventrebounded}, where a generalization of OSP is studied that allows for some ability of the agents to reason contingently (as opposed to none); the authors show that a large amount of ``lookahead''-ability is required for the agents in any mechanism that achieves a good approximation ratio. Another study that considers OSP under a restriction on the agents' behavior is \cite{verification}, where the authors assume that non-truthful behaviour can be detected and penalised with a certain probability. Under this assumption, the authors prove that every social choice function can be implemented by an OSP mechanism with either very large fines for lying or a large number of ``verified'' agents. A recent paper studies OSP for single-minded combinatorial auctions \cite{bartmaria}; the authors use the structural approach of negative long cycles we introduce herein for the multidimensional setting of unknown single-minded bidders. In \cite{mackenzie} a revelation principle is presented that states that every social choice function implementable through an OSP mechanism can be implemented using a certain structured OSP protocol where agents take turns making announcements about their valuations. This informs our definition of extensive-form mechanisms and OSP. 

\newpage
\bibliographystyle{abbrv}
\bibliography{ospb,osps}

\begin{thebibliography}{10}

\bibitem{AMN19}
R.~P. Arribillaga, J.~Mass{\'o}, and A.~Neme.
\newblock All sequential allotment rules are obviously strategy-proof.
\newblock 2019.

\bibitem{AMN20}
R.~P. Arribillaga, J.~Mass{\'o}, and A.~Neme.
\newblock On obvious strategy-proofness and single-peakedness.
\newblock {\em Journal of Economic Theory}, 2020.

\bibitem{ashlagigonczarowski}
I.~Ashlagi and Y.~A. Gonczarowski.
\newblock Stable matching mechanisms are not obviously strategy-proof.
\newblock {\em Journal of Economic Theory}, 177:405--425, 2018.

\bibitem{ausubel2004}
L.~M. Ausubel.
\newblock An efficient ascending-bid auction for multiple objects.
\newblock {\em American Economic Review}, 94(5):1452--1475, 2004.

\bibitem{Avi83}
D.~Avis.
\newblock A survey of heuristics for the weighted matching problem.
\newblock {\em Networks}, 13(4):475–493, 1983.

\bibitem{badegonczarowski}
S.~Bade and Y.~A. Gonczarowski.
\newblock Gibbard-satterthwaite success stories and obvious strategyproofness.
\newblock In {\em Proceedings of the 2017 ACM Conference on Economics and
  Computation}, EC ’17, page 565, New York, NY, USA, 2017. Association for
  Computing Machinery.

\bibitem{Borodin}
A.~Borodin, M.~Nielsen, and C.~Rackoff.
\newblock ({I}ncremental) priority algorithms.
\newblock {\em Algorithmica}, 37(4):295--326, 2003.

\bibitem{bartmaria}
B.~de~Keijzer, M.~Kyropoulou, and C.~Ventre.
\newblock Obviously strategyproof single-minded combinatorial auctions.
\newblock In {\em ICALP}, pages 71:1--71:17.

\bibitem{DGR17}
P.~D{\"{u}}tting, V.~Gkatzelis, and T.~Roughgarden.
\newblock The performance of deferred-acceptance auctions.
\newblock {\em Math. Oper. Res.}, 42(4):897--914, 2017.

\bibitem{ferraioli2018approximation}
D.~Ferraioli, A.~Meier, P.~Penna, and C.~Ventre.
\newblock On the approximation guarantee of obviously strategyproof mechanisms.
\newblock {\em arXiv preprint arXiv:1805.04190}, 2018.

\bibitem{wine19}
D.~Ferraioli, A.~Meier, P.~Penna, and C.~Ventre.
\newblock Automated optimal {OSP} mechanisms for set systems - the case of
  small domains.
\newblock In {\em Web and Internet Economics - 15th International Conference,
  {WINE} 2019, New York, NY, USA, December 10-12, 2019, Proceedings}, pages
  171--185, 2019.

\bibitem{esa19}
D.~Ferraioli, A.~Meier, P.~Penna, and C.~Ventre.
\newblock Obviously strategyproof mechanisms for machine scheduling.
\newblock In {\em 27th Annual European Symposium on Algorithms, {ESA} 2019,
  September 9-11, 2019, Munich/Garching, Germany}, pages 46:1--46:15, 2019.

\bibitem{ferraioli2017obvious}
D.~Ferraioli and C.~Ventre.
\newblock Obvious strategyproofness needs monitoring for good approximations.
\newblock In {\em Thirty-First AAAI Conference on Artificial Intelligence},
  2017.

\bibitem{verification}
D.~Ferraioli and C.~Ventre.
\newblock Probabilistic verification for obviously strategyproof mechanisms.
\newblock In {\em Proceedings of the Twenty-Seventh International Joint
  Conference on Artificial Intelligence, {IJCAI-18}}, pages 240--246, 2018.

\bibitem{ferraioliventrebounded}
D.~Ferraioli and C.~Ventre.
\newblock Obvious strategyproofness, bounded rationality and approximation.
\newblock In {\em Algorithmic Game Theory}, pages 77--91. Springer, 2019.

\bibitem{GMR17}
V.~Gkatzelis, E.~Markakis, and T.~Roughgarden.
\newblock Deferred-acceptance auctions for multiple levels of service.
\newblock In {\em Proceedings of the 18th Annual ACM Conference on Economics
  and Computation (EC). 21–38.}

\bibitem{Hausmann1980}
D.~Hausmann, B.~Korte, and T.~A. Jenkyns.
\newblock Worst case analysis of greedy type algorithms for independence
  systems.
\newblock {\em Combinatorial Optimization}, pages 120--131, 1980.

\bibitem{kagel87}
J.~H. Kagel, R.~M. Harstad, and D.~Levin.
\newblock Information impact and allocation rules in auctions with affiliated
  private values: A laboratory study.
\newblock {\em Econometrica}, 55(6):1275--1304, 1987.

\bibitem{K56}
J.~B. Kruskal.
\newblock On the shortest spanning subtree of a graph and the traveling
  salesman problem.
\newblock {\em Proceedings of the American Mathematical Society}, 7(1):48–50,
  1956.

\bibitem{kyropoulou2019obviously}
M.~Kyropoulou and C.~Ventre.
\newblock Obviously strategyproof mechanisms without money for scheduling.
\newblock In {\em Proceedings of the 18th International Conference on
  Autonomous Agents and MultiAgent Systems}, pages 1574--1581, 2019.

\bibitem{Cla83}
K.~L.Clarkson.
\newblock A modification of the greedy algorithm for vertex cover.
\newblock {\em Information Processing Letters}, 16(1):23--25, 1983.

\bibitem{LOS}
D.~Lehmann, L.~I. O'Callaghan, and Y.~Shoham.
\newblock Truth revelation in approximately efficient combinatorial auctions.
\newblock {\em J. ACM}, 49(5):577–602, sep 2002.

\bibitem{liosp}
S.~Li.
\newblock Obviously strategy-proof mechanisms.
\newblock {\em American Economic Review}, 107(11):3257--87, November 2017.

\bibitem{mackenzie}
A.~Mackenzie.
\newblock {A revelation principle for obviously strategy-proof implementation}.
\newblock Research Memorandum 014, Maastricht University, Graduate School of
  Business and Economics (GSBE), May 2018.

\bibitem{MS20}
P.~Milgrom and I.~Segal.
\newblock Clock auctions and radio spectrum reallocation.
\newblock {\em Journal of Political Economy}, 2020.

\bibitem{MN02}
A.~Mu'alem and N.~Nisan.
\newblock Truthful approximation mechanisms for restricted combinatorial
  auctions.
\newblock {\em Games Econ. Behav.}, 64(2):612--631, 2008.

\bibitem{book}
N.~Nisan, T.~Roughgarden, E.~Tardos, and V.~Vazirani, editors.
\newblock {\em Algorithmic Game Theory}.
\newblock 2017.

\bibitem{pycia2}
M.~Pycia and P.~Troyan.
\newblock Obvious dominance and random priority.
\newblock In {\em Proceedings of the 2019 ACM Conference on Economics and
  Computation}, EC ’19, page~1, New York, NY, USA, 2019. Association for
  Computing Machinery.

\bibitem{SY05}
M.~E. Saks and L.~Yu.
\newblock Weak monotonicity suffices for truthfulness on convex domains.
\newblock In {\em Proceedings 6th {ACM} Conference on Electronic Commerce
  (EC-2005), Vancouver, BC, Canada, June 5-8, 2005}, pages 286--293, 2005.

\end{thebibliography}

\end{document}